\documentclass[11pt,a4paper]{article}

\usepackage{graphicx,hyperref,amsmath,natbib,bm,url}
\usepackage[utf8]{inputenc} 

\usepackage{color}  
\definecolor{darkblue}{rgb}{0.15,0,0.37}
\definecolor{darkred}{rgb}{0.35,0,0.08}
\definecolor{mygrey}{rgb}{0.85,0.85,0.85} 
\RequirePackage{natbib}

\usepackage{tfrupee}
\usepackage{microtype,todonotes}
\usepackage[australian]{babel} 
\usepackage[a4paper,text={14.5cm, 25.2cm},centering]{geometry} 
\setlength{\parskip}{1.2ex} 
\setlength{\parindent}{0em} 
\clubpenalty = 10000 
\widowpenalty = 10000 

\hypersetup{pdfpagemode=UseNone} 

\usepackage{mathpazo} 
\usepackage{caption}
\usepackage{subcaption}
\usepackage{longtable, booktabs, tabularx} 
\usepackage{caption,fixltx2e}  
\usepackage[flushleft]{threeparttable}  

 
\usepackage{multicol}
\usepackage{etoolbox}
\usepackage{relsize}

\usepackage[onehalfspacing]{setspace} 

\usepackage[marginal]{footmisc} 
\setlength{\footnotemargin}{0.2cm} 

\usepackage{pdflscape} 

\usepackage[capposition=top]{floatrow} 

\usepackage{color, colortbl}
\usepackage{multirow}

\newcommand\blfootnote[1]{%
  \begingroup
  \renewcommand\thefootnote{}\footnote{#1}%
  \addtocounter{footnote}{-1}%
  \endgroup
}

\usepackage{fancyhdr}
\usepackage{lastpage}
 \pagestyle{fancy}
\fancyhf{} 
 \cfoot{ \footnotesize{ \thepage } }

\rhead{\small{\textit{}}}
\lhead{\small{\textit{}}}

\usepackage{psfrag}

\usepackage{sepfootnotes}

\usepackage{hyperref}
\hypersetup{
	colorlinks=true,
	hypertexnames=false,
	hyperfootnotes=false,
	colorlinks,		
	linkcolor={red},
	citecolor={blue},
	urlcolor={black},
	pdfstartview={FitV},
	unicode,
	breaklinks=true
}

\newcommand\independent{\protect\mathpalette{\protect\independenT}{\perp}}
\def\independenT#1#2{\mathrel{\rlap{$#1#2$}\mkern2mu{#1#2}}}

\makeatletter
\newcommand*{\indep}{%
	\mathbin{%
		\mathpalette{\@indep}{}%
	}%
}
\newcommand*{\nindep}{%
	\mathbin{
		\mathpalette{\@indep}{\not}
	}%
}
\newcommand*{\@indep}[2]{%
	\sbox0{$#1\perp\m@th$}
	\sbox2{$#1=$}
	\sbox4{$#1\vcenter{}$}
	\rlap{\copy0}
	\dimen@=\dimexpr\ht2-\ht4-.2pt\relax
	\kern\dimen@
	{#2}%
	\kern\dimen@
	\copy0 
} 

\numberwithin{equation}{section}

\newtheorem{lem}{Lemma}
\newtheorem{ass}{AS}
\newtheorem{asscf}{ACF}
\newtheorem{assrc}{RC}
\newtheorem{remark}{Remark}

\newenvironment{customlem}[1]
{\innercustomthm}
{\endinnercustomthm}

\newtheorem{prop}{Proposition}
\newtheorem{aprop}{Proposition}
\newtheorem{profl}{Proof of Lemma}

\newtheorem{proof}{Proof}
\newtheorem{prof}{Proof of Proposition}

\newtheorem{theo}{Theorem} 
\newtheorem{proft}{Proof of Theorem}
\newtheorem{atheo}{Theorem} 

\newcommand{\pr}{\mathrm{Pr}}
\newcommand{\E}{\mathrm{E}}
\newcommand{\N}{\mathrm{N}}

\newcommand{\rank}{\mathrm{rank}}
\newcommand{\var}{\mathrm{Var}}
\newcommand{\cov}{\mathrm{Cov}}

\DeclareMathAlphabet{\mathbbold}{U}{bbold}{m}{n}

\include{commands}

\begin{document}



\begin{titlepage}
	\title{A Control Function Approach to Estimate Panel Data Binary Response Model\thanks{I would like to thank the anonymous referee and the associate editor whose comments and suggestions have helped to substantially improve the paper. Thanks are due to seminar participants at the The Bank of Estonia, the $ 10^{th} $ Nordic Econometric Meeting (Stockholm), the Institute of Mathematics and Statistics (University of Tartu), and the Inaugural Baltic Economic Conference (Vilnius) for the same. I would especially like to thank Soham Sahoo for helping me with the data. All remaining errors are mine.}}
	\author{Amaresh K Tiwari\thanks{University of Tartu, \newline School of Economics and Business Administration, \newline
			Narva maantee 18, 51009 Tartu, Estonia \newline Phone: +3727376374, \newline Email: amaresh.kr.tiwari@gmail.com \& amaresh.tiwari@.ut.ee } }
	\date{}
	\maketitle
	\begin{abstract}
		\noindent 	We propose a new control function (CF) method to estimate a binary response model in a triangular system with multiple unobserved heterogeneities The CFs are the expected values of the heterogeneity terms in the reduced form equations conditional on the histories of the endogenous and  the exogenous variables.  The method requires weaker restrictions compared to CF methods with similar imposed structures. If the support of endogenous regressors is large, average partial effects are point-identified even when instruments are discrete. Bounds are provided when the support assumption is violated. An application and Monte Carlo experiments compare several alternative methods with ours.\\
		\vspace{0in}\\
		\noindent\textbf{Keywords:} Triangular System, Unobserved Heterogeneities, Point \& Partial Identification, Average Partial Effects, Child Labor\\
		\vspace{0in}\\
		\noindent\textbf{JEL Classifications:} C13, C18, C33, J13\\
		
		\bigskip
	\end{abstract}
	\setcounter{page}{0}
	\thispagestyle{empty}
\end{titlepage}

	\def\onepc{$^{\ast\ast\ast}$} \def\fivepc{$^{\ast\ast}$}
\def\tenpc{$^{\ast}$}
\def\legend{\multicolumn{3}{l}{\footnotesize{Significance levels
			:\hspace{1em} $\ast$ : 10\% \hspace{1em} $\ast\ast$ : 5\%
			\hspace{1em} $\ast\ast\ast$ : 1\% \normalsize}}}

\defcitealias{papke:2008}{PW}
\defcitealias{wooldridge:2003}{Problem 12.11}
\defcitealias{baland:2000}{BR}
\defcitealias{basu:2010}{BDD}
\defcitealias{bhalotra:2003}{BHy}
\defcitealias{hoderlein:2012}{HW}
\defcitealias{blundell:2004}{BP}
\defcitealias{Cockburn:2007}{CD}
\defcitealias{altonji:2005}{AM}
\defcitealias{bester1:2009}{BH}
\defcitealias{semykina:2018}{SW}
\defcitealias{imbens:2009}{IN}
\defcitealias{val:2011}{FV}
\defcitealias{james:1961}{James-Stein}

	\section{Introduction}

\citet{chamberlain:2010} and \citet{arellano:2011} point out that when panel data outcomes are discrete, serious identification issues arise when covariates are correlated with unobserved heterogeneity. \citeauthor{chamberlain:2010} shows that for binary choice model
with fixed $ T $, quantities of interest such as Average Partial Effect (APE) may not be point identified, or may not possess a $\surd N$ consistent estimator. Notwithstanding this underidentification result, various methods have been proposed to estimate the structural measures of interest. 

\citet{arellano:2011} provide an overview, and categorize, of some of the methods developed to estimate the quantities of interest. These include the fixed effect (FE) approach that treat heterogeneity or individual effects as parameters to be estimated, where several approaches have been proposed to correct for bias due the incidental parameter problem. \citet{wooldridge:2019}, points out that the FE approach, though promising, suffers from a number of shortcomings. First, the number of time periods needed for the bias adjustments to work well is often greater than is available in many applications. Secondly, the recent bias adjustments methods require the assumptions of stationarity and weak dependence; in some cases, the strong assumption of serial independence (conditional on the heterogeneity) is maintained. However, in empirical work dealing with linear models, it has been found that idiosyncratic errors exhibit serial dependence. Also, ``the requirement of stationarity is strong and has substantive restrictions as it rules out staples in empirical work such as including separate year effects, which can be estimated very precisely given a large cross section."

There is another class of models that acknowledges the fact that many nonlinear panel data models are not point identified at fixed $T$ and consequently discuss set identification (bound analysis) for certain quantiles of interest such as the marginal effects. These papers show that the bounds become tighter as the number of time periods, $T$, increases. However, the methods in most of these papers are still limited to discrete covariates. Moreover, these papers and papers utilizing FE approach assume that conditional on unobserved heterogeneity all covariates are exogenous or predetermined; this, as argued in \citet{hoderlein:2012} (henceforth HW), may not always hold true.

In this paper, we relax the assumption of conditional exogeneity to allow for endogenous covariates that are continuous, and develop a control function method to identify and estimate structural measures such as the Average Structural Function (ASF) and the APE while accounting for endogeneity and heterogeneity in a triangular system.

Some of the papers that have developed control function method to study binary or fractional response outcomes are \citet{rivers:1988}, \citet{blundell:2004} (BP), \citet{papke:2008} (PW), \cite{rothe:2009} and \citet{semykina:2018} (SW).  A partial list of papers that have studied nonparametric control function estimation of nonseparable, including binary response, models are \citet{altonji:2005} (AM), \citet{Florens:2008}, \citet{imbens:2009} (IN), and \citetalias{hoderlein:2012}, where the focus is on estimating heterogeneous effect of endogenous treatment. 

To our knowledge, the papers that allow certain unobserved heterogeneities to be correlated with the exogenous variables while developing a control function method are: \citetalias{papke:2008}, \citet{val:2011} (FV), \citetalias{hoderlein:2012}, \cite{kim:2017}, and \citetalias{semykina:2018}.  While \citetalias{papke:2008} and \citetalias{semykina:2018} use the framework of correlated random effects to account for correlation between individual specific effects and the exogenous variables, \citetalias{val:2011} consider fixed effect estimation in both the stages, where the control variable is based on estimates of the fixed effects in the reduced form equation. \citetalias{val:2011}'s method though more general, requires large $ T $ to correct the bias due to incidental parameters problem.	\citetalias{hoderlein:2012} develop a generalized version of differencing -- which differences out the fixed effects -- to identify local average responses in a nonseparable and binary response models. However, identification in \citetalias{hoderlein:2012} requires that individuals do not experience a change in covariates over time; this requirement that the support of all regressors overlap over time could be hard to satisfy -- e.g., it rules out time trends and time dummies. \citeauthor{kim:2017} exploit restrictions in the conditional moment of unobserved heterogeneity given instruments to develop ``generalized control function." We allow for the correlation between the unobserved individual effects and the instruments in a manner similar to \citetalias{papke:2008}'s and \citetalias{semykina:2018}'s. 

Typically, in a simultaneous triangular system, unobserved heterogeneity in the reduced form equations  is assumed to be scalar, where the identifying assumption is that conditional on these scalar time-varying heterogeneity/errors or its CDF, which are identified, all covariates are independent of the heterogeneity in the structural equation. However, we know that economic models suggest heterogeneity in tastes, technologies, abilities, etc. that are unobserved. Also, some of these unobserved heterogeneity might as well be multidimensional. \cite{kasy:2011} shows that for the existing control function methods, identification fails when the reduced form equations have multiple unobserved heterogeneities. 

The exceptions to our knowledge are \citetalias{papke:2008}, \citetalias{val:2011}, \citetalias{hoderlein:2012} and  \citetalias{semykina:2018}, who consider panel data where multiple heterogeneities constitute of time invariant random effects and idiosyncratic errors.  While the imposed structures in \citetalias{papke:2008} and \citetalias{semykina:2018} are similar to ours, they make the traditional control function assumption, and so their control function is scalar, whereas our control function is vector valued, whose dimension depend on the dimension of unobserved heterogeneity and the number of endogenous variables. \citetalias{hoderlein:2012}'s specification of the triangular system does not nest ours and \citetalias{val:2011}'s fixed effects method requires long panels. 

We propose that the expected values of the heterogeneity terms of the reduced form equations conditional on the history of endogenous variables, $X_{i}\equiv(\pmb{x}_{i1},\ldots, \pmb{x}_{iT})$, and the same of the exogenous variables, $Z_{i}\equiv(\pmb{z}_{i1},\ldots, \pmb{z}_{iT})$ be used as control functions. The proposed control functions are identified when the distributions of the heterogeneity terms are specified. We argue that (1) for triangular systems with set-ups similar to ours, these control functions imply a weaker restriction than the commonly made control function assumptions, and (2) the traditionally used control functions may not provide consistent estimates in a panel data setting such as ours.

Our method, while being simple, makes a number of contributions to the literature. First, we allow for multiple heterogeneities, albeit with restrictions, in the triangular system, where most papers, adopting the control function approach to handle endogeneity, do not. Secondly, when the support of the endogenous variables is large, ASF or the APEs are point-identified even when the instruments have a small support.  We exploit panel data with repeated observations of the same unit for the purpose of point-identification when support requirement is met. Sharp bounds on the ASF and the APEs are provided when the support assumption is not satisfied. Thirdly, the method accounts for multiple endogenous variables, all of which are determined simultaneously, whereas most papers on control function consider a single endogenous variable. Finally, our model retains the attractive features of \citetalias{papke:2008}'s, where no assumptions are made on the serial dependence among the outcome variable. 

Using data on India, we estimate the causal effects of household income and wealth on the incidence of child labor. We find a strong effect of correcting for endogeneity, and show that the standard parametric models give a misleading picture of the causal effect of income and wealth on child labor. 

The rest of the paper is organized as follows. In section 2 we introduce the model and discuss identification and estimation of structural measures of interests for a discrete response model in a triangular system with random effects. In section 3 we discuss the results of the Monte Carlo experiments, where we compare our estimator with some of the existing methods for panel data binary response model with imposed structures similar to ours. Section 4 contains the application of the proposed estimator to study income and wealth effects on the incidence of child labor. And finally in section 5 we conclude. The following have been put in the appendix: proofs of the lemmas, propositions, and theorems (Appendix A), generalized estimating equation (GEE) estimation of probit conditional mean function (Appendix B), extension of the random effect model in the main text to allow for random coefficients (Appendix C), large sample properties of the estimator (Appendix D), other technical details (Appendix E).

\section{Model Specification and Identification and Estimation of Structural Measures}
Consider the following binary choice model in a triangular set-up:
\begin{align}\label{eq:1}
y_{it}=1\{y^{*}_{it} = (\pmb{w}^{\prime}_{it}, \pmb{x}_{it}^{\prime})\pmb{\varphi} + \theta_{i} + \zeta_{it}>0\},
\end{align}
where  $1\{.\}$ is an indicator function that takes value 1 if the argument in the parenthesis holds true and 0 otherwise. In (\ref{eq:1}),  $\theta_{i}$ is the unobserved time invariant individual effect and $\zeta_{it}$ is the idiosyncratic error component. The variables, $\pmb{x}_{it}$, are endogenous in the sense that  $\zeta_{it} \nindep \pmb{x}_{it} |\theta_{i}$; whereas most papers studying panel data binary choice model assume that  $\zeta_{it} \indep \pmb{x}_{it} |\theta_{i}$. We assume that each of the endogenous variables are continuous and have a large support. The dimension of $\pmb{x}_{it}$ is $d_{x}$ and the dimension of the exogenous variables, $ \pmb{w}_{it} $, is $ d_{w} $.

The reduced form in the triangular system, which is estimated in the first stage, is a system of $d_{x}$ linear equations, 
\begin{align}\label{eq:2}
\pmb{x}_{it} = \pi\pmb{z}_{it} +
\pmb{\alpha}_{i} + \pmb{\epsilon}_{it}.
\end{align}
In (\ref{eq:2}),  $\pi$ has a row dimension of $d_{x}$, $\pmb{\alpha}_{i}\equiv (\alpha_{i1}, \ldots, \alpha_{id_{x}})^{\prime}$ is the $ (d_{x}\times1) $ vector of unobserved time invariant individual effects, $\pmb{\epsilon}_{it}\equiv (\epsilon_{it1}, \ldots, \epsilon_{itd_{x}})^{\prime}$ is the $ (d_{x}\times1) $ vector of idiosyncratic error terms, and $\pmb{z}_{it}\equiv (\pmb{w}_{it}^{\prime},\tilde{\pmb{z}}_{it}^{\prime})^{\prime}$ is of dimension $ d_{z} $. The dimension of the vector of instruments, $\tilde{\pmb{z}}_{it}$, is greater than or equal to the dimension of $\pmb{x}_{it}$. Such exclusion restriction, where $\tilde{\pmb{z}}_{it}$ appears in the reduced form but not in the structural, are justified on economic grounds. 

Since the exogenous variables, $ \pmb{w}_{it} $, have no bearing on the identification results obtained in the paper, to ease notations we suppress it in the binary response model in the rest of the paper. All assumptions and results are to be understood as conditional on $\pmb{w}_{it}$. Secondly, in the rest of the paper, except when needed, we will drop the individual subscript, $i$.

While we refer (\ref{eq:2}) as reduced form equation, it is possible that the triangular system in (\ref{eq:1}) and (\ref{eq:2}) is in fact fully simultaneous \citep[see][for examples]{blundell:2004}.  However, even if a simultaneous system is not triangular, the triangular representation, such as the above, can be easily derived if the simultaneous equations involving $ y^{*}_{t} $ and $ \pmb{x}_{t} $ are linear and the errors are additively separable. Also, the triangular model can be generalized to allow for random coefficients instead of fixed coefficients. For the sake of exposition, we limit the analysis to fixed coefficients with random effects; a straightforward extension of the method to allow for random coefficients is discussed in Appendix C.

We first define some notations. Let $X\equiv(\pmb{x}_{1},\ldots, \pmb{x}_{T})$, a $(d_{x}\times T)$ matrix, denote the history of the endogenous variables, $ \pmb{x} $; let  $Z\equiv (\pmb{z}_{1},\ldots, \pmb{z}_{T})$ of dimension $(d_{z}\times T)$,  denote the history of the exogenous variables, $ \pmb{z} $; similarly,  $\pmb{\zeta}\equiv(\zeta_{1},\ldots, \zeta_{T})^{\prime}$ is a vector containing the realizations of idiosyncratic shocks in the structural  equation, and $\pmb{\epsilon}\equiv(\pmb{\epsilon}_{1},\ldots, \pmb{\epsilon}_{T})$ is a $(d_{x}\times T)$ matrix containing the realizations of idiosyncratic shocks in the reduced form equations. 

The first assumptions toward identifying the structural measures of interest such as ASF and APE are:
\begin{ass}\label{as:1}  $ \pmb{\zeta}, \pmb{\epsilon} \indep Z, \theta, \pmb{\alpha} $.
\end{ass}

\begin{ass}\label{as:2}
	
	(a) $ \theta, \pmb{\zeta}|X, Z, \pmb{\alpha} \sim \theta, \pmb{\zeta}|\pmb{\epsilon}, Z, \pmb{\alpha} \sim \theta, \pmb{\zeta}| \pmb{\epsilon}, \pmb{\alpha}$  where   $ \pmb{\epsilon}= X-\E (X|Z, \pmb{\alpha}) $,

	(b) $\theta, \zeta_{t} \indep \pmb{\epsilon}_{-t}|\pmb{\alpha}, \pmb{\epsilon}_{t}$. 
\end{ass}

\begin{ass}\label{as:3}
	\begin{align}\nonumber
	\pmb{\alpha}|Z \sim
	\N\begin{bmatrix} \E(\pmb{\alpha}|Z), \Lambda_{\alpha\alpha}\end{bmatrix} \text{  and  } \pmb{\epsilon}_{t} \sim
	\N\begin{bmatrix} 0,  \Sigma_{\epsilon\epsilon} \end{bmatrix}, 
	\end{align}
	where $\E(\pmb{\alpha}|Z)=\bar{\pi}\bar{\pmb{z}} $ could be either \citeauthor{chamberlain:1984}'s or \citeauthor{mundlak:1978}'s specification for correlated random effects.
\end{ass}

Assumptions AS \ref{as:1} and AS \ref{as:2}, which serve to account for unobserved confounders such as the unobserved heterogeneities that are fixed at least in short panels and to eliminate the confounding influences of observed and unobserved confounders, are weaker than the identifying assumptions for the traditional control function method such as in \citetalias{blundell:2004} and \citet{rothe:2009}. In the traditional control function method, (a) $ Z $ is assumed independent of all heterogeneity terms, $ \theta, \zeta_{t}, \pmb{\alpha}, \pmb{\epsilon}$, and (b) it is assumed that $\theta+\zeta_{t} \indep X | \pmb{\alpha}+ \pmb{\epsilon}_{t} =\pmb{\upsilon}_{t} $; such an assumption also implies that heterogeneity in each of the $ d_{x} $ reduced form equations is scalar, whereas one would like to allow for additional heterogeneities such as individual effects and/or random coefficients. We allow $ Z $ to be correlated with $ \theta $ and $ \pmb{\alpha} $, and in part (a) of AS \ref{as:2}, assume that conditional on the history of reduced form error terms, $\pmb{\epsilon}$ and $\pmb{\alpha}$, $Z$, and thereby $ X $, is independent of the structural error terms $\theta$ and $\zeta_{t}$. The assumption in AS \ref{as:2} (b), where only contemporaneous errors are correlated, has been made to ease exposition, and can be dropped.

To see why Assumption AS \ref{as:2} (a) is justified or realistic in empirical settings, consider the empirical example in \citetalias{blundell:2004}, in which they estimate the causal effect of ``other" household income on work participation decision by men without college education. The triangular model in \citetalias{blundell:2004} augmented with individual effects is given as:   
\begin{align} \label{eq:bp1}
y_{t}=1\{y^{*}_{t} = x_{t}\varphi + z_{1t}\varphi_{z} + \theta + \zeta_{t}>0\}\\ \label{eq:bp2}
x_{t} =  z_{1t}\pi_{1} + z_{21t}\pi_{21} +  z_{22t}\pi_{22} + \alpha + \epsilon_{t},
\end{align}
where $ y^{*}_{t} $ is the number of hours worked in a week by the man in the house; $ x_{t} $, which is weekly ``other" household income and which includes the income of the spouse, is endogenous; $z_{1t}$ is a set of strictly exogenous variables that includes various observable social demographic variables; $ z_{21t} $ is a set of strictly exogenous variables that includes household characteristics, for example, the education level of the spouse; and the instrument, $ z_{22t} $, is the weekly welfare benefit entitlement variable, which is excluded from the structural equation (\ref{eq:bp1}). This entitlement variable measures the transfer income the family would receive if neither spouse was working.  

The above triangular representation can be obtained by augmenting with individual effects the simultaneous equation model considered in \citet{blundell:1994}: 
\begin{align} \label{eq:bp3}
&y^{*}_{t} = x_{t}\varphi + z_{1t}\varphi_{z} + \theta + \zeta_{t}\\ \label{eq:bp4}
&x_{t} = y^{*}_{t}\beta_{y} + z_{21t}\beta_{z1} +  z_{22t}\beta_{z2} + \gamma + \xi_{t}. 
\end{align}
Since the structural equation, (\ref{eq:bp3}) is derived using a wage equation (see also \citetalias{blundell:2004}), the individual effect, $ \theta = f(\mu, \omega) $, could be a composite of unobserved ``taste" for work, $ \mu $,  and unobserved ability/productivity, $ \omega $; whereas $ \gamma $ in  equation (\ref{eq:bp4}) could represent household's or spouse's unobserved productivity \citep[see][]{blundell:2007}. Substituting $ x_{t}\varphi + z_{1t}\varphi_{z} + \theta + \zeta_{t} $ for $ y^{*}_{t} $ in equation (\ref{eq:bp4}), we get the reduced form in equation (\ref{eq:bp2}), where\begin{align}\nonumber
\pi_{1} = \frac{\varphi_{z}\beta_{y}}{1-\varphi\beta_{y}},  \pi_{21} = \frac{\beta_{z1}}{1-\varphi\beta_{y}},  \pi_{22} = \frac{\beta_{z2}}{1-\varphi\beta_{y}}  ,  \alpha = \frac{\theta\beta_{y} + \gamma }{1-\varphi\beta_{y}}  , \text{ and }  \epsilon_{t} = \frac{\zeta_{t}\beta_{y} + \xi_{t} }{1-\varphi\beta_{y}}.   	
\end{align}

Let $\pmb{z}_{t}\equiv \{z_{1t}, z_{21t},  z_{22t}\} $. First, given what the unobserved heterogeneities, $ \theta $, $ \gamma $ and $\alpha = g(\theta, \gamma) $, are, it is quite likely that $ \pmb{z}_{t} $, which includes the education level of the couple and the welfare benefits they receive, is correlated with them. Second, if, as in \citetalias{hoderlein:2012}, $ \zeta_{t} $ in equation (\ref{eq:bp3}) represents new private information revealed to the household, which affects both $y^{*}_{t}$ and $x_{t}$, then (a) $\xi_{t} = f(\zeta_{t})$ in equation (\ref{eq:bp4}) and (b) even after conditioning on individual effects,  $x_{t}$ and $\zeta_{t}$ would be dependent.     

For the example above and in general, given $ Z $, the only source of dependence between $ X $ and $(\theta, \zeta_{t} )$  is through the relationship between $(\pmb{\alpha},\pmb{\epsilon}_{1}, \ldots, \pmb{\epsilon}_{T})$ and $(\theta, \zeta_{t} )$. Therefore, given $ Z $, conditioning on $(\pmb{\alpha},\pmb{\epsilon}_{1}, \ldots, \pmb{\epsilon}_{T})$ eliminates this source of dependency. Since $\pmb{\epsilon}_{t}$ and $\zeta_{t}$ are by Assumption AS \ref{as:1} independent of $ Z $, it can be shown that Assumption AS \ref{as:2} (a) boils down to $ \theta \indep Z| \pmb{\alpha}$. In other words, what we are assuming is that no information about $ Z $ is contained
in $ \theta $ over and above that contained in $ \pmb{\alpha} $. This assumption, in effect, is similar to the assumption of strict exogeneity in panel data models: once the endogeneity of $ \pmb{x}_{t} $ has been addressed by conditioning on $(\pmb{\alpha},\pmb{\epsilon}_{1}, \ldots, \pmb{\epsilon}_{T})$, given time-invariant heterogeneity, $ \pmb{\alpha} $, $ Z $ has no extraneous influence on $y^{*}_{t}$.

Now, as the unobserved conditioning variables, $\pmb{\alpha}$ and $\pmb{\epsilon}_{t}$, cannot be identified separately, for identifying structural measures our method then requires that we be able to recover the conditional distribution of $\pmb{\alpha}$ given $X$ and $Z$ so that the control functions, which are based on $\E(\pmb{\alpha}|X, Z)$, can be estimated. However, we do not know of any semi or nonparametric estimator, and it is outside the scope of this paper to develop one, where the distribution or the expectation of individual effects or random coefficients conditional on $X$ and $Z$ are estimated for a system of regressions. Although, with parametric specification of the error components as in Assumption AS \ref{as:3}, this conditional distribution is obtained readily.

\citet{biorn:2004} proposed a step-wise maximum likelihood method for estimating the systems of regression equations, where the distributions of error components are specified as normal. Given Assumption AS \ref{as:3}, where the conditional distribution of $ \pmb{\alpha} $ given $Z$ and the marginal distribution of $\pmb{\epsilon}_{t}$ are both normal, the tail, $\pmb{a}=\pmb{\alpha} -\E(\pmb{\alpha}|Z)= \pmb{\alpha}-\bar{\pi}\bar{\pmb{z}}$, is distributed normally with conditional mean zero and variance $\Lambda_{\alpha\alpha}$. We can therefore write the reduced form in (\ref{eq:2}) as 
\begin{align}\label{eq:3}
&\pmb{x}_{t} = \pi\pmb{z}_{t} +\bar{\pi}\bar{\pmb{z}}+
\pmb{a} + \pmb{\epsilon}_{t},
\end{align}
which can be estimated using the method in \citet{biorn:2004}

Some recent papers that have employed \citeauthor{mundlak:1978}'s specification for correlated random effects are listed in \citetalias{semykina:2018}. Though we have assumed the error term to be normally distributed, as we discuss here, violation of assumed normality of reduced form errors is unlikely to have a bearing on the estimates. First, the coefficients in \citet{biorn:2004} are estimated by the method of generalized least squares (GLS), which does not require normality of the errors, $ \pmb{\alpha} $ and $ \pmb{\epsilon}_{t} $. Second, \citeauthor{biorn:2004} shows that the ML estimates of the covariance matrices, $ \Sigma_{\epsilon\epsilon}$ and $\Lambda_{\alpha\alpha} $, for a moderately large $ N $ are approximately same as those that are obtained when the distributions of $ \pmb{\alpha} $ and $ \pmb{\epsilon}_{t} $ are unknown. Third, estimating the reduced form equation augmented with $ \bar{\pmb{z}}=T^{-1}\sum_{t=1}^{T}\pmb{z}_{t} $ -- to account for the correlation between $ \pmb{\alpha} $ and  $ Z $ --  by GLS yields fixed-effects (FE) estimates of $ \pi $ for time varying $ \pmb{z}_{t} $  \citep[see][]{wooldridge:2019}.

For scalar $ x $, \citet{baltagi:2010} allow for heteroscedastic $a$ and serial correlation among $\epsilon_{t}$.  Thus, when $d_{x}=1$, the assumptions that $a$ is completely independent of $Z$ and that $\epsilon_{t} $'s are i.i.d. can be weakened to allow for non-spherical error components. However, since we want to account for the endogeneity of multiple endogenous regressors, we will stick to Assumption AS \ref{as:3}, and estimate the first-stage parameters, $\Theta_{1}\equiv \{\pi, \bar{\pi}, \Sigma_{\epsilon\epsilon},\Lambda_{\alpha\alpha}\}$, of the reduced form equations (\ref{eq:3}) using \citeauthor{biorn:2004}'s step-wise likelihood method, which is briefly described in Appendix E.


\subsection{Identification of Structural Coefficients}

Now, by Assumptions AS \ref{as:1} and AS \ref{as:2} the dependence of $(\theta,\pmb{\zeta})$ on $X$, $Z$, and $ \pmb{\alpha} $ is characterized by $ \pmb{\alpha} $ and $ \pmb{\epsilon}_{t} $.  If $ \pmb{\alpha} $ and $ \pmb{\epsilon}_{t} $ could be identified, we could augment the structural equation with $ \pmb{\alpha} $ and $ \pmb{\epsilon}_{t} $ and estimate the coefficients, $ \pmb{\varphi} $. Since $ \pmb{\alpha} $ and $ \pmb{\epsilon}_{t} $ are not identified separately, the traditional control function approach assumes that the composite error, $\pmb{\upsilon}_{t} = \pmb{\alpha} + \pmb{\epsilon}_{t} $, which are estimated as the residuals of the reduced form equations, is independent of $ Z $ and that conditional on $\pmb{\upsilon}_{t}$,  $X$ is independent of $ \theta+ \zeta_{t} $.  Such an assumption, as we discuss in detail, could quite likely be violated.

In Theorem \ref{lm:2} we show that by estimating the modified structural equation (\ref{eq:7}), which is augmented with the additional control variables, $\hat{\pmb{\alpha}}(X, Z) \equiv \E(\pmb{\alpha}|X, Z)$ and $\hat{\pmb{\epsilon}}_{t} (X, Z) \equiv \E(\pmb{\epsilon}_{t}|X, Z)$, the structural coefficients $ \pmb{\varphi} $ can be estimated consistently. The modified structural equation (\ref{eq:7}) is derived based on Lemma \ref{lm:0}, and the control variables are identified in Lemma \ref{lm:1}.   

In Lemma \ref{lm:0}, we show that:
\begin{lem}\label{lm:0}
	If (i) Assumptions AS \ref{as:1} and AS \ref{as:2} hold and (ii) $ \E(\theta|\pmb{\alpha}) $ and $ \E(\zeta_{t}| \pmb{\epsilon}_{t}) $ are linear in $ \pmb{\alpha} $ and $  \pmb{\epsilon}_{t} $ respectively so that $ \E(\theta|\pmb{\alpha}) + \E(\zeta_{t}| \pmb{\epsilon}_{t})
	= \pmb{\varphi}_{\alpha}\pmb{\alpha} + \pmb{\varphi}_{\epsilon}\pmb{\epsilon}_{t} $, then  $ \E(\theta + \zeta_{t}|X, Z) $ depends on $ (X, Z)$ only through  $\hat{\pmb{\alpha}}(X, Z) $ and $\hat{\pmb{\epsilon}}_{t} (X, Z) $. 
\end{lem}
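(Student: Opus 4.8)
The plan is to peel the conditioning set in $\E(\theta+\zeta_t\mid X,Z)$ down to $(\pmb{\alpha},\pmb{\epsilon}_t)$ using the conditional-independence restrictions of AS \ref{as:1} and AS \ref{as:2}, to evaluate the conditional mean there by means of the linearity hypothesis (ii), and then to fold $\E(\cdot\mid X,Z)$ back through the resulting affine expression.

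First I would write, by the law of iterated expectations,
\begin{align}\nonumber
\E(\theta+\zeta_t\mid X,Z)=\E\big[\,\E(\theta+\zeta_t\mid X,Z,\pmb{\alpha})\,\big|\,X,Z\big],
\end{align}
and evaluate the inner expectation. By AS \ref{as:2}(a) the conditional law of $(\theta,\pmb{\zeta})$ given $(X,Z,\pmb{\alpha})$ coincides with its law given $(\pmb{\epsilon},\pmb{\alpha})$, where $\pmb{\epsilon}=X-\E(X\mid Z,\pmb{\alpha})$; since $\pmb{\epsilon}$ has mean zero and, by AS \ref{as:1}, is independent of $(Z,\pmb{\alpha})$, the reduced form (\ref{eq:2}) gives $\E(\pmb{x}_t\mid Z,\pmb{\alpha})=\pi\pmb{z}_t+\pmb{\alpha}$, so this $\pmb{\epsilon}$ is precisely the history of reduced-form errors. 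Hence the inner expectation equals $\E(\theta\mid\pmb{\epsilon},\pmb{\alpha})+\E(\zeta_t\mid\pmb{\epsilon},\pmb{\alpha})$. Now AS \ref{as:1} gives $\pmb{\epsilon}\indep(\theta,\pmb{\alpha})$, hence $\theta\indep\pmb{\epsilon}\mid\pmb{\alpha}$ and $\E(\theta\mid\pmb{\epsilon},\pmb{\alpha})=\E(\theta\mid\pmb{\alpha})$; it also gives $(\pmb{\zeta},\pmb{\epsilon})\indep\pmb{\alpha}$, hence $\E(\zeta_t\mid\pmb{\epsilon},\pmb{\alpha})=\E(\zeta_t\mid\pmb{\epsilon})$, and this same independence reduces AS \ref{as:2}(b) to $\zeta_t\indep\pmb{\epsilon}_{-t}\mid\pmb{\epsilon}_t$, so $\E(\zeta_t\mid\pmb{\epsilon})=\E(\zeta_t\mid\pmb{\epsilon}_t)$. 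Invoking hypothesis (ii),
\begin{align}\nonumber
\E(\theta+\zeta_t\mid X,Z,\pmb{\alpha})=\E(\theta\mid\pmb{\alpha})+\E(\zeta_t\mid\pmb{\epsilon}_t)=\pmb{\varphi}_{\alpha}\pmb{\alpha}+\pmb{\varphi}_{\epsilon}\pmb{\epsilon}_t.
\end{align}

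This last expression is affine in $(\pmb{\alpha},\pmb{\epsilon}_t)$, and $\pmb{\epsilon}_t=\pmb{x}_t-\pi\pmb{z}_t-\pmb{\alpha}$, so taking $\E(\cdot\mid X,Z)$ and using linearity of the conditional expectation gives
\begin{align}\nonumber
\E(\theta+\zeta_t\mid X,Z)=\pmb{\varphi}_{\alpha}\,\hat{\pmb{\alpha}}(X,Z)+\pmb{\varphi}_{\epsilon}\,\hat{\pmb{\epsilon}}_t(X,Z),
\end{align}
which is a function of $(X,Z)$ only through $\hat{\pmb{\alpha}}(X,Z)$ and $\hat{\pmb{\epsilon}}_t(X,Z)$, as claimed.

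I expect the main obstacle to be the bookkeeping of the conditioning sets: checking that the $\pmb{\epsilon}$ appearing in AS \ref{as:2}(a) is literally the reduced-form error vector (so that AS \ref{as:1} and AS \ref{as:2}(b) may be applied to it), and that the various conditional independencies line up so that $\E(\theta\mid\pmb{\epsilon},\pmb{\alpha})$ and $\E(\zeta_t\mid\pmb{\epsilon},\pmb{\alpha})$ collapse \emph{separately} to $\E(\theta\mid\pmb{\alpha})$ and $\E(\zeta_t\mid\pmb{\epsilon}_t)$. The linearity in (ii) is used only at the final step; without it, $\E[\E(\theta\mid\pmb{\alpha})\mid X,Z]$ need not be a function of $\hat{\pmb{\alpha}}(X,Z)$ alone, and the conclusion would break down.
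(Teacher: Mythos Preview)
Your proof is correct and follows essentially the same route as the paper's: iterate expectations over $\pmb{\alpha}$, use AS~\ref{as:2} to reduce the conditioning to $(\pmb{\alpha},\pmb{\epsilon})$, use AS~\ref{as:1} to collapse to $\E(\theta\mid\pmb{\alpha})+\E(\zeta_t\mid\pmb{\epsilon}_t)$, then apply linearity (ii) and take $\E(\cdot\mid X,Z)$. The only cosmetic difference is ordering: the paper bundles AS~\ref{as:2}(a) and (b) together to pass in one step from $(X,Z,\pmb{\alpha})$ to $(\pmb{\alpha},\pmb{\epsilon}_t)$ and only then invokes AS~\ref{as:1} to split the sum, whereas you first reduce to $(\pmb{\alpha},\pmb{\epsilon})$, split via AS~\ref{as:1}, and afterwards drop $\pmb{\epsilon}_{-t}$ using AS~\ref{as:2}(b). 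Your version is more explicit about which independence delivers which reduction---in particular your check that the $\pmb{\epsilon}$ of AS~\ref{as:2}(a) is literally the reduced-form error, and your separate treatment of the $\theta$- and $\zeta_t$-terms---but the argument is the same.
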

\begin{profl}
	Now,
	\begin{align}\nonumber
	\E(\theta + \zeta_{t}|X, Z)&=\E(\E(\theta + \zeta_{t}|X, Z, \pmb{\alpha})|X, Z) =\E(\E(\theta +\zeta_{t}|\pmb{\alpha}, \pmb{\epsilon}_{t}) |X, Z)\\\label{eq:19}&=\E(\E(\theta|\pmb{\alpha}) + \E( \zeta_{t}| \pmb{\epsilon}_{t}) |X, Z)
	= \pmb{\varphi}_{\alpha}\E(\pmb{\alpha}|X, Z) + \pmb{\varphi}_{\epsilon}\E(\pmb{\epsilon}_{t}|X, Z),
	\end{align}
	where the first equality is due to the law of iterated expectations, the second is due to Assumptions AS \ref{as:2}, the third due to AS \ref{as:1}, and the fourth due assumption (ii) in the Lemma.     
	
\end{profl}

In Lemma \ref{lm:1} we show that:\footnote{In Lemma \ref{alm:1}, part (b), in the appendix we derive the conditional distribution of $\pmb{\alpha}$ given $X$ and $Z$ for the estimator in \citet{baltagi:2010}, where $\alpha$ and $\epsilon_{t} $ are both scalar, $\alpha$ is heteroscedastic, and the distribution of $\epsilon_{t} $ is non-spherical.} 
\begin{lem}\label{lm:1}
	Let $\pmb{x}_{t}= \pi\pmb{z}_{t} +\bar{\pi}\bar{\pmb{z}} +\pmb{a} + \pmb{\epsilon}_{t}$, $ t \in \{1,\ldots, T\} $, where $ \bar{\pmb{z}} = \frac{1}{T}\sum_{t=1}^{T}\pmb{z}_{t} $, and let AS \ref{as:3} hold, then $ \pmb{\alpha}= \bar{\pi}\bar{\pmb{z}} +\pmb{a} $, given $ X $ and $ Z $, is distributed with conditional mean  
	\begin{align}\nonumber
	\E( \pmb{\alpha}|X, Z) \equiv \hat{\pmb{\alpha}}(X, Z, \Theta_{1} )=\bar{\pi}\bar{\pmb{z}}+ \E( \pmb{a}|X, Z) = \bar{\pi}\bar{\pmb{z}}+ \Omega\Sigma_{\epsilon\epsilon}^{-1}\sum_{t=1}^{T}(\pmb{x}_{t}-\pi\pmb{z}_{t}  -\bar{\pi}\bar{\pmb{z}}),
	\end{align}
	where $\Omega =[T\Sigma_{\epsilon\epsilon}^{-1}+\Lambda_{\alpha\alpha}^{-1}]^{-1}$ is the conditional variance of $ \pmb{\alpha} $ given $ X $ and $ Z $;  $ \Lambda_{\alpha\alpha}$ and $\Sigma_{\epsilon\epsilon}$ being the covariance matrices of $ \pmb{a} $ and $ \pmb{\epsilon}_{t} $ respectively. 
	
\end{lem}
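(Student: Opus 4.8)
The statement is an instance of Gaussian conjugacy, and the plan is to condition on $Z$, reduce the conditioning set $(X,Z)$ to the within-period reduced-form residuals, and then carry out a normal--normal update. First I would note that, given the first-stage parameters, $\pmb r_t \equiv \pmb x_t - \pi\pmb z_t - \bar\pi\bar{\pmb z}$ is a deterministic affine function of $(\pmb x_t, Z)$, so the $\sigma$-algebra generated by $(X,Z)$ coincides with the one generated by $(\pmb r_1,\ldots,\pmb r_T, Z)$; hence $\E(\pmb\alpha|X,Z)=\E(\pmb\alpha|\pmb r_1,\ldots,\pmb r_T,Z)$ and it suffices to compute the latter. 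By the representation $\pmb x_t=\pi\pmb z_t+\bar\pi\bar{\pmb z}+\pmb a+\pmb\epsilon_t$ we get $\pmb r_t=\pmb a+\pmb\epsilon_t$, and under AS~\ref{as:1} and AS~\ref{as:3} the vectors $\pmb a,\pmb\epsilon_1,\ldots,\pmb\epsilon_T$ are, conditionally on $Z$, mutually independent with $\pmb a\sim\N(0,\Lambda_{\alpha\alpha})$ (recall $\pmb a=\pmb\alpha-\E(\pmb\alpha|Z)$ is centered) and $\pmb\epsilon_t\stackrel{\text{iid}}{\sim}\N(0,\Sigma_{\epsilon\epsilon})$.

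Next I would run the conjugate update. Treating $\pmb a$ as the parameter, with prior $\N(0,\Lambda_{\alpha\alpha})$ and observation equations $\pmb r_t\mid\pmb a\stackrel{\text{iid}}{\sim}\N(\pmb a,\Sigma_{\epsilon\epsilon})$, the posterior of $\pmb a$ given $(\pmb r_1,\ldots,\pmb r_T)$ is normal with precision equal to the sum of prior and data precisions, $\Lambda_{\alpha\alpha}^{-1}+T\Sigma_{\epsilon\epsilon}^{-1}$, hence with variance $\Omega=[T\Sigma_{\epsilon\epsilon}^{-1}+\Lambda_{\alpha\alpha}^{-1}]^{-1}$ and mean $\Omega\bigl(\Lambda_{\alpha\alpha}^{-1}\cdot 0+\Sigma_{\epsilon\epsilon}^{-1}\sum_{t=1}^{T}\pmb r_t\bigr)=\Omega\Sigma_{\epsilon\epsilon}^{-1}\sum_{t=1}^{T}\pmb r_t$. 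Equivalently, one can stack $\pmb R=(\pmb r_1',\ldots,\pmb r_T')'$, note $\var(\pmb R\mid Z)=I_T\otimes\Sigma_{\epsilon\epsilon}+\mathbf 1_T\mathbf 1_T'\otimes\Lambda_{\alpha\alpha}$ and $\cov(\pmb a,\pmb R\mid Z)=\mathbf 1_T'\otimes\Lambda_{\alpha\alpha}$, and plug these into the Gaussian conditional-mean formula $\E(\pmb a\mid\pmb R,Z)=\cov(\pmb a,\pmb R\mid Z)\var(\pmb R\mid Z)^{-1}\pmb R$, applying the Woodbury identity to $\var(\pmb R\mid Z)^{-1}$; the two routes yield the same expression. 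Finally, since $\pmb\alpha=\bar\pi\bar{\pmb z}+\pmb a$ with $\bar\pi\bar{\pmb z}$ being $Z$-measurable, $\E(\pmb\alpha|X,Z)=\bar\pi\bar{\pmb z}+\E(\pmb a|X,Z)=\bar\pi\bar{\pmb z}+\Omega\Sigma_{\epsilon\epsilon}^{-1}\sum_{t=1}^{T}(\pmb x_t-\pi\pmb z_t-\bar\pi\bar{\pmb z})$, while $\var(\pmb\alpha|X,Z)=\var(\pmb a|X,Z)=\Omega$, as claimed.

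There is no deep obstacle here — the result is a textbook Gaussian calculation — so the write-up is mostly bookkeeping. The two points that need genuine care are (i) the measurability step that legitimizes replacing $(X,Z)$ by the residuals and $Z$, and (ii) keeping the prior mean straight, so that it is the centered $\pmb a$ that is updated and the $\bar\pi\bar{\pmb z}$ term is simply passed through. If one follows the explicit-inversion route instead, the only non-trivial algebra is the Woodbury step verifying that $(I_T\otimes\Sigma_{\epsilon\epsilon}+\mathbf 1_T\mathbf 1_T'\otimes\Lambda_{\alpha\alpha})^{-1}$ collapses, after the multiplications, to the stated $\Omega$ and coefficient $\Omega\Sigma_{\epsilon\epsilon}^{-1}$.
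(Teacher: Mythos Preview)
Your proposal is correct and follows essentially the same approach as the paper: both compute the Gaussian posterior of $\pmb a$ given the reduced-form residuals $\pmb r_t=\pmb x_t-\pi\pmb z_t-\bar\pi\bar{\pmb z}$ under normality, and then shift by $\bar\pi\bar{\pmb z}$. The only difference is presentational: the paper writes out $f(X\mid Z,\pmb a)f(\pmb a)/f(X\mid Z)$ explicitly, applies a Kronecker-product inverse identity to $\var(R\mid Z)=I_T\otimes\Sigma_{\epsilon\epsilon}+E_T\otimes\Lambda_{\alpha\alpha}$, and completes the square, whereas you invoke the precision-form conjugate update (posterior precision $=\Lambda_{\alpha\alpha}^{-1}+T\Sigma_{\epsilon\epsilon}^{-1}$) directly; the content is identical.
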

\begin{profl}
	
	\emph{Given in Appendix A. }
	
\end{profl}
Conditional mean of $\pmb{\epsilon}_{t} $ given $ X $ and $ Z $ is then given by
\begin{align}\nonumber
\hat{\pmb{\epsilon}}_{t}(X, Z, \Theta_{1}) = \pmb{x}_{t} - \pi\pmb{z}_{t} - \E(\pmb{\alpha}|X, Z) = \pmb{x}_{t}-\pi\pmb{z}_{t} - \hat{\pmb{\alpha}}(X, Z, \Theta_{1}) =\pmb{\upsilon}_{t} - \hat{\pmb{\alpha}}(X, Z, \Theta_{1} ). 
\end{align}

\begin{remark}\label{rm:0}
	The expected posterior estimates, $\hat{\pmb{\alpha}}$,\footnote{For notational convenience, we use  $  \hat{\pmb{\alpha}}(X, Z, \Theta_{1}) $, $\hat{\pmb{\alpha}}(X, Z)$ and $ \hat{\pmb{\alpha}} $ interchangeably; the same for $\hat{\pmb{\epsilon}}_{t}(X, Z, \Theta_{1}) $, $\hat{\pmb{\epsilon}}_{t}(X, Z)$ and $\hat{\pmb{\epsilon}}_{t}$.} of $\pmb{\alpha}$ in Lemma \ref{lm:1}, however, is the empirical Bayes or the \citetalias{james:1961}'s shrinkage estimator of $\pmb{\alpha}$ \citep[see][]{efron:2010}. The empirical Bayes estimation has gained certain popularity in economics. In education economics, it is employed as a procedure to calculate teacher value added and often as a way to make imprecise estimates more reliable \citep[see][and the references therein]{guarino:2015}. 
	
	Now, we can write  $\hat{\pmb{\alpha}}$ in Lemma \ref{lm:1} as
	\begin{align}\nonumber
	\hat{\pmb{\alpha}}& =\bar{\pi}\bar{\pmb{z}}+ \E( \pmb{a}|X, Z) =\bar{\pi}\bar{\pmb{z}}+ \underbrace{[\Sigma_{\epsilon\epsilon}^{-1}+\frac{1}{T}\Lambda_{\alpha\alpha}^{-1}]^{-1}\Sigma_{\epsilon\epsilon}^{-1}}_{\text{shrinkage factor}}\frac{1}{T}\sum_{t=1}^{T}(\pmb{x}_{t}-\pi\pmb{z}_{t}  -\bar{\pi}\bar{\pmb{z}}).\end{align}
	With the reduced form equation being specified as in equation (\ref{eq:3}) and given Assumption AS \ref{as:3}, it can be verified that, for a given $\Theta_{1}$, the MLE of $ \pmb{a} $ is $ \frac{1}{T}\sum_{t=1}^{T}(\pmb{x}_{t}-\pi\pmb{z}_{t}  -\bar{\pi}\bar{\pmb{z}}) $.  Since for small $ T $ the MLE of $ \pmb{a} $ is less reliable, the \text{shrinkage factor} of the empirical Bayes estimator shrinks the MLE of $\pmb{a}$ towards its mean, 0; thus shrinking the ML estimate of $\pmb{\alpha}$ towards its prior mean, $ \bar{\pi}\bar{\pmb{z}} $.  Given consistent estimates of reduced-form parameters, $\Theta_{1}$, the empirical Bayes estimate, $\hat{\pmb{\alpha}}$, of $\pmb{\alpha}$ is the minimum mean squared error predictor of $\pmb{\alpha} $ under normality, and therefore a justified estimator of $\pmb{\alpha}$.\footnote{When there is a single endogenous regressor, so that the reduced form has a single equation, then one can employ the estimation method in \citet{gu:2017}, who, for longitudinal data, have developed a non-parametric estimation method to estimate the empirical Bayes estimates of the individual effects, $\alpha $, and the distribution of $\alpha $. Since the posterior mean $\hat{\alpha} $ is estimated non-parametrically, the large sample properties of the structural coefficients will have to be worked out anew.}
	
	For large $T$, since the estimates of $\pi $ are the FE estimates of $\pi$, it can be shown that $ \hat{\pmb{\alpha}}$ consistently estimates the fixed effects, $ \pmb{\alpha} $. With the FE estimates of $ \pmb{\alpha} $ given by $\hat{\pmb{\alpha}}_{FE} = \frac{1}{T}\sum_{t=1}^{T}(\pmb{x}_{t}-\pi\pmb{z}_{t})$, we can write $ \hat{\pmb{\alpha}} $ as \begin{align}\nonumber
	\hat{\pmb{\alpha}}  = \bar{\pi}\bar{\pmb{z}}+ [\Sigma_{\epsilon\epsilon}^{-1}+\frac{1}{T}\Lambda_{\alpha\alpha}^{-1}]^{-1}\Sigma_{\epsilon\epsilon}^{-1}(\hat{\pmb{\alpha}}_{FE} - \bar{\pi}\bar{\pmb{z}}). 
	\end{align} Assuming $ N $ is large to have consistently estimated the reduced form parameters, since $  \hat{\pmb{\alpha}}_{FE} $ converges in probability to $ \pmb{\alpha}$ and $[\Sigma_{\epsilon\epsilon}^{-1}+\frac{1}{T}\Lambda_{\alpha\alpha}^{-1}]^{-1}\Sigma_{\epsilon\epsilon}^{-1} $ to an identity matrix  as $ T\rightarrow \infty $, by continuous mapping theorem it can be shown that $ \hat{\pmb{\alpha}} \xrightarrow{p}  \pmb{\alpha} $, and consequently $\hat{\pmb{\epsilon}}_{t} \xrightarrow{p} \pmb{\epsilon}_{t}$.
	
\end{remark}

Given Lemma \ref{lm:0}, we have $ \E(\textit{y}^{*}_{t}| X, Z) = \pmb{x}^{\prime}_{t}\pmb{\varphi}+ \E(\theta+\zeta_{t}| X,Z) = \pmb{x}^{\prime}_{t}\pmb{\varphi} + \pmb{\varphi}_{\alpha}\hat{\pmb{\alpha}}+\pmb{\varphi}_{\epsilon}\hat{\pmb{\epsilon}}_{t}  $. 
We can then write equation (\ref{eq:1}) written as
\begin{align}\label{eq:7}
&y_{t} = 1\{\mathbb{X}^{\prime}_{t}\Theta_{2}+ \eta_{t}>0\}, 
\end{align}
where $ \Theta_{2} \equiv (\pmb{\varphi}^{\prime}, \pmb{\varphi}_{\alpha}^{\prime}, \pmb{\varphi}_{\epsilon}^{\prime})^{\prime} $, $\mathbb{X} \equiv  (\pmb{x}^{\prime}_{t}, \hat{\pmb{\alpha}}^{\prime}, \hat{\pmb{\epsilon}}^{\prime}_{t})^{\prime}$, and  $\eta_{t} = \theta+\zeta_{t}-\E(\theta+\zeta_{t}|X, Z)$. The two vectors, $ \pmb{\varphi}_{\alpha} $ and $ \pmb{\varphi}_{\epsilon} $, when estimated give us a test of exogeneity of $\pmb{x}_{t}$. Though $ \eta_{t} $ by construction is mean independent of $\mathbb{X}_{t}$,\footnote{Now, $ \E(\eta_{t} |\mathbb{X}_{t}) = \E( \E(\eta_{t} |X, Z, \mathbb{X}_{t})| \mathbb{X}_{t} ) = \E( \E(\eta_{t} |X, Z)| \mathbb{X}_{t} ) = \E( 0| \mathbb{X}_{t} )=0$. The first equality is due to law of iterated expectation, the second is because $ \mathbb{X}_{t} $ is a function of $ X,Z $, and the third follows from the definition of $\eta_{t}$.} for estimation of binary response model in (\ref{eq:7}), the strong condition of complete independence is required \citep{manski:1988}, or   
\begin{ass}\label{as:4}
	$ \eta_{t} \indep \mathbb{X}_{t} $.
\end{ass}

Since $ \hat{\pmb{\epsilon}}_{t} $ and $ \hat{\pmb{\alpha}}$, both, are of dimension $ d_{x} $, the dimension of $ \mathbb{X}_{t} $ is $ 3d_{x} $.\footnote{Note that we have suppressed $ \pmb{w}_{t} $ in $ \mathbb{X}_{t} $, where $ \pmb{w}_{t} $ is of dimension $d_{w}$. So, in fact, the dimension of  $ \mathbb{X}_{t} $ is $ 3d_{x} +d_{w}$. Suppressing $ \pmb{w}_{t} $ in $ \mathbb{X}_{t} $, however, results in no loss of generality.} The identification conditions for $ \Theta_{2} $ in (\ref{eq:7}) to be identified when $ \eta_{t} $ is assumed to follow a known distribution are: (a)  $\eta_{t} $ be independent of $ \mathbb{X}_{t} $ and (b) $ \rank(\E(\mathbb{X}_{t}\mathbb{X}_{t}^{\prime})) = 3d_{x} $. In Theorem \ref{lm:2} we show that condition (b) is satisfied.
\begin{theo}\label{lm:2}
	If  (i) $ \rank(\E(\pmb{x}_{t}\pmb{x}_{t}^{\prime}))=d_{x}$; (ii) $ \rank(\Pi)=d_{x} $, where $ \Pi = \begin{pmatrix}
	\pi & \bar{\pi}
	\end{pmatrix} $; (iii) $ \rank(\E((\pmb{z}^{\prime}_{t},\bar{\pmb{z}}^{\prime} )^{\prime}(\pmb{z}^{\prime}_{t},\bar{\pmb{z}}^{\prime} )))=k$  where $ k = \dim((\pmb{z}^{\prime}_{t},\bar{\pmb{z}}^{\prime} )^{\prime}) $; and (iv) if AS \ref{as:3} holds so that the covariance matrices of $\pmb{\epsilon}_{t}$ and $\pmb{\alpha}$ are of full rank,  then $\rank(\E(\mathbb{X}_{t}\mathbb{X}_{t}^{\prime})) = 3d_{x} $. 
\end{theo}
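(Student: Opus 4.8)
The plan is to use the standard characterization that a positive--semidefinite matrix $\E(\mathbb{X}_{t}\mathbb{X}_{t}^{\prime})$ has rank $3d_{x}$ precisely when no nonzero $\pmb{c}=(\pmb{c}_{1}^{\prime},\pmb{c}_{2}^{\prime},\pmb{c}_{3}^{\prime})^{\prime}$ satisfies $\pmb{c}^{\prime}\mathbb{X}_{t}=0$ almost surely; so I would fix such a $\pmb{c}$ and show it must vanish. The first step is to write the three sub-vectors of $\mathbb{X}_{t}=(\pmb{x}_{t}^{\prime},\hat{\pmb{\alpha}}^{\prime},\hat{\pmb{\epsilon}}_{t}^{\prime})^{\prime}$ in terms of the primitives $\pmb{z}_{t}$, $\bar{\pmb{z}}$, $\pmb{a}=\pmb{\alpha}-\bar{\pi}\bar{\pmb{z}}$ and $\pmb{\epsilon}_{1},\ldots,\pmb{\epsilon}_{T}$. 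Combining equation (\ref{eq:3}), the expression for $\hat{\pmb{\alpha}}$ in Lemma \ref{lm:1}, and $\hat{\pmb{\epsilon}}_{t}=\pmb{x}_{t}-\pi\pmb{z}_{t}-\hat{\pmb{\alpha}}$, and abbreviating $A\equiv\Omega\Sigma_{\epsilon\epsilon}^{-1}$ — which is invertible, being a product of positive-definite matrices under AS \ref{as:3} and hypothesis (iv) — one obtains $\pmb{x}_{t}=\pi\pmb{z}_{t}+\bar{\pi}\bar{\pmb{z}}+\pmb{a}+\pmb{\epsilon}_{t}$, $\hat{\pmb{\alpha}}=\bar{\pi}\bar{\pmb{z}}+AT\pmb{a}+A\sum_{s=1}^{T}\pmb{\epsilon}_{s}$, and $\hat{\pmb{\epsilon}}_{t}=(I-AT)\pmb{a}+(I-A)\pmb{\epsilon}_{t}-A\sum_{s\neq t}\pmb{\epsilon}_{s}$, where $I$ is the $d_{x}\times d_{x}$ identity.

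Next I would substitute these into $\pmb{c}^{\prime}\mathbb{X}_{t}=0$ and separate the variation. By AS \ref{as:1} and AS \ref{as:3} the stacked vector $(\pmb{a}^{\prime},\pmb{\epsilon}_{1}^{\prime},\ldots,\pmb{\epsilon}_{T}^{\prime})^{\prime}$ is mean zero, independent of $Z$, and has block-diagonal covariance with blocks $\Lambda_{\alpha\alpha}$ and $T$ copies of $\Sigma_{\epsilon\epsilon}$, all nonsingular by (iv); also $(\pmb{z}_{t}^{\prime},\bar{\pmb{z}}^{\prime})^{\prime}$ has a nonsingular (uncentred) second-moment matrix by (iii), hence so does $\pmb{z}_{t}$ alone, being a principal submatrix. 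Since the errors are mean zero and independent of $Z$, an a.s.\ identity of the form $g(Z)+h(\pmb{a},\pmb{\epsilon})=0$ with $h$ linear forces both pieces to vanish a.s.; and since each of the blocks $(\pmb{z}_{t},\bar{\pmb{z}})$, $\pmb{a}$, $\pmb{\epsilon}_{t}$, $\{\pmb{\epsilon}_{s}\}_{s\neq t}$ is nondegenerate, the coefficient row vector multiplying it must be zero. Reading these off gives $\pmb{c}_{1}^{\prime}\pi=0$ and $(\pmb{c}_{1}+\pmb{c}_{2})^{\prime}\bar{\pi}=0$ from the $Z$-part, and $(\pmb{c}_{1}+\pmb{c}_{3})^{\prime}+T(\pmb{c}_{2}-\pmb{c}_{3})^{\prime}A=0$, $(\pmb{c}_{1}+\pmb{c}_{3})^{\prime}+(\pmb{c}_{2}-\pmb{c}_{3})^{\prime}A=0$, and $(\pmb{c}_{2}-\pmb{c}_{3})^{\prime}A=0$ from the error part (the last requiring $T\geq 2$, so that some $s\neq t$ exists).

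The third step solves this system. The $\pmb{\epsilon}_{s}$-equation and invertibility of $A$ give $\pmb{c}_{2}=\pmb{c}_{3}$; the $\pmb{a}$- and $\pmb{\epsilon}_{t}$-equations then both collapse to $\pmb{c}_{1}=-\pmb{c}_{3}$, so $\pmb{c}_{1}=-\pmb{c}_{2}=-\pmb{c}_{3}=:-\pmb{v}$ (and $(\pmb{c}_{1}+\pmb{c}_{2})^{\prime}\bar{\pi}=0$ becomes vacuous). With these values $\pmb{c}^{\prime}\mathbb{X}_{t}$ collapses, using $\hat{\pmb{\alpha}}+\hat{\pmb{\epsilon}}_{t}=\pmb{x}_{t}-\pi\pmb{z}_{t}$, to $\pmb{c}^{\prime}\mathbb{X}_{t}=\pmb{v}^{\prime}(\hat{\pmb{\alpha}}+\hat{\pmb{\epsilon}}_{t}-\pmb{x}_{t})=-\pmb{v}^{\prime}\pi\pmb{z}_{t}$; since $\pmb{z}_{t}$ is nondegenerate, this being zero a.s.\ forces $\pi^{\prime}\pmb{v}=0$, and the first-stage relevance hypothesis (ii), which makes $\pi$ of full row rank $d_{x}$, then gives $\pmb{v}=0$, i.e.\ $\pmb{c}=0$. (Hypothesis (i) is in fact implied once $\Lambda_{\alpha\alpha}$ and $\Sigma_{\epsilon\epsilon}$ are nonsingular, because $\E(\pmb{x}_{t}\pmb{x}_{t}^{\prime})=\Pi\,\E[(\pmb{z}_{t}^{\prime},\bar{\pmb{z}}^{\prime})^{\prime}(\pmb{z}_{t}^{\prime},\bar{\pmb{z}}^{\prime})]\Pi^{\prime}+\Lambda_{\alpha\alpha}+\Sigma_{\epsilon\epsilon}$ is the sum of a positive-semidefinite matrix and the positive-definite $\Lambda_{\alpha\alpha}+\Sigma_{\epsilon\epsilon}$; it can therefore replace the nondegeneracy of $\pmb{z}_{t}$ at several points.)

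The only tedium is the substitution and coefficient matching in Steps 1--2; the genuinely delicate point is justifying the ``separation of variation'' — that an a.s.-zero linear combination of $\mathbb{X}_{t}$ must annihilate the coefficient of each of $(\pmb{z}_{t},\bar{\pmb{z}})$, $\pmb{a}$, $\pmb{\epsilon}_{t}$ and the remaining $\pmb{\epsilon}_{s}$ separately — which is exactly where AS \ref{as:1}, AS \ref{as:3}, and hypotheses (iii)--(iv) enter. The final algebraic step, passing from $\pi^{\prime}\pmb{v}=0$ to $\pmb{v}=0$, is where the relevance condition (ii) is indispensable, and the whole argument tacitly needs $T\geq 2$.
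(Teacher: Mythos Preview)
Your argument is correct and takes a genuinely different route from the paper's. The paper first passes from $(\pmb{x}_t,\hat{\pmb{\alpha}},\hat{\pmb{\epsilon}}_t)$ to $(\pmb{x}_t,\pmb{\upsilon}_t,\hat{\pmb{\alpha}})$ by an invertible linear map, imposes a ``without loss of generality'' simplification $\bar\pi=0$ (so $\Pi=\pi$), and then argues that each of the three $3d_x\times d_x$ column blocks $\E[(\pmb{x}_t',\pmb{\upsilon}_t',\hat{\pmb{\alpha}}')'\pmb{x}_t']$, $\E[(\pmb{x}_t',\pmb{\upsilon}_t',\hat{\pmb{\alpha}}')'\pmb{\upsilon}_t']$, $\E[(\pmb{x}_t',\pmb{\upsilon}_t',\hat{\pmb{\alpha}}')'\hat{\pmb{\alpha}}']$ has rank $d_x$. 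That last inference, however, runs in the wrong direction: each block having rank $d_x$ does not imply the stacked $3d_x\times 3d_x$ Gram matrix has rank $3d_x$ (with $d_x=1$, taking $\pmb{x}_t=\pmb{\upsilon}_t=\hat\alpha$ already defeats the implication). Your approach --- write $\mathbb{X}_t$ as an explicit linear function of the mutually independent, nondegenerate primitives $(\pmb{z}_t,\bar{\pmb{z}})$, $\pmb{a}$, $\pmb{\epsilon}_1,\ldots,\pmb{\epsilon}_T$ and match coefficients --- avoids that gap and makes transparent where each hypothesis is used; your remark that $T\ge 2$ is tacitly needed is also a useful addition.

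One caveat on your final step: you invoke (ii) to go from $\pi'\pmb{v}=0$ to $\pmb{v}=0$, but (ii) as stated gives full row rank of $\Pi=(\pi,\bar\pi)$, not of $\pi$ alone, and your $\bar{\pmb{z}}$-equation becomes vacuous once $\pmb{c}_1+\pmb{c}_2=0$. In fact the conclusion genuinely needs $\rank(\pi)=d_x$: since $\hat{\pmb{\alpha}}+\hat{\pmb{\epsilon}}_t=\pmb{x}_t-\pi\pmb{z}_t$, any nonzero $\pmb{v}$ with $\pi'\pmb{v}=0$ produces the exact dependence $\pmb{v}'(\pmb{x}_t-\hat{\pmb{\alpha}}-\hat{\pmb{\epsilon}}_t)=0$. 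The paper handles this only through its $\bar\pi=0$ reduction (which collapses $\Pi$ to $\pi$); you should either note that (ii) is to be read as $\rank(\pi)=d_x$, or adopt the same reduction explicitly.
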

\begin{proft}
	\emph{Given in Appendix A.}
\end{proft}
Condition (ii) is the rank condition in \citetalias{imbens:2009}, and it underscores the necessity of exclusion restriction for identification. Conditions (i) to (iii) in Theorem \ref{lm:2} are standard conditions for identification of $ \pmb{\varphi} $ in the traditional control function methods, where the control function is the composite error, $\pmb{\upsilon}_{t} = \pmb{\alpha} + \pmb{\epsilon}_{t}= \pmb{x}_{t} - \pi\pmb{z}_{t}$.  Our conditioning variables, however, are $\hat{\pmb{\epsilon}}_{t}$ and $ \hat{\pmb{\alpha}}$, which are also functions of $ \Lambda_{\alpha\alpha} $ and $ \Sigma_{\epsilon\epsilon}$. Positive definiteness of $ \Lambda_{\alpha\alpha} $ and $ \Sigma_{\epsilon\epsilon} $ in condition (iv) helps establish the statement of the Theorem to be true. 

Appendix B discusses how one can use the method of generalized estimating equation (GEE), which can account for heteroscedasticity and serial dependence in the response outcome, to estimate $ \Theta_{2} $. Following Theorem \ref{lm:2}, since the components of $\pmb{x}_{t}$ are continuous, with scale and location normalization, $ \Theta_{2} $ can be estimated by semiparametric methods without specifying the distribution of $ \eta_{t} $ \citep[see][for a review of identification results for semiparametric binary choice models]{horowitz:2009}. 

Now, we have demonstrated that, with $\hat{\pmb{\alpha}}$ and $\hat{\pmb{\epsilon}}_{t}$ identified in Lemma \ref{lm:1},  AS \ref{as:1}, AS \ref{as:2}, condition (ii) of Lemma \ref{lm:0}, and AS \ref{as:4} can help us identify $\pmb{\varphi} $. Given $\hat{\pmb{\alpha}}$ and $\hat{\pmb{\epsilon}}_{t}$, the same, however, can be achieved through the following assumption: 	
\begin{asscf}\label{ascf:1}	
	(a)	$ \pmb{\zeta},\theta | X, Z, \hat{\pmb{\alpha}} \sim \pmb{\zeta},\theta | V, Z , \hat{\pmb{\alpha}}\sim \pmb{\zeta},\theta | V, \hat{\pmb{\alpha}} $,
	where $ V \equiv (\pmb{\upsilon}_{1}, \ldots, \pmb{\upsilon}_{T}) =  X-\pi Z $ and  $ \hat{\pmb{\alpha}} = \E(\pmb{\alpha}| X,Z)  $. 
	
	(b) $\zeta_{t},\theta \indep V_{-t} | \pmb{\upsilon}_{t}, \hat{\pmb{\alpha}} $. 
\end{asscf} 	
In part (a), the assumption is that the dependence of $(\theta,\pmb{\zeta})$ on $X$ and $Z$ is completely characterized by $V$ and $\hat{\pmb{\alpha}}$. Since $\hat{\pmb{\epsilon}}_{t}= \pmb{\upsilon}_{t} - \hat{\pmb{\alpha}} $, there is one-to-one mapping between $(\hat{\pmb{\epsilon}}_{t}, \hat{\pmb{\alpha}}) $ and $(\pmb{\upsilon}_{t} , \hat{\pmb{\alpha}}) $, and therefore the conditioning $ \sigma $-algebra, $\sigma(\hat{\pmb{\epsilon}}_{t}, \hat{\pmb{\alpha}}) $, is same as the $ \sigma $-algebra, $\sigma(\pmb{\upsilon}_{t}, \hat{\pmb{\alpha}}) $. By parts (a) and (b) therefore $ X $ is independent of $ (\zeta_{t},\theta) $ given $ (\hat{\pmb{\epsilon}}_{t}, \hat{\pmb{\alpha}}) $.\footnote{Part (b), however, as in AS \ref{as:2} (b), is to facilitate comparison with the traditional control function method and can be dropped.} In other words, in Assumption ACF \ref{ascf:1} we are proposing $\hat{\pmb{\epsilon}}_{t} $ and  $\hat{\pmb{\alpha}} $ as control functions for panel data. If we further assume that $\E(\zeta_{t}+\theta| \hat{\pmb{\epsilon}}_{t}, \hat{\pmb{\alpha}})$ is linear in $ \hat{\pmb{\epsilon}}_{t}$ and $ \hat{\pmb{\alpha}} $, and let $\eta_{t} = \zeta_{t}+\theta - \E(\zeta_{t}+\theta| \hat{\pmb{\epsilon}}_{t}, \hat{\pmb{\alpha}})$, we obtain equation (\ref{eq:7}). Given AS \ref{as:4}, as shown in Theorem \ref{lm:2}, we can then estimate the structural coefficients, $ \pmb{\varphi} $.\footnote{It would be worthwhile to extend the methodology to estimate the parameters of the dynamic binary choice model, $ y_{t}=1\{y^{*}_{t} = \gamma y_{t-1} + \pmb{x}_{t}^{\prime}\pmb{\varphi} + \theta + \zeta_{t}>0\} $. Any control function (CF) method for the dynamic model, in addition to endogeneity of $ \pmb{x}_{t} $, would have to account for the fact that in each period $ t $, the history, $ Y^{t-1} \equiv \{ y_{1}, \ldots, y_{t-1} \} $, up to $ t $ of the predetermined variable, $ y_{t} $, is correlated with the unobserved heterogeneity, $ \theta $ \citep[see][]{arellano:2003}. Recent papers on dynamic binary choice such as by \citet{honore:2000}, \citet{kitazawa:2021} and  \citet{khan:2020}, where the conditioning is on $\theta$,  have found novel ways such that either  $ \theta $ is eliminated or it plays no role in the identification of $ (\gamma , \pmb{\varphi}) $. If the common assumption in these papers that conditional on $ \theta $, the covariates, $ \pmb{x}_{t} $, are exogenous fails, then constructing CFs for the dynamic model could offer a solution to the problem of endogeneity. Besides, there would be a possibility for estimating partial effects (see subsection 2.2) as most dynamic models, because of the presence of the unobserved, $ \theta $, are unable to.}

We now compare the proposed control functions with the traditional control functions, and argue for the appropriateness of the proposed control functions in the context of panel data. Now, we have pointed out that conditioning on the proposed control functions, $\hat{\pmb{\epsilon}}_{t} $ and $ \hat{\pmb{\alpha}} $, is equivalent to conditioning on the traditional control function, $ \pmb{\upsilon}_{t}(\pmb{x}_{t},\pmb{z}_{t}) =\pmb{x}_{t} -\pi\pmb{z}_{t}$, \textit{and additionally} on individual specific information as summarized by $ \hat{\pmb{\alpha}}(X,Z) $. That is, in assuming that $ (\zeta_{t},\theta) \indep X|\hat{\pmb{\epsilon}}_{t}, \hat{\pmb{\alpha}} $, we are saying that no information about $ X $ is contained in $ (\zeta_{t},\theta) $ over and above that contained in $ (\hat{\pmb{\epsilon}}_{t}, \hat{\pmb{\alpha}}) $ or equivalently in $ (\pmb{\upsilon}_{t}, \hat{\pmb{\alpha}}) $. This, as we discuss in Remark \ref{rm:1} and Remark \ref{rm:2}, may not hold true if only $\pmb{\upsilon}_{t}$ is assumed to be the control function.

\begin{remark}\label{rm:1}
	When $ Z \nindep (\theta , \pmb{\alpha}) $, then the requirement of the traditional control function method that $ Z$ be independent of $ (\pmb{\upsilon}_{t},  \zeta_{t}+\theta) $ is violated and $\zeta_{t}+\theta \indep Z | \pmb{\upsilon}_{t} $ does not hold generally. In which case, Assumption ACF \ref{ascf:1} seems plausible as $\zeta_{t}+\theta  $ is mean independent of  $ (X, Z) $ given $(\hat{\pmb{\alpha}},  \hat{\pmb{\epsilon}}_{t})$.  This assumption is related to the dependence assumptions in \citetalias{altonji:2005}, \citet{bester1:2009} (BH) and \citetalias{hoderlein:2012}, where the distribution of unobserved effects depends on the
	observed variables only through certain function of the observed variables. These functions, as \citetalias{bester1:2009} argue, may be viewed as sufficient statistic.  \citetalias{altonji:2005} assume that $ (\zeta_{t},\theta)$ is independent of $ X $ given certain summary statistics such as the mean, $ T^{-1}\sum_{t=1}^{T} \pmb{x}_{t} $, or index functions of summary statistics, while in \citetalias{bester1:2009} these functions of observed variables are assumed to be unrestricted index functions. In our case, the control function, $(\hat{\pmb{\epsilon}}_{t}, \hat{\pmb{\alpha}})$, is motivated by the result that under certain restrictions, the mean of $\theta +\zeta_{t}$ given the histories, $ (X, Z) $, of the endogenous and the exogenous variables, depends on $ (X, Z )$ only through $ \hat{\pmb{\epsilon}}_{t} $ and $\hat{\pmb{\alpha}} $. Moreover, as  $(\hat{\pmb{\alpha}},\hat{\pmb{\epsilon}}_{t})$ consistently estimates  $ (\pmb{\alpha},\pmb{\epsilon}_{t})$ when $ T $ is large (see Remark \ref{rm:0}), it implies that for large $ T $, Assumptions ACF \ref{ascf:1} and AS \ref{as:2} are asymptotically equivalent.  
\end{remark}

\begin{remark}\label{rm:2} 
	When $ Z \indep (\theta, \pmb{\alpha}, \zeta_{t}, \pmb{\epsilon}_{t})$, as in the traditional control function approach, then $ (\zeta_{t},\theta) \indep Z|V $ holds true. Since $ V $ is invertible in $ X $ when $ Z $ is given, we have $ (\zeta_{t},\theta) | X, Z \sim (\zeta_{t},\theta) | V, Z \sim (\zeta_{t},\theta) | V $.  In panel data, therefore, when $ Z $ is independent of the unobserved heterogeneities,  $ V $ should be employed as the control function.
	
	Given that the unobserved heterogeneities, $(\theta, \pmb{\alpha})$, which represent unobserved, time-invariant attributes, such as preferences, technologies,
	or abilities, influence the choice of $ \pmb{x}_{t} $ in each time period, it is not only with $ \pmb{x}_{t} $ that the errors, $\theta+\zeta_{t}$, are correlated, but generally with the entire history, $ X $, of the endogenous variable. Moreover, since $ \pmb{x}_{t}$ is endogenous, due to potential feedback from $y_{t}$ to $\pmb{x}_{s}$ for $ s > t $, it is likely that the optimal choice of $\pmb{x}_{t}$ depends on $\zeta $ from the past, or more generally $\zeta $ from other time periods, which is likely to make $ \zeta_{t} $ and $ \pmb{x} $ from other time periods dependent.\footnote{For example, in the study of child labor in section 4, the endogenous variables, household income, amount of land owned by a household, and index of household ownership of productive farm assets, in each period will also depend on unobserved household characteristics such as parents' abilities, quality of land, or possibly other omitted variables fixed at the household level. Moreover, apart from contemporaneous shocks, $ (\zeta_{t}, \pmb{\epsilon}_{t})$, that affect the current choices of both $ \pmb{x}_{t} $ and $ y_{t} $,  there may be feedback from lagged values of $ \zeta $ or $ y $ to $ \pmb{x}_{t} $. In the study of the child labor, for example, current choices of labor supply can impact the future choices of endogenous variables mentioned above. It is thus possible that $(\theta, \zeta_{t}) $ and $ X $, as in the considered example of child labor, could be dependent.} If only $ \pmb{\upsilon}_{t} $ is employed as the control function, as it has been traditionally, then $ X $ may not be conditionally independent of $ (\zeta_{t},\theta) $.  Therefore, there will exist some partial correlation between  $ y_{t} $ and $ \pmb{x} $ from the other time periods if the dependency between the structural errors, $(\theta, \zeta_{t}) $, and the history,  $ X $, is not accounted for. Employing only $ \pmb{\upsilon}_{t} $ as the control function places a strong restriction on the dependence between $(\theta, \zeta_{t}) $ and $X$, which, as shown in Proposition \ref{prop:1}, is unlikely to hold.\footnote{See also section 3, Figure \ref{fig:fig2}, where through  numerical experiments we demonstrate the claims made in this Remark and Proposition \ref{prop:1}.} 		
\end{remark}

\begin{prop}\label{prop:1}
	Let  $ Z \indep (\theta, \pmb{\alpha}, \zeta_{t}, \pmb{\epsilon}_{t})$.  When $ (\theta, \zeta_{t}) $ and $ (\pmb{\alpha}, \pmb{\epsilon}_{t})$  are correlated, then $ \theta + \zeta_{t}\indep X | V$ whereas $\theta + \zeta_{t}\nindep X | \pmb{\upsilon}_{t}$, where  
	$ \pmb{\upsilon}_{t} =  \pmb{\alpha}+ \pmb{\epsilon}_{t} = \pmb{x}_{t}+ \pi\pmb{z}_{t} $ and $ V \equiv (\pmb{\upsilon}_{1}, \ldots, \pmb{\upsilon}_{T})$. 
\end{prop}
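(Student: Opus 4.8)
The plan is to treat the two assertions separately. Throughout I read the hypothesis $Z\indep(\theta,\pmb{\alpha},\zeta_t,\pmb{\epsilon}_t)$ in the sense of the traditional control‑function regime of Remark \ref{rm:2}, namely that $Z$ is independent of the entire collection $(\theta,\pmb{\zeta},\pmb{\alpha},\pmb{\epsilon})$ of heterogeneity terms (in particular of every $\pmb{\epsilon}_s$), and I use the reduced form written column by column as $\pmb{x}_s=\pi\pmb{z}_s+\pmb{\upsilon}_s$ with $\pmb{\upsilon}_s=\pmb{\alpha}+\pmb{\epsilon}_s$.

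\textbf{Claim 1: $\theta+\zeta_t\indep X\mid V$.} The key observation is that $\pmb{x}_s=\pi\pmb{z}_s+\pmb{\upsilon}_s$, so $X$ is a deterministic function of $(Z,V)$; hence, conditionally on $V$, $X$ is a function of $Z$ alone, and it suffices to prove $\theta+\zeta_t\indep Z\mid V$. Now $(\theta+\zeta_t,V)$ is a measurable function of $(\theta,\zeta_t,\pmb{\alpha},\pmb{\epsilon}_1,\dots,\pmb{\epsilon}_T)$, a collection that is independent of $Z$ by hypothesis, so $Z\indep(\theta+\zeta_t,V)$, and the elementary fact $A\indep(B,C)\Rightarrow A\indep B\mid C$ closes it. This half uses neither the correlation assumption nor AS \ref{as:3}.

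\textbf{Claim 2: $\theta+\zeta_t\nindep X\mid\pmb{\upsilon}_t$.} Here I would exhibit the failure of conditional independence via a non‑vanishing conditional covariance. Fix any $s\neq t$ (so $T\geq2$ is implicit). Because $\pmb{z}_s\indep(\theta+\zeta_t,\pmb{\upsilon}_t)$, the instrument term drops out and
\begin{align}\nonumber
\cov\!\big(\theta+\zeta_t,\ \pmb{x}_s\ \big|\ \pmb{\upsilon}_t\big)=\cov\!\big(\theta+\zeta_t,\ \pmb{\alpha}+\pmb{\epsilon}_s\ \big|\ \pmb{\alpha}+\pmb{\epsilon}_t\big).
\end{align}
Under AS \ref{as:3} (joint normality, i.i.d.\ $\pmb{\epsilon}_s$) and AS \ref{as:2}(b), the right‑hand side is constant in $\pmb{\upsilon}_t$ and, after the standard Gaussian conditioning computation, equals
\begin{align}\nonumber
\Sigma_{\theta\alpha}(\Lambda_{\alpha\alpha}+\Sigma_{\epsilon\epsilon})^{-1}\Sigma_{\epsilon\epsilon}-\Sigma_{\zeta\epsilon}(\Lambda_{\alpha\alpha}+\Sigma_{\epsilon\epsilon})^{-1}\Lambda_{\alpha\alpha},
\end{align}
with $\Sigma_{\theta\alpha}=\cov(\theta,\pmb{\alpha})$, $\Sigma_{\zeta\epsilon}=\cov(\zeta_t,\pmb{\epsilon}_t)$; in the scalar case this is $(\sigma_{\theta\alpha}\sigma_\epsilon^2-\sigma_{\zeta\epsilon}\sigma_\alpha^2)/(\sigma_\alpha^2+\sigma_\epsilon^2)$. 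Since $(\theta,\zeta_t)$ and $(\pmb{\alpha},\pmb{\epsilon}_t)$ are (nondegenerately) correlated, this matrix is nonzero except on the knife‑edge where $\Sigma_{\theta\alpha}$ and $\Sigma_{\zeta\epsilon}$ exactly balance, and a nonzero conditional covariance of $\theta+\zeta_t$ with one coordinate block of $X$ given $\pmb{\upsilon}_t$ rules out $\theta+\zeta_t\indep X\mid\pmb{\upsilon}_t$. The intuition is that $\pmb{\upsilon}_t=\pmb{\alpha}+\pmb{\epsilon}_t$ does not disentangle $\pmb{\alpha}$ from $\pmb{\epsilon}_t$, whereas the remaining periods $\pmb{x}_s$ carry residual information about $\pmb{\alpha}$, which is correlated with $\theta$.

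\textbf{Main obstacle.} The substantive work is Claim 2. First, the conclusion is generic rather than universal: the displayed conditional covariance vanishes on a measure‑zero set of covariance parameters, so the argument needs a nondegeneracy qualifier (or the phrasing "in general", consistent with Remark \ref{rm:2}). Second, upgrading "correlated" to "dependent" cleanly requires a distributional input; leaning on the Gaussian specification AS \ref{as:3} (extended to $\theta,\zeta_t$) makes conditional independence equivalent to the vanishing of the whole conditional covariance vector, so the single nonzero block is decisive. A minor point is that the proof of Claim 1 needs $Z$ independent of every $\pmb{\epsilon}_s$, not merely the contemporaneous one — the natural reading of the traditional control‑function hypothesis.
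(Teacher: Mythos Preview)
Your argument is correct and, for the first assertion, essentially identical to the paper's: both exploit that $X$ is a deterministic function of $(Z,V)$ so that $\sigma(X,V)=\sigma(\pi Z,V)$, and then use $Z\indep(\theta+\zeta_t,V)$ to conclude. For the second assertion the paper's proof follows the same logic as yours --- reducing $\E(f(r_t)\mid X,\pmb{\upsilon}_t)$ to $\E(f(r_t)\mid X_{-t},\pi\pmb{z}_t,\pmb{\upsilon}_t)$ and then noting that $r_t$ remains correlated with $\pmb{\upsilon}_s=\pmb{\alpha}+\pmb{\epsilon}_s$, $s\neq t$, conditional on $\pmb{\upsilon}_t$ --- but stops there and asserts the dependence informally. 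Your version goes further by actually computing the conditional covariance under the Gaussian specification, which makes the dependence explicit and, importantly, surfaces the knife-edge exception $\Sigma_{\theta\alpha}(\Lambda_{\alpha\alpha}+\Sigma_{\epsilon\epsilon})^{-1}\Sigma_{\epsilon\epsilon}=\Sigma_{\zeta\epsilon}(\Lambda_{\alpha\alpha}+\Sigma_{\epsilon\epsilon})^{-1}\Lambda_{\alpha\alpha}$ that the paper's phrasing implicitly excludes. Your self-flagged caveats (the genericity qualifier, the need for $Z$ to be independent of all $\pmb{\epsilon}_s$) are apt and apply equally to the paper's proof.
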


\begin{prof}
	
	\emph{Given in Appendix A. }
	
\end{prof}

Here, we would like to point that $ (\hat{\pmb{\alpha}}, \hat{\pmb{\epsilon}}_{t}) $  can be employed in semiparametric methods in \citetalias{blundell:2004} or in \citet{rothe:2009} to estimate $ \pmb{\varphi}$ and measures like the ASF semiparametrically. \citetalias{blundell:2004} extend the the matching estimator for the single-index models with exogenous variables to allow for control functions for handling endogeneity, whereas  \citeauthor{rothe:2009} develops a semiparametric maximum likelihood (SML) method for binary response model to account for endogeneity using control functions. These semiparametric methods do not require one to specify the conditional distribution of $\theta+\zeta_{t}$ given $\hat{\pmb{\epsilon}}_{t}$ and $\hat{\pmb{\alpha}}$. The methods, however, do require that, $ \pmb{z}=\tilde{\pmb{z}} $, contains an instrument that is continuous. If all instruments are discrete, the ``rank condition" in \citetalias{blundell:2004} and condition (ii) of Theorem 1 in \citeauthor{rothe:2009},\footnote{Since conditioning on $\hat{\pmb{\epsilon}}_{t} $ and $ \hat{\pmb{\alpha}} $  is equivalent to conditioning on $\pmb{\upsilon}_{t} =  \pmb{x}_{t} - \pi\pmb{z}_{t} $ and  $ \hat{\pmb{\alpha}} $, and if identification requires that conditional on the control variables, $\hat{\pmb{\epsilon}}_{t}$ and $ \hat{\pmb{\alpha}} $, -- hence, $\pmb{\upsilon}_{t}$ and $ \hat{\pmb{\alpha}} $ -- the vector $ \pmb{x}_{t} $ contains at least one, $ x^{1}_{t} $, continuously distributed component with non-zero coefficient, then it would be necessary that $ \pmb{z}_{t} $ contains a continuously distributed regressor.} necessary for identification, are violated. We do not pursue semiparametric estimation of binary choice models with the control functions developed in this paper any further. Semiparametric estimation and the large sample properties of the estimates are left for future research.

\citet{wooldridge:2015} in providing an overview of control function (CF) methods writes, ``in evaluating the scope of an estimation method, it is important to understand how it works in familiar settings, including cases when it is not necessarily needed." While noting that the standard CF method for cross sectional data when applied to linear models gives coefficients for the endogenous variables that are equal to the standard two-stage least squares (2SLS), \citeauthor{wooldridge:2015} points out certain advantages of the CF approach, such as providing a robust, regression based Hausman test of exogeneity, compared to the 2SLS approach. Before ending this subsection, we therefore show that using  $ \hat{\pmb{\alpha}} $ and $\hat{\pmb{\epsilon}}_{t} $ as additional covariates in linear panel data models is equivalent to estimating the models by a certain 2SLS method.  

Now, in linear panel data models,
\begin{align}\label{leq:0}
y_{t}  =  \pmb{x}_{t}^{\prime}\pmb{\varphi}  + \theta + \zeta_{t}, 
\end{align}
when instruments are correlated with time invariant heterogeneity, fixed effect two-stage least squares (FE2SLS), which employs time-variant instruments that are deviations from the group-mean, $\ddot{\pmb{z}}_{t} = \pmb{z}_{t} - \bar{\pmb{z}}$, is employed. An alternative approach \citep[see][chapter 11]{wooldridge:2010}, is to write $ \theta $ in equation (\ref{leq:0})
as $ \theta = \E(\theta|Z) + \tau $, specify $ \E(\theta|Z) $ as in \citeauthor{mundlak:1978}, and estimate the model by pooled 2SLS using $ ( \ddot{\pmb{z}}_{t}, \bar{\pmb{z}}) $ as instruments. 

By Assumptions AS \ref{as:1}, AS \ref{as:2}, AS \ref{as:3} and condition (ii) of Lemma \ref{lm:0}, we have $ \E(\theta| Z ) = \E(\E(\theta|\pmb{\alpha})| Z ) = \E( \pmb{\varphi}_{\alpha}\pmb{\alpha} | Z ) = \pmb{\varphi}_{\alpha}\bar{\pi}\bar{\pmb{z}}$. Thus we can write the model in equation (\ref{leq:0}) as
\begin{align}\label{leq:1}
y_{t}  =  \pmb{x}_{t}^{\prime}\pmb{\varphi}  + \pmb{\varphi}_{\alpha}\bar{\pi}\bar{\pmb{z}} +\tau + \zeta_{t},
\end{align} 
where the heterogeneity term, $ \tau + \zeta_{t} $, and $ X $ are dependent even though $ \tau + \zeta_{t} $ is mean independent of $Z$. Thus, we can estimate of $ \pmb{\varphi} $ in (\ref{leq:1}) by using instrument variables. Also, by Assumptions ACF \ref{ascf:1} and, with a slight abuse of notations, letting $\E(\zeta_{t}+\theta| \hat{\pmb{\epsilon}}_{t}, \hat{\pmb{\alpha}}) = \pmb{\varphi}_{\alpha}\hat{\pmb{\alpha}} + \pmb{\varphi}_{\epsilon}\hat{\pmb{\epsilon}}_{t}$, the linear model in equation (\ref{leq:0}) as can be written as 
\begin{align}\label{leq:2}
y_{t} =  \pmb{x}_{t}^{\prime}\pmb{\varphi} + \pmb{\varphi}_{\alpha}\hat{\pmb{\alpha}} + \pmb{\varphi}_{\epsilon}\hat{\pmb{\epsilon}}_{t} + \eta_{t}, 
\end{align}
where $\eta_{t} = \theta+\zeta_{t} - \E(\theta+\zeta_{t}|\hat{\pmb{\epsilon}}_{t}, \hat{\pmb{\alpha}})$. 
\begin{theo}\label{lm:3}
	Let the estimate of $\pmb{\varphi} $ in (\ref{leq:1}) by pooled 2SLS using $ ( \ddot{\pmb{z}}_{t} , \bar{\pmb{z}}) $ as additional instruments be denoted by  $ \hat{\pmb{\varphi}}_{IV} $ and let the estimate of the same obtained from estimating (\ref{leq:2}) by pooling the data be denoted by $ \hat{\pmb{\varphi}}_{CF} $, then $ \hat{\pmb{\varphi}}_{CF} =  \hat{\pmb{\varphi}}_{IV} $.  
\end{theo}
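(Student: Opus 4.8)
The plan is to show that both estimators solve the same normal equations, by exploiting the algebra of partitioned regression (Frisch--Waugh--Lovell) together with the explicit form of the control functions from Lemma \ref{lm:1}. First I would observe that, by Lemma \ref{lm:1}, $\hat{\pmb{\alpha}}$ is an affine function of $\bar{\pmb{z}}$ and of the within-group averages $\frac{1}{T}\sum_t(\pmb{x}_t-\pi\pmb{z}_t)$, while $\hat{\pmb{\epsilon}}_t = \pmb{\upsilon}_t - \hat{\pmb{\alpha}} = (\pmb{x}_t - \pi\pmb{z}_t) - \hat{\pmb{\alpha}}$. Hence, once $\Theta_1$ (in particular $\pi,\bar\pi,\Sigma_{\epsilon\epsilon},\Lambda_{\alpha\alpha}$) is treated as given/estimated in the first stage, the column space spanned by $(\pmb{x}_t,\hat{\pmb{\alpha}},\hat{\pmb{\epsilon}}_t)$ in the pooled regression (\ref{leq:2}) equals the column space spanned by $(\pmb{x}_t,\hat{\pmb{\alpha}},\pmb{\upsilon}_t)$, and since $\pmb{\upsilon}_t = \pmb{x}_t - \pi\pmb{z}_t$, the residual-generating projection in (\ref{leq:2}) is the same as that of a pooled regression of $y_t$ on $\pmb{x}_t$ with $\hat{\pmb{\alpha}}$ partialled out, i.e. with the individual-specific ``between'' information $(\hat{\pmb{\alpha}}_{FE},\bar{\pmb{z}})$ partialled out.

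Next I would set up the pooled 2SLS estimator $\hat{\pmb{\varphi}}_{IV}$ for (\ref{leq:1}) explicitly. The endogenous regressor is $\pmb{x}_t$, the included exogenous regressor is $\bar{\pmb{z}}$ (the Mundlak term), and the instrument list is $(\ddot{\pmb{z}}_t,\bar{\pmb{z}})$. By the standard 2SLS-as-control-function equivalence for linear models (the point \citeauthor{wooldridge:2015} makes), $\hat{\pmb{\varphi}}_{IV}$ equals the coefficient on $\pmb{x}_t$ in the pooled OLS regression of $y_t$ on $\pmb{x}_t$, $\bar{\pmb{z}}$, and the first-stage residuals from projecting $\pmb{x}_t$ on $(\ddot{\pmb{z}}_t,\bar{\pmb{z}})$. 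Those first-stage fitted values are exactly $\hat\pi\ddot{\pmb{z}}_t + (\text{linear in }\bar{\pmb{z}})$ with $\hat\pi$ the FE coefficient, so the first-stage residual equals $\pmb{x}_t - \hat\pi\ddot{\pmb{z}}_t - (\text{lin. in }\bar{\pmb z}) = \hat{\pmb{\epsilon}}_{t,FE} + (\text{lin. in }\bar{\pmb z})$, where $\hat{\pmb{\epsilon}}_{t,FE} = \pmb{x}_t - \hat\pi\pmb{z}_t - \hat{\pmb{\alpha}}_{FE}$. Thus the span of the regressors in the $\hat{\pmb{\varphi}}_{IV}$ control-function regression is the span of $\{\pmb{x}_t,\ \bar{\pmb{z}},\ \hat{\pmb{\epsilon}}_{t,FE}\}$, equivalently $\{\pmb{x}_t,\ \bar{\pmb{z}},\ \hat{\pmb{\alpha}}_{FE}\}$ (using $\hat{\pmb\epsilon}_{t,FE}+\hat{\pmb\alpha}_{FE} = \pmb x_t - \hat\pi\pmb z_t$).

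Finally I would match the two spans. Using the Lemma \ref{lm:1} / Remark \ref{rm:0} representation $\hat{\pmb{\alpha}} = \bar\pi\bar{\pmb z} + [\Sigma_{\epsilon\epsilon}^{-1}+\tfrac1T\Lambda_{\alpha\alpha}^{-1}]^{-1}\Sigma_{\epsilon\epsilon}^{-1}(\hat{\pmb\alpha}_{FE} - \bar\pi\bar{\pmb z})$, the pair $(\hat{\pmb\alpha},\bar{\pmb z})$ is an invertible linear transformation of the pair $(\hat{\pmb\alpha}_{FE},\bar{\pmb z})$ (invertibility holds because $[\Sigma_{\epsilon\epsilon}^{-1}+\tfrac1T\Lambda_{\alpha\alpha}^{-1}]^{-1}\Sigma_{\epsilon\epsilon}^{-1}$ is nonsingular by AS \ref{as:3}). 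Therefore the linear span of $\{\pmb x_t, \bar{\pmb z}, \hat{\pmb\alpha}_{FE}\}$ equals that of $\{\pmb x_t, \hat{\pmb\alpha}, \hat{\pmb\epsilon}_t\}$ (the latter because $\hat{\pmb\epsilon}_t=\pmb x_t-\hat\pi\pmb z_t-\hat{\pmb\alpha}$ and $\bar{\pmb z}$ is recoverable from $\hat{\pmb\alpha}$ and the within-averages, which are in turn spanned). By Frisch--Waugh--Lovell, the coefficient on $\pmb{x}_t$ in a pooled OLS regression depends only on this span of the \emph{other} regressors; since it is the same for (\ref{leq:2}) and for the $\hat{\pmb\varphi}_{IV}$ control-function regression, we conclude $\hat{\pmb\varphi}_{CF} = \hat{\pmb\varphi}_{IV}$.

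The main obstacle I anticipate is bookkeeping the exact linear maps so that the ``equal span'' claims are genuinely reversible transformations and not just inclusions --- in particular verifying that $\bar{\pmb z}$ (and hence everything linear in it) is recoverable within each regressor set, and handling the suppressed exogenous covariates $\pmb{w}_t$ and the constant consistently on both sides. Once the span equivalence is nailed down, the conclusion is immediate from partitioned regression, so no delicate limiting or probabilistic argument is needed --- this is a finite-sample algebraic identity.
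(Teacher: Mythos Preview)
Your equal-span argument has a genuine gap that cannot be repaired by tidier bookkeeping. In the scalar-$x$ case, write (using the paper's notation in its proof) $A=\hat{\bar\pi}\bar{\pmb z}$, $B=\bar{\mathbf R}=\bar x-\hat\pi_b\bar{\pmb z}$ (the between residual), and $C=\ddot{\mathbf R}=\ddot x-\hat\pi_w\ddot{\pmb z}$ (the within residual). Then the control-function regressors in (\ref{leq:2}), besides $\pmb x_t$, are $\hat{\pmb\alpha}=A+(1-\lambda)B$ and $\hat{\pmb\epsilon}_t=C+\lambda B$, while the 2SLS-as-control-function regressors, besides $\pmb x_t$, are the included exogenous $A$ and the first-stage residual $r_t=B+C$. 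Since $A,B,C$ are (generically) linearly independent $NT$-columns and $0<\lambda<1$, the plane $\operatorname{span}\{A+(1-\lambda)B,\,C+\lambda B\}$ is \emph{not} the plane $\operatorname{span}\{A,\,B+C\}$: $A$ lies in the latter but not the former unless $\lambda=1$. Hence Frisch--Waugh--Lovell applied to the two pooled regressions does \emph{not} immediately deliver the same coefficient on $\pmb x_t$. Your intermediate ``equivalently'' step---swapping $\hat{\pmb\epsilon}_{t,FE}$ for $\hat{\pmb\alpha}_{FE}$ in the presence of $\pmb x_t$ and $\bar{\pmb z}$---is also wrong for the same reason: the two differ by $\pmb x_t-\hat\pi\pmb z_t$, which contains the time-varying $\hat\pi\ddot{\pmb z}_t$ not recoverable from $\pmb x_t$ and $\bar{\pmb z}$.

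The paper's proof does not proceed by a global column-space identity. It splits the pooled objective via the orthogonal within/between projections $Q$ and $P$ and uses the crucial fact that $Q\hat{\pmb\alpha}=0$, so $\varphi_\alpha$ appears \emph{only} in the between part. In the between block the pair $(\varphi_\alpha,\varphi_\epsilon)$ is reparametrized through the invertible map $(\varphi_\alpha,\varphi_\epsilon)\mapsto(\varphi_\alpha,(1-\lambda)\varphi_\alpha+\lambda\varphi_\epsilon)$, which is why the between column-space equality $\operatorname{span}\{\bar{\mathbf Z}\bar\pi+(1-\lambda)\bar{\mathbf R},\,\lambda\bar{\mathbf R}\}=\operatorname{span}\{\bar{\mathbf Z}\bar\pi,\,\bar{\mathbf R}\}$ is the right object to check, not a pooled one. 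The within block already has the form $\ddot{\mathbf Y}=\ddot{\mathbf X}\varphi+\ddot{\mathbf R}\varphi_\epsilon+\ddot{\pmb\eta}$ on both sides. Recombining the two blocks and applying FWL (partialling out $\ddot{\mathbf R}$ and $\bar{\mathbf R}$ separately, since they are orthogonal) then yields the 2SLS formula. So the within/between decomposition---together with the asymmetry that $\varphi_\alpha$ is free in the between regression but absent from the within one---is the missing ingredient in your plan; a straight span comparison in the pooled design matrix cannot substitute for it.
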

\begin{proft}
	\emph{Given in Appendix A. }
\end{proft}
Estimating (\ref{leq:1}) by pooled 2SLS using $ ( \ddot{\pmb{z}}_{t} , \bar{\pmb{z}}) $ as additional instruments is similar to estimating by the error component two-stage least squares (EC2SLS) method proposed in \cite{baltagi:1981}, with the difference that the time-varying instruments are allowed to be correlated to the individual specific unobserved heterogeneity.


\subsubsection{Identification of Average Structural Function and Average Partial Effects}

With  $ \hat{\pmb{\alpha}} $ and $\hat{\pmb{\epsilon}}_{t} $ as control functions, we have  
\begin{align}\nonumber
\Pr(y_{t} = 1| X, \hat{\pmb{\alpha}}, \hat{\pmb{\epsilon}}_{t}) 
&=\int 1\{-(\zeta_{t}+\theta) < \pmb{x}^{\prime}_{t}\pmb{\varphi}\} dF(\zeta_{t}+ \theta | X, \hat{\pmb{\alpha}}, \hat{\pmb{\epsilon}}_{t})\\\nonumber
&=\int 1\{-(\zeta_{t}+\theta) < \pmb{x}^{\prime}_{t}\pmb{\varphi}\} dF(\zeta_{t}+ \theta | \hat{\pmb{\alpha}}, \hat{\pmb{\epsilon}}_{t})=F(\pmb{x}^{\prime}_{t}\pmb{\varphi}; \hat{\pmb{\alpha}}, \hat{\pmb{\epsilon}}_{t}),
\end{align}
where $ F(\pmb{x}^{\prime}_{t}\pmb{\varphi}; \hat{\pmb{\alpha}}, \hat{\pmb{\epsilon}}_{t}) $ is the conditional CDF of $\zeta_{t}+\theta $ given $ (\hat{\pmb{\alpha}}, \hat{\pmb{\epsilon}}_{t}) $ evaluated at $ \pmb{x}^{\prime}_{t}\pmb{\varphi} $.


Given a particular value $ \bar{\pmb{x}} $ of $ \pmb{x}_{t} $, averaging $ F(\bar{\pmb{x}}^{\prime}\pmb{\varphi}; \hat{\pmb{\alpha}}, \hat{\pmb{\epsilon}}_{t}) $ over $ (\hat{\pmb{\alpha}}, \hat{\pmb{\epsilon}}_{t}) $, we get the ASF: 
\begin{align}\nonumber
G(\bar{\pmb{x}})&=\int F(\bar{\pmb{x}}^{\prime}\pmb{\varphi}; \hat{\pmb{\alpha}}, \hat{\pmb{\epsilon}}_{t}) dF(\hat{\pmb{\alpha}},\hat{\pmb{\epsilon}}_{t}),\\ \nonumber
&=\int \biggr[\int 1\{\bar{\pmb{x}}^{\prime}\pmb{\varphi} + \theta + \zeta_{t} >0\}dF(\theta+ \zeta|\hat{\pmb{\alpha}},\hat{\pmb{\epsilon}}_{t})\biggr]dF(\hat{\pmb{\alpha}},\hat{\pmb{\epsilon}}_{t})\\\label{eq:11}
&=\E_{\theta+ \zeta} (1\{\bar{\pmb{x}}^{\prime}\pmb{\varphi} + \theta + \zeta_{t} >0\}).  
\end{align}
The APE of changing a variable, say $\bar{x}_{k}$, from $\bar{x}_{k}$ to $\bar{x}_{k} + \Delta_{k}$ can be obtained as
\begin{align}
\frac{\Delta G(\bar{\pmb{x}})}{\Delta_{k}}= \frac{G(\bar{\pmb{x}}_{-k},\bar{x}_{k}+\Delta_{k}) - G(\bar{\pmb{x}})}{\Delta_{k}}.&
\end{align}

To point-identify the ASF, $ G(\bar{\pmb{x}}) $,  it is required that $ F(\pmb{x}^{\prime}_{t}\pmb{\varphi}=\bar{\pmb{x}}^{\prime}\pmb{\varphi}; \hat{\pmb{\alpha}}= \bar{\pmb{\alpha}}, \hat{\pmb{\epsilon}}_{t} = \bar{\pmb{\epsilon}} ) $ be evaluated at all values of  $ (\bar{\pmb{\alpha}}, \bar{\pmb{\epsilon}}) $ in the support of the unconditional distribution of $ (\hat{\pmb{\alpha}}, \hat{\pmb{\epsilon}}) $. This requires that the support of the conditional distribution of $ (\hat{\pmb{\alpha}}, \hat{\pmb{\epsilon}}) $ conditional on $ \pmb{x}_{t}=\bar{\pmb{x}} $ be equal to the support of the unconditional distribution. It ensures that for any group of
individuals defined in terms of $ (\hat{\pmb{\alpha}}, \hat{\pmb{\epsilon}}_{t} ) $, at least some experience $ \pmb{x}_{t}=\bar{\pmb{x}} $. This is analogous to the overlap condition in the program
evaluation literature, where treatment is discrete.

For many triangular systems that employ the control function, $ \pmb{\upsilon}_{t} $, or $ \{F(\upsilon_{1,t}) \ldots F(\upsilon_{d_{x},t})\}^{\prime}$ -- where $ F(\upsilon_{1,t}) $, the CDF of $ \upsilon_{1,t} $,  is equal to $ F(x_{1,t}|\pmb{z}_{it}) $, the CDF of $x_{1,t} $ given $ \pmb{z}_{t} $ --  the requirement of common support necessitates that along with the rank condition (see theorem \ref{lm:2}) the set of instruments, $ \pmb{z}_{t} $, contains a continuous instrument with large support (this is discussed in \citetalias{imbens:2009}, \citeauthor{Florens:2008}, and in  \citet{blundell:2003}). In lemma \ref{lm:4} we show that when the instruments have small support -- that is, when instruments are binary, discrete, or continuous but without large support -- the support requirement for  $ G(\bar{\pmb{x}}) $ to be point-identified by the ``partial-mean" formulation in equation (\ref{eq:11}) is satisfied if $ \pmb{x} $ have a large support.
\begin{lem}\label{lm:4}
	If the endogenous variables, $ \pmb{x} $, have large a support, then under AS \ref{as:3}, the support of the conditional distribution of $\hat{\pmb{\alpha}}(X, Z, \Theta_{1})$ and $\hat{\pmb{\epsilon}}_{t}(X, Z, \Theta_{1})$, conditional on $\pmb{x}_{t}=\bar{\pmb{x}}$, is same as the support of their marginal distribution.
\end{lem}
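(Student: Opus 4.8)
The plan is to push the closed forms of Lemma~\ref{lm:1} through an affine change-of-variables argument. Writing $\pmb{\upsilon}_s=\pmb{x}_s-\pi\pmb{z}_s$, Lemma~\ref{lm:1} gives $\hat{\pmb{\alpha}}=\bar{\pi}\bar{\pmb{z}}+\Omega\Sigma_{\epsilon\epsilon}^{-1}\sum_{s=1}^{T}(\pmb{\upsilon}_s-\bar{\pi}\bar{\pmb{z}})$ and $\hat{\pmb{\epsilon}}_t=\pmb{\upsilon}_t-\hat{\pmb{\alpha}}$, so for fixed $Z$ the pair $(\hat{\pmb{\alpha}},\hat{\pmb{\epsilon}}_t)$ is an affine function of $(\pmb{x}_1,\dots,\pmb{x}_T)$. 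First I would check that this map has full rank $2d_x$: $\partial\hat{\pmb{\alpha}}/\partial\pmb{x}_s=\Omega\Sigma_{\epsilon\epsilon}^{-1}$ for every $s$, $\partial\hat{\pmb{\epsilon}}_t/\partial\pmb{x}_t=I_{d_x}-\Omega\Sigma_{\epsilon\epsilon}^{-1}$, and $\partial\hat{\pmb{\epsilon}}_t/\partial\pmb{x}_s=-\Omega\Sigma_{\epsilon\epsilon}^{-1}$ for $s\ne t$; restricting to the two blocks $(\pmb{x}_t,\pmb{x}_{s'})$, $s'\ne t$, the resulting $2d_x\times2d_x$ submatrix has Schur complement $-I_{d_x}$ and hence is nonsingular, using $T\ge2$ and the invertibility of $\Omega\Sigma_{\epsilon\epsilon}^{-1}$ under AS~\ref{as:3}.

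Second, the marginal support. Under AS~\ref{as:3}, conditionally on $Z$ the vector $(\pmb{x}_1,\dots,\pmb{x}_T)$ is Gaussian with covariance $(\mathbf{1}_T\mathbf{1}_T^{\prime})\otimes\Lambda_{\alpha\alpha}+I_T\otimes\Sigma_{\epsilon\epsilon}$, which is positive definite because its ``between'' and ``within'' blocks are $T\Lambda_{\alpha\alpha}+\Sigma_{\epsilon\epsilon}$ and $\Sigma_{\epsilon\epsilon}$. So, $\pmb{x}$ having large support (equivalently, this covariance being full rank), the support of $(\pmb{x}_1,\dots,\pmb{x}_T)\mid Z$ is all of $\mathbb{R}^{d_xT}$, and by the surjectivity established above $(\hat{\pmb{\alpha}},\hat{\pmb{\epsilon}}_t)$ then has support $\mathbb{R}^{2d_x}$ given $Z$, hence also unconditionally. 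Since the support of any conditional distribution is contained in that of the marginal, it remains only to show the reverse inclusion after conditioning on $\pmb{x}_t=\bar{\pmb{x}}$.

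For that, fix $\pmb{x}_t=\bar{\pmb{x}}$. Conditionally on $\pmb{x}_t=\bar{\pmb{x}}$ and $Z$, the residual vector $(\pmb{x}_s)_{s\ne t}$ is still Gaussian with a positive-definite covariance — the Schur complement of the $\pmb{x}_t$ block of the matrix above — hence still has full support $\mathbb{R}^{d_x(T-1)}$; since $\hat{\pmb{\alpha}}$ depends on it through $\Omega\Sigma_{\epsilon\epsilon}^{-1}\sum_{s\ne t}\pmb{x}_s$ with $\Omega\Sigma_{\epsilon\epsilon}^{-1}$ invertible, $\hat{\pmb{\alpha}}$ ranges over all of $\mathbb{R}^{d_x}$ with $\pmb{x}_t$ held at $\bar{\pmb{x}}$, and $\hat{\pmb{\epsilon}}_t=(\bar{\pmb{x}}-\pi\pmb{z}_t)-\hat{\pmb{\alpha}}$ moves with it. Combining this with the variation of $\pmb{z}_t$ over which the partial mean in (\ref{eq:11}) averages — noting that $\pmb{x}_t$ has positive density near $\bar{\pmb{x}}$ for every value of the instruments, so conditioning on $\pmb{x}_t=\bar{\pmb{x}}$ does not truncate $\pmb{z}_t$ — recovers the full $2d_x$ directions and shows the conditional support again equals $\mathbb{R}^{2d_x}$.

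The step I expect to be the main obstacle is precisely this last one. Because $\hat{\pmb{\alpha}}+\hat{\pmb{\epsilon}}_t=\pmb{x}_t-\pi\pmb{z}_t$, once $\pmb{x}_t$ is fixed one direction of $(\hat{\pmb{\alpha}},\hat{\pmb{\epsilon}}_t)$ is carried only by $\pi\pmb{z}_t$, so the argument must cleanly separate (i) the free, full-support variation of $(\pmb{x}_s)_{s\ne t}$ — supplied by the large support of $\pmb{x}$ together with the positive-definiteness in AS~\ref{as:3} — which sweeps $\hat{\pmb{\alpha}}$ across $\mathbb{R}^{d_x}$, from (ii) the residual variation entering through $\pmb{z}_t$, and must verify that conditioning on $\pmb{x}_t=\bar{\pmb{x}}$ leaves both intact. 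I would also make explicit throughout that the argument relies on $T\ge2$ (a genuine panel) and on the nonsingularity of $\Sigma_{\epsilon\epsilon}$ and $\Lambda_{\alpha\alpha}$ guaranteed by AS~\ref{as:3}.
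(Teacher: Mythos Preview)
Your argument aims at the \emph{joint} conditional support of $(\hat{\pmb{\alpha}},\hat{\pmb{\epsilon}}_{t})$ being all of $\mathbb{R}^{2d_{x}}$, and that is where it breaks down. You correctly isolate the constraint $\hat{\pmb{\alpha}}+\hat{\pmb{\epsilon}}_{t}=\pmb{x}_{t}-\pi\pmb{z}_{t}$, but your proposed fix---using the remaining variation in $\pmb{z}_{t}$ to fill the missing $d_{x}$ directions---contradicts the very setting the lemma is designed for. The point of the lemma (see the paragraph preceding it) is that the support condition holds \emph{even when the instruments have small support}. With, say, binary $\pmb{z}_{t}$, conditioning on $\pmb{x}_{t}=\bar{\pmb{x}}$ forces $(\hat{\pmb{\alpha}},\hat{\pmb{\epsilon}}_{t})$ onto a finite union of $d_{x}$-dimensional affine subspaces, which is strictly smaller than $\mathbb{R}^{2d_{x}}$. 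So the joint statement you are trying to prove is simply not available when instruments are discrete; step~(ii) of your final paragraph cannot be salvaged there.

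The paper's own proof is much leaner and reads the lemma as two \emph{separate} support statements, one for $\hat{\pmb{\alpha}}$ and one for $\hat{\pmb{\epsilon}}_{t}$. It writes out the closed forms, notes that $\varOmega=[T\Sigma_{\epsilon\epsilon}^{-1}+\Lambda_{\alpha\alpha}^{-1}]^{-1}\Sigma_{\epsilon\epsilon}^{-1}$ is nonsingular, and observes that once $\pmb{x}_{t}=\bar{\pmb{x}}$ is fixed the remaining $\pmb{x}_{s}$, $s\neq t$, still range over $\mathbb{R}^{d_{x}}$, which already drives each of $\hat{\pmb{\alpha}}$ and $\hat{\pmb{\epsilon}}_{t}$ individually over all of $\mathbb{R}^{d_{x}}$---no appeal to $\pmb{z}_{t}$ is needed. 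Your rank calculation and Gaussian Schur-complement machinery are correct but unnecessary for that version; the essential input is only $T\ge 2$, large support of $\pmb{x}$, and invertibility of $\varOmega$ under AS~\ref{as:3}. If you want to retain something from your approach, the clean observation to keep is that $\hat{\pmb{\alpha}}=\mathrm{const}+\varOmega\sum_{s\ne t}\pmb{x}_{s}$ and $\hat{\pmb{\epsilon}}_{t}=\mathrm{const}-\varOmega\sum_{s\ne t}\pmb{x}_{s}$ once $\pmb{x}_{t}$ and $Z$ are held fixed, from which the two individual full-support claims are immediate.
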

\begin{profl}
	\emph{Given in Appendix A.}
\end{profl}

In our approach, the control functions, $\hat{\pmb{\epsilon}}_{t}$ and $\hat{\pmb{\alpha}}$, are smooth, unbounded functions of $\pmb{x}_{t}$'s, $ t\in\{1,\ldots,T\} $. Therefore, when $\pmb{x}$ is continuous with a large support, because the $\pmb{x}_{s}$'s, $s \neq t$, are unrestricted  the ranges of $\hat{\pmb{\alpha}}$ and $\hat{\pmb{\epsilon}}_{t}= \pmb{x}_{t}- \pi\pmb{z}_{t}- \hat{\pmb{\alpha}}$ conditional on $\pmb{x}_{t}=\bar{\pmb{x}} $ do not depend on $ \pmb{x}_{t} $. Since the result does not rely on any kind of restriction on $\pmb{z}_{t}$'s support, our method circumvents the need to have a continuous instrument with large support to point-identify the ASF and/or the APEs when some of the $\pmb{x}$'s have large supports.

Before proceeding further, we note that these results could be useful for computing the quantile structural function (QSF), as in \citetalias{imbens:2009}, for the kind of triangular set-ups considered in \citet{blundell:2003}, where the structural equation is nonseparable in errors, but additively separable in the reduced form. When the nonseparable structural and reduced form equations are strictly increasing in their respective scalar errors, \citet{dhault:2015} and \citet{torgovitsky:2015} show that the structural function, $y_{t}= g(\pmb{x}_{t},\varepsilon_{t}) $, is point-identified with discrete instruments. The key observation in both the papers is that establishing point identification is tantamount to solving a functional fixed point problem. While the papers differ in their approach to establishing sufficient conditions under which the fixed point problem admits a unique solution, they are able to avoid the partial-mean formulation in equation (\ref{eq:11}), or in equation (6) of \citetalias{imbens:2009}'s, to identify the QSF, which in their case is same as the structural function. Since $ g(\pmb{x}_{t},\varepsilon_{t}) $ is required to be strictly monotonic in $ \varepsilon_{t} $, these methods, however, are not suitable for identification in discrete choice models.

When the support condition in lemma \ref{lm:4} is not satisfied, one can establish bounds on the ASF and the APE's. Let $ \mathcal{A} $ be the unconditional support of $ (\hat{\pmb{\alpha}},\hat{\pmb{\epsilon}}_{t}) $ and $ \mathcal{A}(\bar{\pmb{x}}) \equiv \{ \hat{\pmb{\alpha}}, \hat{\pmb{\epsilon}}_{t} : f(\hat{\pmb{\alpha}},\hat{\pmb{\epsilon}}_{t} | \bar{\pmb{x}} )>0\} $ be the support of $ (\hat{\pmb{\alpha}},\hat{\pmb{\epsilon}}_{t}) $ conditional on $\bar{\pmb{x}}$. When the support of $\pmb{x}$ is bounded and the instruments have a small support then $ \mathcal{A} \neq \mathcal{A}(\bar{\pmb{x}}) $. Now, let 	\begin{align}\label{eq:17}
\tilde{G}(\bar{\pmb{x}})=\int_{\mathcal{A}(\bar{\pmb{x}})} F(\bar{\pmb{x}}^{\prime}\pmb{\varphi}; \hat{\pmb{\alpha}}, \hat{\pmb{\epsilon}}_{t}) dF(\hat{\pmb{\alpha}},\hat{\pmb{\epsilon}}_{t}) 
\end{align} 
be the identified object and let $ P(\bar{\pmb{x}})=\int_{\mathcal{A} \cap \mathcal{A}(\bar{\pmb{x}})^{c}} dF(\hat{\pmb{\alpha}},\hat{\pmb{\epsilon}}_{t})$. Since	\begin{align}\nonumber
G(\pmb{x}_{t})&  = \tilde{G}(\bar{\pmb{x}}) + \int_{\mathcal{A} \cap \mathcal{A}(\bar{\pmb{x}})^{c}} F(\bar{\pmb{x}}^{\prime}\pmb{\varphi}; \hat{\pmb{\alpha}}, \hat{\pmb{\epsilon}}_{t}) dF(\hat{\pmb{\alpha}},\hat{\pmb{\epsilon}}_{t}) 
\end{align} 
and since $ 0 \leq F(\bar{\pmb{x}}^{\prime}\pmb{\varphi}; \hat{\pmb{\alpha}}, \hat{\pmb{\epsilon}}_{t}) \leq 1 $, the above equation implies that   $G(\bar{\pmb{x}})\in [ \tilde{G}(\bar{\pmb{x}}), \tilde{G}(\bar{\pmb{x}}) + P(\bar{\pmb{x}}) ]$; that is,  $ G(\bar{\pmb{x}}) $ is set-identified. The bounds on $ G(\bar{\pmb{x}}) $ are sharp since there are no restrictions on $\E(y_{t}|\pmb{x}_{t}=\bar{\pmb{x}}, \hat{\pmb{\alpha}}, \hat{\pmb{\epsilon}}_{t} )=F(\bar{\pmb{x}}^{\prime}\pmb{\varphi}; \hat{\pmb{\alpha}}, \hat{\pmb{\epsilon}}_{t}) $ imposed by the data.

It follows then that the APEs are also set-identified when support requirement in lemma \ref{lm:4} is not met. To derive bounds for the APE of changing $ x_{k} $ from $ \bar{x}_{k} $ to $ \bar{x}_{k}+\Delta_{k} $,  let us first denote $ (\bar{\pmb{x}}^{\prime}_{-k}, \bar{x}_{k}+\Delta_{k})^{\prime} $ by $ \bar{\pmb{x}}_{\Delta k} $. Since the ASF, $ G(\bar{\pmb{x}}_{\Delta k}) $, at $\bar{\pmb{x}}_{\Delta k} $  is partially identified, where $ G(\bar{\pmb{x}}_{\Delta k})\in [ \tilde{G}(\bar{\pmb{x}}_{\Delta k}), \tilde{G}(\bar{\pmb{x}}_{\Delta k}) + P(\bar{\pmb{x}}_{\Delta k}) ]  $, the APE of $ x_{k} $ at $ \bar{\pmb{x}} $, $ \Delta G(\bar{\pmb{x}})/\Delta_{k} $, lies in the interval,
\begin{align}\label{eq:16}
\biggr[ \frac{\tilde{G}(\bar{\pmb{x}}_{\Delta k}) - \tilde{G}(\bar{\pmb{x}}) - P(\bar{\pmb{x}})}{\Delta_{k}}, \frac{\tilde{G}(\bar{\pmb{x}}_{\Delta k}) + P(\bar{\pmb{x}}_{\Delta k}) - \tilde{G}(\bar{\pmb{x}})}{\Delta_{k}}  \biggr], 
\end{align}
where the sharpness of the bounds on APE derives from that of the bounds on ASF.

Once we have the consistent estimates, $\hat{\Theta}_{2}$, of $\Theta_{2}$, to estimate the bounds on APE of a variable, $ x_{k} $,  we first, as in \citetalias{imbens:2009}, estimate the support of the estimates, $(\hat{\hat{\pmb{\alpha}}}_{i},\hat{\hat{\pmb{\epsilon}}}_{it})$,  given $ \bar{\pmb{x}} $ as \begin{align}\nonumber
\hat{\mathcal{A}}(\bar{\pmb{x}}) = \{ \hat{\hat{\pmb{\alpha}}}_{i}, \hat{\hat{\pmb{\epsilon}}}_{it}: \hat{f}(\hat{\hat{\pmb{\alpha}}}_{i},\hat{\hat{\pmb{\epsilon}}}_{it} | \bar{\pmb{x}} )\geq\delta(\bar{p}), (\hat{\hat{\pmb{\alpha}}}_{i}, \hat{\hat{\pmb{\epsilon}}}_{it}) \in \hat{\mathcal{A}} \}, 
\end{align} where  $ \hat{f}(\hat{\hat{\pmb{\alpha}}}_{i},\hat{\hat{\pmb{\epsilon}}}_{it} | \bar{\pmb{x}} ) $, which is the estimate of the conditional density of $(\hat{\pmb{\alpha}}_{i},\hat{\pmb{\epsilon}}_{it})$  given $ \bar{\pmb{x}} $, is obtained by employing the method of estimating the conditional density function in \citet{hall:2004}\footnote{R's `np' package developed by \citet{hayfield:2008}  implements the method. The package's `npcdens' function computes kernel conditional density estimates of $ p $ variables conditional on $ q $ variables. }.  $\hat{\mathcal{A}}$ is an estimator of the support, $\mathcal{A}$,  containing all $(\hat{\hat{\pmb{\alpha}}}_{i},\hat{\hat{\pmb{\epsilon}}}_{it})$. In the above,  $\delta(\bar{p})$ is the trimming parameter and, as discussed in \citet{cadre:2013}, is obtained as a solution to the following equation:
\begin{align}\nonumber
\int_{\{\hat{f}\geq\delta\}} \hat{f}(\hat{\hat{\pmb{\alpha}}}_{i},\hat{\hat{\pmb{\epsilon}}}_{it} | \bar{\pmb{x}} ) =\bar{p},  \text{where $ \bar{p} $ is a fixed probability level.}
\end{align}
When $ \bar{p} $ is close to 1, the upper level set, $ \{ \hat{\hat{\pmb{\alpha}}}_{i}, \hat{\hat{\pmb{\epsilon}}}_{it}: \hat{f}(\hat{\hat{\pmb{\alpha}}}_{i},\hat{\hat{\pmb{\epsilon}}}_{it} | \bar{\pmb{x}} )\geq\delta(\bar{p}) \} $, is close to the support of the conditional distribution. 

To estimate $ \delta(\bar{p}) $, let 
\begin{align}\nonumber
\hat{H}(\gamma) = \frac{1}{NT}\sum_{i,t}1\{\hat{f}(\hat{\hat{\pmb{\alpha}}}_{i},\hat{\hat{\pmb{\epsilon}}}_{it} | \bar{\pmb{x}} )\leq \gamma \} \text{ be the estimate of } H(\gamma) = \pr(f(\hat{\pmb{\alpha}}_{i},\hat{\pmb{\epsilon}}_{it} | \bar{\pmb{x}} )\leq \gamma), 
\end{align}
and let the estimate of $ (1-\bar{p}) $-quantile of the law of $ f(\hat{\pmb{\alpha}}_{i},\hat{\pmb{\epsilon}}_{it} | \bar{\pmb{x}} ) $ be $ \gamma(\bar{p}) = \inf\{\gamma\in\mathbb{R}: \hat{H}(\gamma)\geq 1- \bar{p}\} $. $ \gamma(\bar{p}) $ can be  computed by considering the order statistic induced by the sample: $ \hat{f}(\hat{\hat{\pmb{\alpha}}}_{1},\hat{\hat{\pmb{\epsilon}}}_{1,1} | \bar{\pmb{x}} ), \ldots, \hat{f}(\hat{\hat{\pmb{\alpha}}}_{N},\hat{\hat{\pmb{\epsilon}}}_{N,T} | \bar{\pmb{x}} ) $. \citeauthor{cadre:2013} note that whenever $ H(\gamma) $ is continuous at $ \gamma(\bar{p}) $, then $ \delta(\bar{p})=\gamma(\bar{p}) $. Following \citetalias{imbens:2009}, for the application in section 4, we set $\bar{p} =0.975 $.

Given the estimate $ \hat{\mathcal{A}}(\bar{\pmb{x}}) $, we can estimate $ \tilde{G}(.) $ and $ P(.)$ in (\ref{eq:17}) at $ \bar{\pmb{x}} $ as
\begin{align}\nonumber
&\hat{\tilde{G}}(\bar{\pmb{x}})=\frac{1}{NT}\sum_{i,t}\Phi(\bar{\pmb{x}}^{\prime}\hat{\pmb{\varphi}}  +\hat{\pmb{\varphi}}_{\alpha}\hat{\hat{\pmb{\alpha}}}_{i}
+\hat{\pmb{\varphi}}_{\epsilon}\hat{\hat{\pmb{\epsilon}}}_{it})1[(\hat{\hat{\pmb{\alpha}}}_{i},
\hat{\hat{\pmb{\epsilon}}}_{it}) \in \hat{\mathcal{A}}(\bar{\pmb{x}})] \text{ and } \\ \label{eq:20}	 &\hat{P}(\bar{\pmb{x}})=\frac{1}{NT}\sum_{i,t}1[(\hat{\hat{\pmb{\alpha}}}_{i},
\hat{\hat{\pmb{\epsilon}}}_{it}) \notin \hat{\mathcal{A}}(\bar{\pmb{x}})] \text{ respectively. }
\end{align}
Now that we can estimate $ \tilde{G}(.) $ and $ P(.)$ at any $\pmb{x}$, the bounds on APE in (\ref{eq:16}), too, can be computed.

In Appendix D we derive the asymptotic covariance matrix of the second-stage coefficient estimates when the first stage estimation involves estimating a system of regression using the method in \cite{biorn:2004}. Given the covariance matrix of the second-stage coefficient estimates, we also derive the confidence intervals (CIs) proposed in \citet{imbens:2004} for the partially identified APEs. 

However, first, because the expressions needed to compute the covariance matrices might be computationally involved, and secondly, because new expressions for the covariance matrix of the second-stage coefficient estimates will have to be derived when a different estimator for the first stage reduced form is employed,  we suggest that bootstrapping procedure be employed to approximate the variance of the estimated coefficient. To obtain bootstrap standard errors for control function methods, both parts of the estimation are included for every bootstrap sample \citep[see][]{wooldridge:2015}, where resampling, as in \citetalias{papke:2008}, can be done at the level of cross-sectional unit.

\section{Monte Carlo Experiments}

In this section we discuss the results of the Monte Carlo (MC) experiments, which we conduct to analyze the finite sample behaviour of our model and compare the estimates of APEs from ours and alternative estimators to the true measures of the APEs. Since we want to compare the performance of our estimator to the performances of alternative estimators with setups similar to ours, such as that of \citetalias{papke:2008}'s, which has a single endogenous regressor, we first conduct the simulation exercise with one endogenous variable, $ x $. And since our method allows for multiple endogenous regressors, we also experiment with two endogenous regressors, $ (x_{1}, x_{2}) $.     

In the first simulation exercise, we consider the following data generating process (DGP):
\begin{align}\label{eq:21}
& y_{it} = 1\{\varphi x_{it}  + \theta_{i} + \zeta_{it}>0\} \text{ and 0 otherwise, where }\\\label{eq:22}
& x_{it}=  \pi z_{it} + \alpha_{i} + \epsilon_{it}, i = 1,\ldots, n, t = 1, \ldots, 5,
\end{align}
and where $ z_{it} $ is the instrument. We assume that  $ \varphi=-1  $ and that $ \pi = 1.5 $.  We allow the individual specific effects $ \alpha_{i} $ and $ \theta_{i} $ to be correlated with the vector of instruments, $Z_{i} = ( z_{i1}, \ldots, z_{i5} )^{\prime} $. The $ z_{it} $'s are i.i.d and marginally distributed as $ \N[0,\sigma^{2}_{z} ] $, where  $ \sigma_{z} = 5 $.  The variables, $ Z_{i} $, $ \alpha_{i}, $ and $ \theta_{i} $, are drawn from the following distribution: $ (Z^{\prime}_{i}, \alpha_{i}, \theta_{i})^{\prime} \sim \N\begin{bmatrix} 0,  \Sigma_{z\alpha\theta} \end{bmatrix}$, where $ \sigma_{\alpha} =3 $,  $ \sigma_{\theta} = 4 $,  $ \rho_{z\alpha} = 0.4 $,  $ \rho_{z\theta} = 0.2 $, and $ \rho_{\alpha\theta} = 0.5 $. The above choice of correlation coefficients ensures that, conditional on  $ \alpha_{i} $, the conditional correlation between $z_{it}$ and $ \theta_{i} $, $ \rho_{z\theta|\alpha} = \rho_{z\theta} - \rho_{z\alpha}\rho_{\alpha\theta} = 0 $, which, in this case, also implies that conditional on  $ \alpha_{i} $,  $ \theta_{i} \indep  Z_{i} | \alpha_{i} $.

In accordance with assumption AS \ref{as:1}, we assume that $ (\pmb{\zeta}_{i}, \pmb{\epsilon}_{i}) \indep  (Z^{\prime}_{i}, \alpha_{i}, \theta_{i})^{\prime} $, and draw $ (\zeta_{it}, \epsilon_{it}) $ from $ \N\begin{bmatrix} 0,  \Sigma_{\zeta\epsilon} \end{bmatrix}$, where the elements $ \sigma^{2}_{\zeta} $, $ \sigma^{2}_{\epsilon} $, and $ \rho_{\zeta \epsilon} $ of $ \Sigma_{\zeta\epsilon} $ are assumed as $ \sigma^{2}_{\zeta} = \sigma^{2}_{\epsilon} =1$ and $ \rho_{\zeta \epsilon} =0.75 $. The DGP assumptions, $ \theta_{i} \indep  Z_{i} | \alpha_{i} $ and $ (Z^{\prime}_{i}, \alpha_{i}, \theta_{i})^{\prime} \indep (\pmb{\zeta}_{i}, \pmb{\epsilon}_{i})$,  together satisfy AS \ref{as:2} (a) \footnote{Given the DGP assumptions, it can be verified that $ \pmb{\zeta}_{i} \indep  Z_{i} | \alpha_{i},  \pmb{\epsilon}_{i}, \theta_{i} $ and $ \theta_{i} \indep  Z_{i} | \alpha_{i},  \pmb{\epsilon}_{i} $; these two then imply that $ \theta_{i},\pmb{\zeta}_{i} \indep  Z_{i} | \alpha_{i}, \pmb{\epsilon}_{i} $ or equivalently $ \theta_{i}, \pmb{\zeta}_{i} \indep  X_{i} | \alpha_{i},  \pmb{\epsilon}_{i}$.}, and since $ (\zeta_{it}, \epsilon_{it})$ are i.i.d., they also satisfy AS \ref{as:2} (b).

From this DGP we generate $ (Z^{\prime}_{i}, \alpha_{i}, \theta_{i})^{\prime} $ and $ (\zeta_{it}, \epsilon_{it})$ of varying size, $ n $, with $ t $ fixed at $t=5$.  We then discretized $z_{it}$ to take value 1 if $z_{it}>0$  and 0 otherwise. Having generated $ (Z^{\prime}_{i}, \alpha_{i}, \theta_{i})^{\prime} $ and $ (\zeta_{it}, \epsilon_{it})$, we generate $ x_{it} $ according to (\ref{eq:22}) and then  $ y_{it} $ according to (\ref{eq:21}). 


We have showed that by estimating 
\begin{align}\label{eq:25}
y_{it}=1\{ \varphi x_{it} + \varphi_{\alpha}\hat{\alpha}_{i} +\varphi_{\epsilon}\hat{\epsilon}_{it}+ \eta_{it}>0\},
\end{align} 
where $ \hat{\alpha}_{i} $ and $\hat{\epsilon}_{it} $ are the control variables, as a probit model we can obtain consistent estimates of the ASF and the APE. The control variables, $ \hat{\alpha}_{i} $ and $\hat{\epsilon}_{it} $, are obtained from the estimates of the reduced form equation (\ref{eq:22}), augmented with $ \bar{\pi}\bar{z}_{i} = \bar{\pi}T^{-1}\sum_{t=1}^{T}z_{it}$, which is estimated as a random effect model by the method of MLE. The APE at $ \bar{x} $ after estimating equation (\ref{eq:25}) could be obtained by averaging  
\begin{align}\label{eq:31}
\frac{1}{\Delta x}\Phi\biggr(\frac{\varphi x_{it}   + \varphi_{\alpha}\hat{\alpha}_{i} +\varphi_{\epsilon}\hat{\epsilon}_{it}}{\sigma_{\eta}}\biggr)
\end{align} 
over $\hat{\alpha} $ and $ \hat{\epsilon} $ at $x_{it} = \bar{x} + \Delta x$ and $x_{it} = \bar{x}$ and taking the difference. In the above,  $ \sigma^{2}_{\eta} $ is the variance of $ \eta_{it} $ in equation (\ref{eq:25}) and $ \Phi $ is the standard cumulative normal density function. 

Now, while in practice the heterogeneity terms, $(\theta_{i} ,\zeta_{it}) $ and  $(\alpha_{i} ,\epsilon_{it}) $, are unobserved, in MC experiments we do know what these values are. By averaging  $1\{x_{it}\varphi + \theta_{i} + \zeta_{it} >0\}  $ over  $(\theta_{i} ,\zeta_{it})$ at $ x_{it}= \bar{x} $ to obtain $  G(\bar{x}) $ and the same at $ x_{it} = \bar{x} + \Delta x $ to obtain $ G(\bar{x} + \Delta x ) $, we could compute the true measure of APE, $\frac{\partial G(x_{it}) }{\partial x}$, at $ x_{it} =\bar{x} $ by computing $\frac{G(\bar{x}+\Delta x) - G(\bar{x})}{\Delta x} $. For the exercise, we chose  $ \bar{x} = 1 $ and $ \Delta x =0.05 $. Since we average over realizations of $(\theta_{i} ,\zeta_{it})$, there is some variability in the values of $ \frac{\partial G(\bar{x}) }{\partial x}$ over the replications; the average over the replications for every sample size is reported in the tables containing the results. For notational convenience we will denote the true APE by  $ \frac{\partial G(\bar{x}) }{\partial x} $.  Estimates of APE from any of the model considered in this section will be denoted by $ \widehat{\frac{\partial G(\bar{x}) }{\partial x}} $.  

One of the alternative estimators, which has its set-up similar to ours is the method proposed by \citetalias{papke:2008}.  To address the issue of endogeneity, \citetalias{papke:2008} also propose a two-step control function method. They first assume that $  \theta_{i} = \E(\theta_{i}| Z_{i})+\tau_{i} = \bar{\pi}_{\theta}\bar{z}_{i} + \tau_{i} $  and  $ \alpha_{i} = \E(\alpha_{i}| Z_{i})+ a_{i} =\bar{\pi}_{\alpha}\bar{z}_{i} + a_{i} $, where $ \bar{z}_{i} = T^{-1}\sum_{t=1}^{T}z_{it} $.  Given the assumptions, they write the triangular system in (\ref{eq:21}) and (\ref{eq:22}) as  
\begin{align}\label{eq:32}
& y_{it} = 1\{\varphi x_{it}  + \bar{\pi}_{\theta}\bar{z}_{i} + \tau_{i}+ \zeta_{it}>0\}\\ \label{eq:33}
& x_{it}=  \pi z_{it} + \bar{\pi}_{\alpha}\bar{z}_{i} +  \upsilon_{PWit}, 
\end{align}
where $ \upsilon_{PWit} = a_{i} + \epsilon_{it} $. They then make the control function assumption that $ \tau_{i}+ \zeta_{it} \indep x_{it} | \upsilon_{PWit} $. This allows them to estimate the APE at $x_{it}=\bar{x}$ by averaging      
\begin{align}\nonumber
\frac{1}{\Delta x}\Phi\biggr(\frac{\varphi x_{it}  + \bar{\pi}_{\theta}\bar{z}_{i} + \rho\upsilon_{PWit}}{\sigma_{PW}}\biggr)    
\end{align} 
over $ \bar{\pi}_{\theta}\bar{z}$ and $ \upsilon_{PW} $ at $x_{it} = \bar{x} + \Delta x$ and $x_{it} = \bar{x}$ and taking the difference. In the above, $\rho $ is population regression coefficient of $\tau_{i} + \zeta_{it} $ on $ \upsilon_{PWit} $, and where $\upsilon_{PWit} $ is obtained as residuals after estimating (\ref{eq:33}) in the first stage. The conditional distribution of $\tau_{i} + \zeta_{it} $ given $ \upsilon_{PWit} $ is assumed to follow a normal distribution with variance $ \sigma^{2}_{PW} $. 
If their method gives consistent estimates of APE, then it must be that the above measure is equal to $ \frac{\partial G(\bar{x}) }{\partial x}$. 

In \citeauthor{chamberlain:1984}'s correlated random effects (CRE) probit and in \citeauthor{chamberlain:1984}'s conditional logit (CL),  $ x_{it} $ is assumed to be independent of the idiosyncratic term, $ \zeta_{it} $. While in CRE probit model $\E(\theta_{i}| X_{i}) $ is specified, in the CL model the distribution of  $\theta_{i}$ is left unspecified. Assuming that $  \theta_{i} = \bar{\pi}_{\theta}\bar{x}_{i} + \tau_{i} $, where $ \bar{\pi}_{\theta}\bar{x}_{i} $ is the specification for $\E(\theta_{i}| X_{i}) $, the structural equation for the CRE probit model is given by 
\begin{align}\nonumber
y_{it} = 1\{\varphi x_{it}  + \bar{\pi}_{\theta}\bar{x}_{i} +  \tau_{i}  + \zeta_{it}>0\}, \text{ where }  \tau_{i} = \theta_{i} - \E(\theta_{i}|X_{i}). 
\end{align}
$ \tau_{i}  + \zeta_{it} $ is assumed independent of $ X_{i} $ and is distributed normally with variance $ \sigma^{2}_{CRE} $. The CRE probit model is estimated as a probit model by pooling the data. If the CRE probit model, too, gives consistent measure of APE then it has to be that
\begin{align}\nonumber
\frac{1}{\Delta x}\biggr[\int\Phi\biggr(\frac{\varphi (x_{it}+ \Delta x) + \bar{\pi}_{\theta}\bar{x}_{i}}{\sigma_{CRE}}\biggr)dF(\bar{\pi}_{\theta}\bar{x})-\int\Phi\biggr(\frac{\varphi x_{it} + \bar{\pi}_{\theta}\bar{x}_{i}}{\sigma_{CRE}}\biggr)dF(\bar{\pi}_{\theta}\bar{x})\biggr]=\frac{\partial G(x_{it}) }{\partial x},  
\end{align} 
where the LHS is the measure of APE of $ x $ at $ x_{it} $ pertaining to the CRE probit model.

The structural equation for the CL model is same as equation (\ref{eq:21}), where $ \zeta_{it} $ follows a logistic distribution. The APE of $ x $ at $ x_{it} $ for the CL model is \begin{align}\nonumber
&\frac{1}{\Delta x}\biggr[\int\Lambda(x_{it}+\Delta x, \theta_{i}) dF(\theta)- \int \Lambda(x_{it}, \theta_{i})dF(\theta)\biggr], 
\end{align}
where  $ \Lambda(x_{it}, \theta_{i}) = \Pr (y_{it} = 1|x_{it}, \theta_{i}) = \frac{\exp(\varphi x_{it}  + \theta_{i})}{1 + \exp(\varphi x_{it}  + \theta_{i})} $. Once we have estimated $ \varphi $ by estimating the CL model, we can estimate the APE by averaging $ \Lambda(x_{it} + \Delta x, \theta_{i}) $  and  $ \Lambda(x_{it}, \theta_{i}) $ over $ \theta_{i} $ and taking the difference.  

Table \ref{table:mc1} provides the results for various sample size, $ n $, with $ m = 2000 $ Monte Carlo replications. In the Table and in Figure \ref{fig:fig1} we compare the performance of our method, which we term CRECF\footnote{The acronym derives from fact that the control functions are based on correlated random effects in the reduced form equations.} method, to the alternative estimators considered above.

\begin{table}[h!]
	\caption{\textbf{Performance of the APE, $\frac{\partial G(\bar{x}=1)}{\partial x} $, for alternative estimators. }}
	\centering
	\resizebox{\columnwidth}{!}{
		
		\begin{tabular}{l|c|c|c|c|c|c|c|c|c}
			\hline\hline
			& True APE	&\multicolumn{2}{c|}{CRECF Method} 	&\multicolumn{2}{c|}{ \citeauthor{papke:2008}} & \multicolumn{2}{c|}{\citeauthor{chamberlain:1984}'s CRE Probit} & \multicolumn{2}{c}{\citeauthor{chamberlain:1984}'s Logit} \\\hline
			& Mean   	&	RMSE	&	Mean &	RMSE	&	Mean 	&	RMSE	&	Mean 	&	RMSE	&	Mean\\\hline			
			$N$= 200 & -.0931	&	 .0445	&	-.0920	&	.0724	&	-.0354	&	.0561	&	-.0578	&	.0473	&	-.1070	\\
			$N$= 500& -.0944	&	.0283	&	-.0932	&	.0654	&	-.0353	&	.0462	&	-.0578	&	.0310	&	-.1061	\\			
			$N$= 1000 & -.0935	&	.0203	&	-.0936	&	.0616	&	-.0353	&	.0408	&	-.0579	&	.0238	&	-.1057	\\		
			$N$= 2000 &  -.0934	&	.0143	&	-.0936	& .0597	&	-.0353	&	.0381	&	-.0579	&	.0189	&	-.1057	\\
			$N$= 5000 & -.0939	&	.0088	&	-.0936	&	.0592	&	-.0353	&	.0370	&	-.0579	&	.0144	&	-.1053	\\
			\hline\hline
			\multicolumn{10}{l}{RMSE is Root Mean Square Error and Mean is the mean value of $ m=2000 $ APEs.}
	\end{tabular} }
	\label{table:mc1}
\end{table}

In Figure \ref{fig:fig1} we plot the densities of $ m = 2000 $ MC estimates of $\widehat{\partial G(x_{it})}/\partial x - \partial G(x_{it})/\partial x$, at $x_{it}=1 $, obtained for the four estimation methods for different sample sizes. It can be seen from the figure that for each of the alternative estimators, the APE of $x$ is estimated with a bias, which persists as the sample size grows larger. Thus, even as the variance of $\widehat{\partial G(x_{it})}/\partial x - \partial G(x_{it})/\partial x$ for each of the alternative methods decreases, the root mean square error (RMSE) for alternative methods in Table \ref{table:mc1} decreases quite slowly.

\begin{figure}[h!]
	\psfrag{x}{\tiny{$\widehat{\partial G(\bar{x})/\partial x} - \partial G(\bar{x})/\partial x$}}
	\psfrag{CRE Control Function}{\tiny{CRECF Method}}
	\psfrag{Density}{\tiny{Density}}
	\psfrag{Condtitional Logit}{\tiny{\citeauthor{chamberlain:1984}'s Logit}}
	\psfrag{Chamberlain's CRE Probit}{\tiny{\citeauthor{chamberlain:1984}'s Probit}}
	\psfrag{Papke & Wooldridge (2008)}{\tiny{\citeauthor{papke:2008}}}
	
	\centering
	\begin{subfigure}{.5\textwidth}
		\centering
		\includegraphics[width=.925\linewidth,natwidth=353,natheight=350]{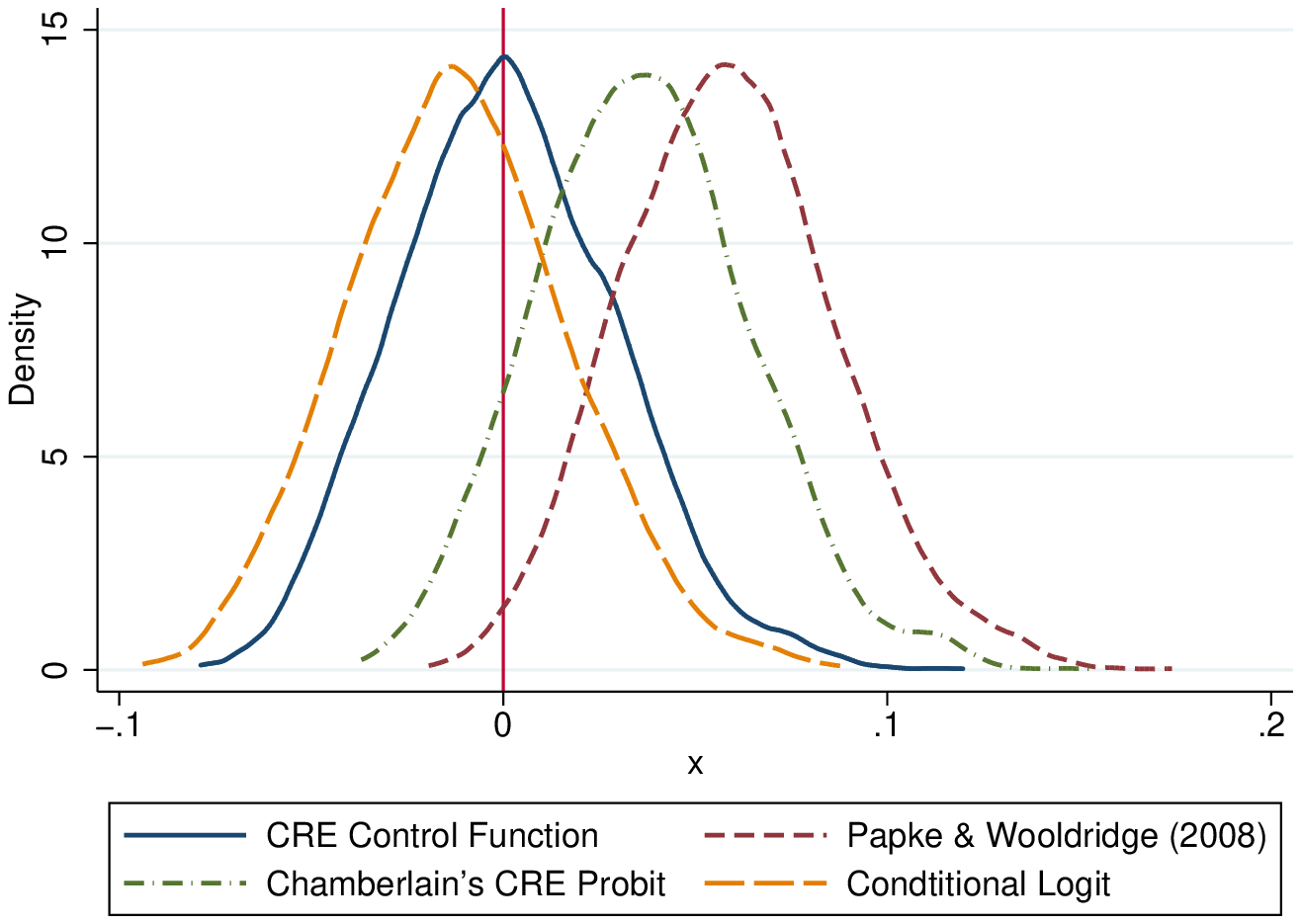}
		\caption{$N$=500}
	\end{subfigure}%
	\begin{subfigure}{.5\textwidth}
		\centering
		\includegraphics[width=.925\linewidth,natwidth=353,natheight=350]{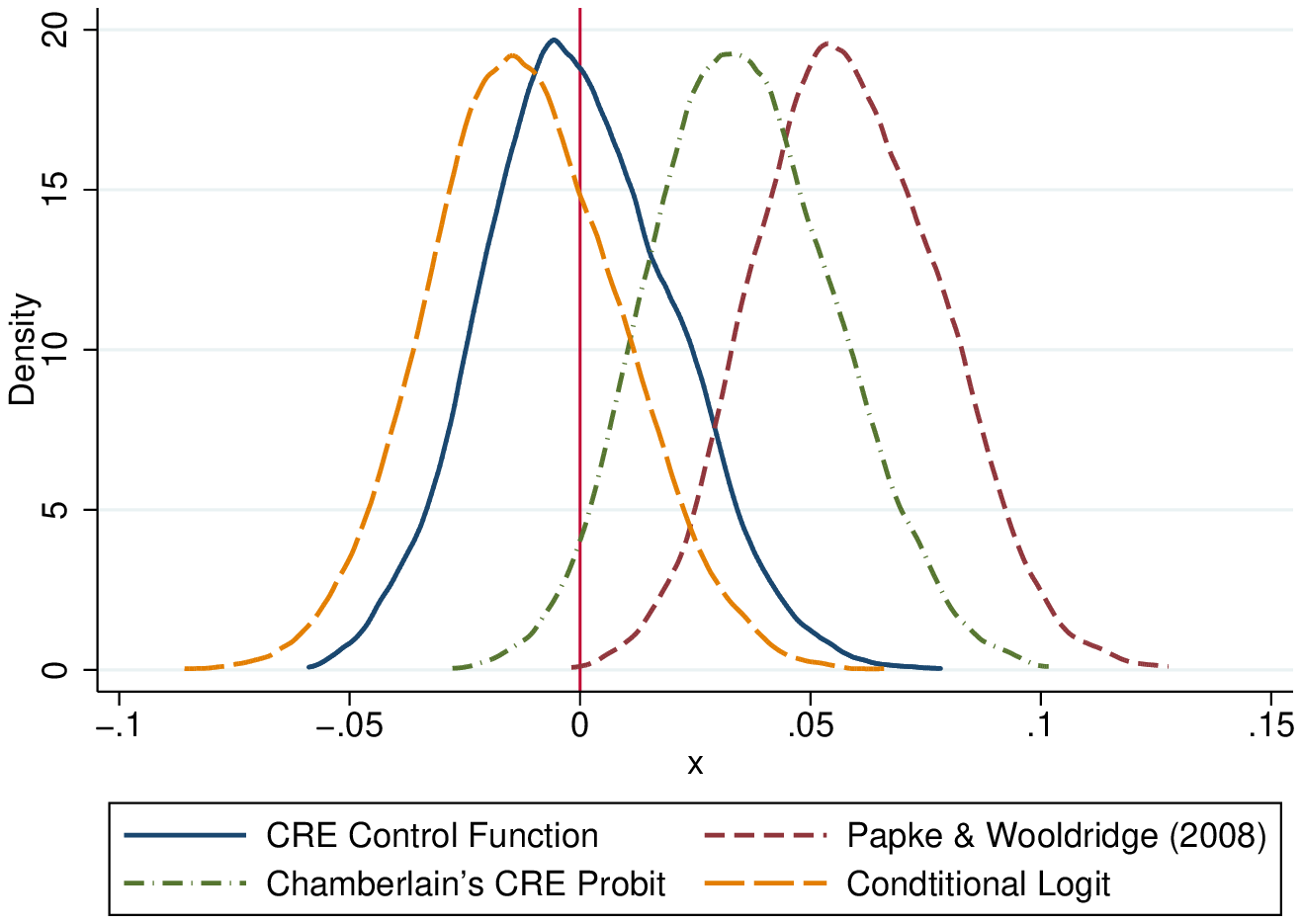}
		\caption{$N$=1000}
	\end{subfigure}\\
	\centering
	\begin{subfigure}{.5\textwidth}
		\centering
		\includegraphics[width=.925\linewidth,natwidth=354,natheight=350]{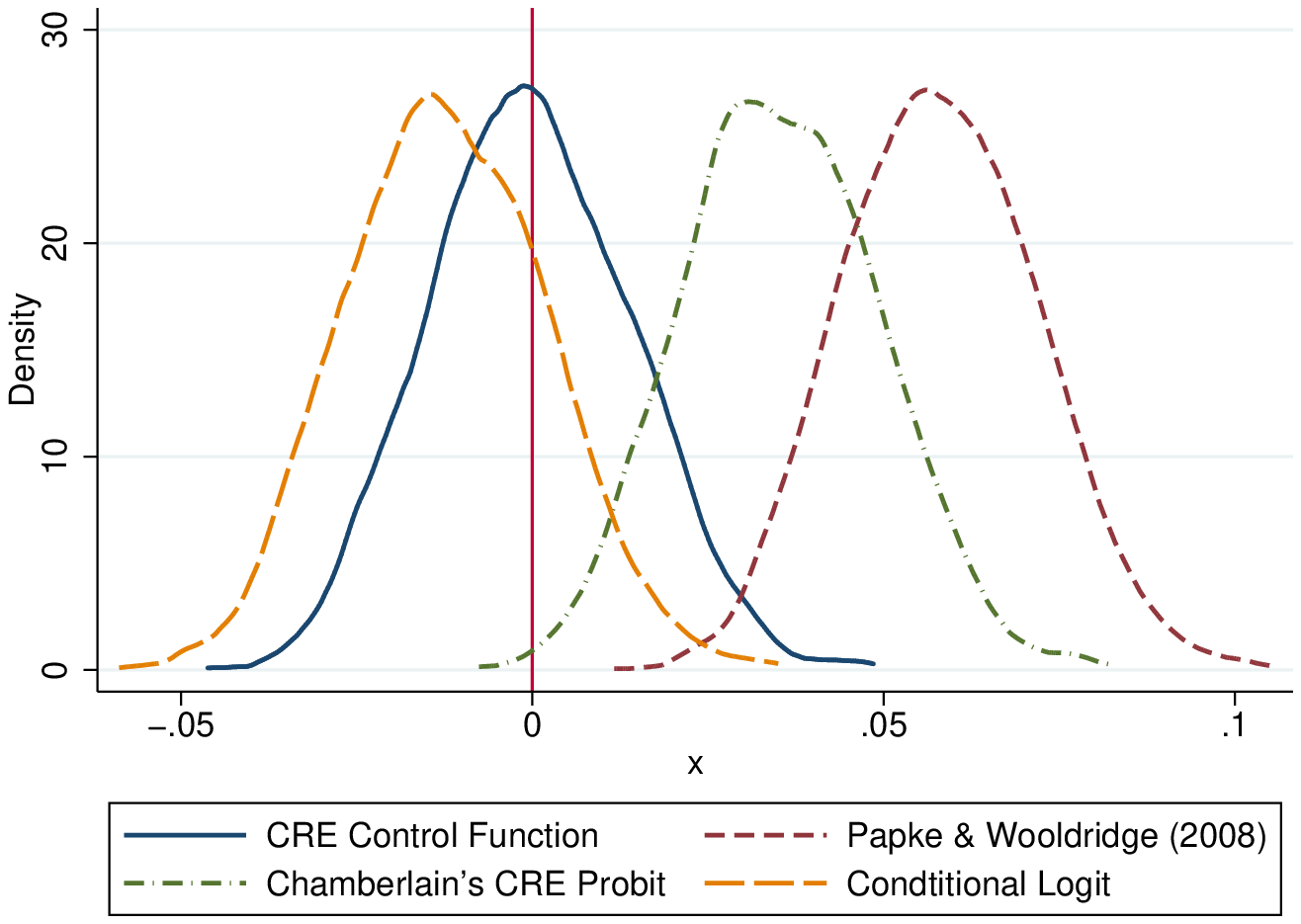}
		\caption{$N$=2000}
	\end{subfigure}%
	\begin{subfigure}{.5\textwidth}
		\centering
		\includegraphics[width=.925\linewidth,natwidth=353,natheight=350]{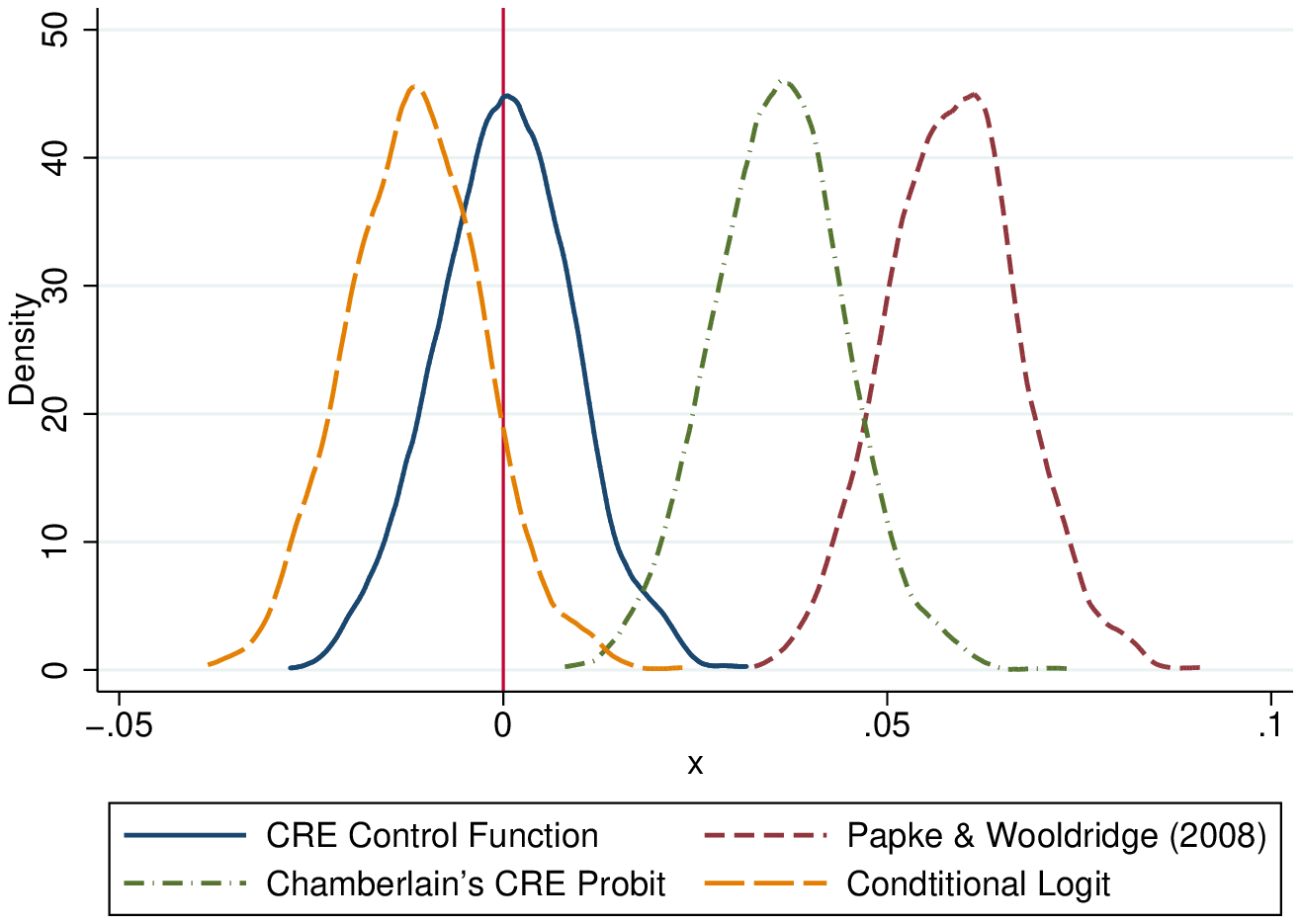}
		\caption{$N$=5000}
	\end{subfigure}
	\caption{Comparison with Alternative Estimators: Density of $\widehat{\partial G(\bar{x})/\partial x} - \partial G(\bar{x})/\partial x$ at $ \bar{x}=1 $ for different Sample Size.  }
	\label{fig:fig1}
\end{figure}

Since the CL and CRE probit models do not account for the endogeneity of $ x_{it} $ with respect to the transitory errors, $ \zeta_{it} $, the methods can give biased results. Unexpectedly, however, the method proposed by \citetalias{papke:2008}, which tries to accounts for correlation between $ Z_{i} $ and $ \theta_{i} $ and the correlation of $ x_{it} $ with both $ \theta_{i} $ and $ \zeta_{it} $, gives the least satisfactory results. This suggests that under a more general DGP, as in our MC experiments, their control function assumption that $ \tau_{i}+ \zeta_{it} \indep X_{i} | \upsilon_{PWit} $, where both $ \tau_{i}+ \zeta_{it} $ and $ \upsilon_{PWit} $ are assumed to be independent of $ Z_{i} $, is, as discussed in remark \ref{rm:2} and proposition \ref{prop:1}, likely to get violated.

To validate the claims made in remark \ref{rm:2}, we conduct some MC experiments to compare the APE when $V_{PW,i} \equiv\{\upsilon_{PW,i1}, \ldots, \upsilon_{PW,iT}\}$ is used as a control function as against when only $\upsilon_{PW,it}$ is used a control function. In Figure \ref{fig:fig2} below, we have plotted the density of difference between the estimated and the true APEs, $\widehat{\partial G(x_{it})/\partial x} - \partial G(x_{it})/\partial x$, where the estimated APEs are the APEs that are obtained by varying the control functions and the instruments in \citetalias{papke:2008}'s model. As can be seen in figure, when the instrument is continuous with a large support, employing $V_{PW,i}$ yields consistent estimates of the APE, whereas if only $\pmb{\upsilon}_{PW,it}$ is employed as control function, then we get estimates that are biased. When the same instrument is discretized to take value 1 and 0, we get biased estimates when either $V_{PW,i}$ or $\pmb{\upsilon}_{PW,it}$ is employed as control function. This is because when the instrument is discrete, even though $V_{PW,i}$ is the appropriate control function, the APE is not point but partially identified.

\begin{figure}[h!]
	\psfrag{x}{\small{$\widehat{\partial G(\bar{x})/\partial x} - \partial G(\bar{x})/\partial x$}}
	\psfrag{Continuous Instrument, CF: v_it}{\small{Cnt. Inst., CF: $\upsilon_{PW,it}$}}
	\psfrag{Binary Instrument, CF: v_it}{\small{Binary Inst., CF: $\upsilon_{PW,it}$}}
	\psfrag{Continuous Instrument, CF: V_i}{\small{Cnt. Inst., CF: $V_{PW,i}$}}	
	\psfrag{Binary Instrument, CF: V_i}{\small{Binary Inst., CF: $V_{PW,i}$}}	
	\psfrag{Density}{\small{Density}}
	\centering
	\includegraphics[width=.65\linewidth,natwidth=353,natheight=275]{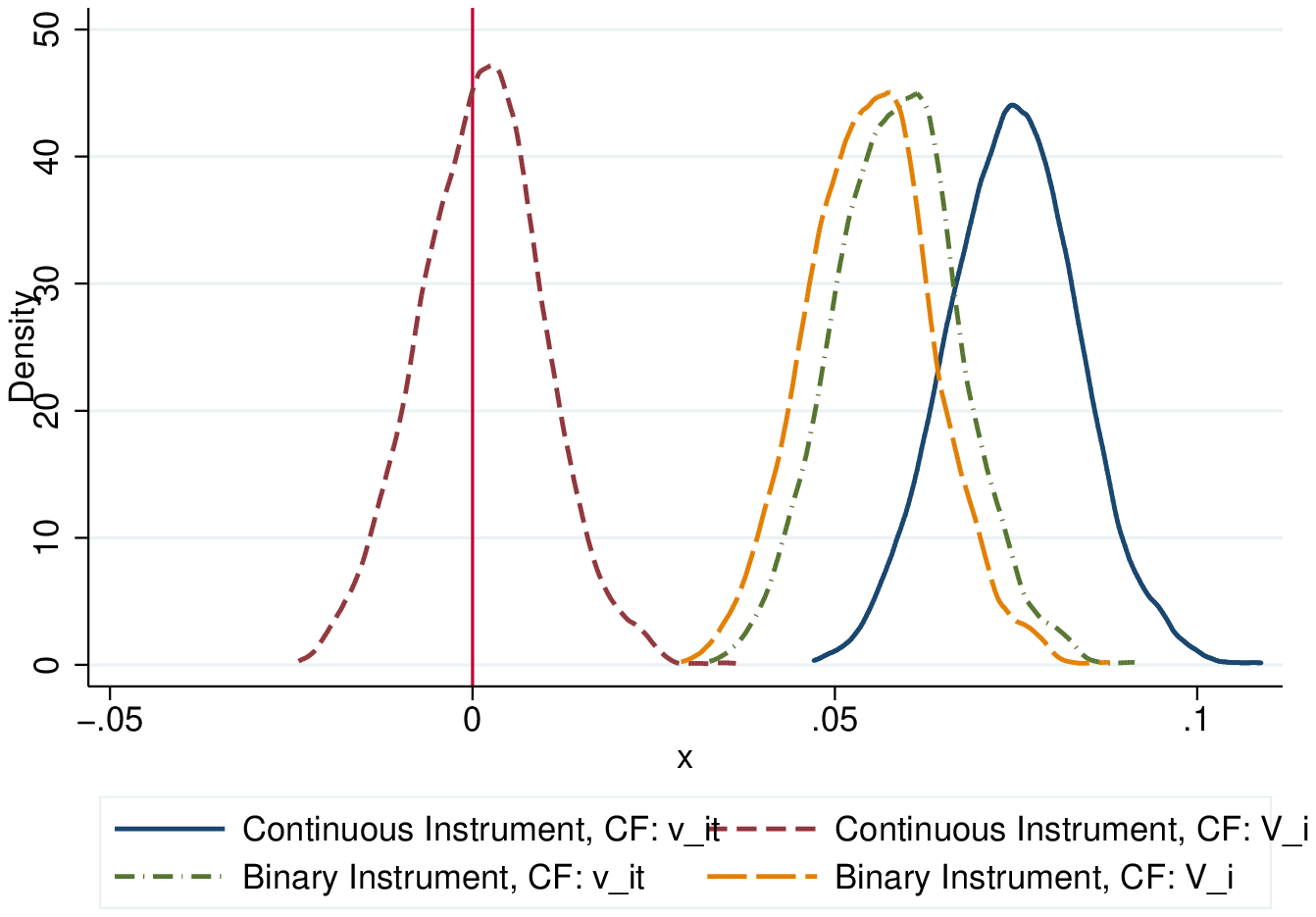}
	\caption{Density of difference between the estimated and the true APEs, $\widehat{\partial G(\bar{x})/\partial x} - \partial G(\bar{x})/\partial x$, at $ \bar{x}=1 $ where the estimated APEs, $ \widehat{\partial G(\bar{x})/\partial x} $, are obtained by varying the control functions and the instruments in \citetalias{papke:2008}'s model. Sample Size: $N$=5000}
	\label{fig:fig2}
\end{figure}

That the APEs when using the traditional control function, as in \citetalias{papke:2008}'s, may not be point identified when the instrument, $ z_{it} $, is binary even when $ x_{it} $ has a large support can be seen in Figure \ref{fig:fig3} (c), which has the plots of the level sets of the kernel estimates of the joint density of ($ x $, $ \upsilon_{PWit} $). The figure suggests that the common support requirement for point identification may be satisfied only over a small range of $ x $ values in \citetalias{papke:2008}'s model. Whereas from Figure \ref{fig:fig3} (a) and (b), we can deduce that the support of the conditional distribution of $(\hat{\epsilon}_{it}, \hat{\alpha}_{i}) $ given $x$ is almost the same for large ranges of $ x $.

\begin{figure}[h!]
	\psfrag{x}{$x$}
	\psfrag{v}{$ \upsilon_{it} $}
	\psfrag{ex_x}{$ \hat{\alpha}_{i} $}
	\psfrag{ep_x}{ $\hat{\epsilon}_{it} $}
	
	\centering
	\begin{subfigure}{.5\textwidth}
		\centering
		\includegraphics[width=.925\linewidth,natwidth=355,natheight=275]{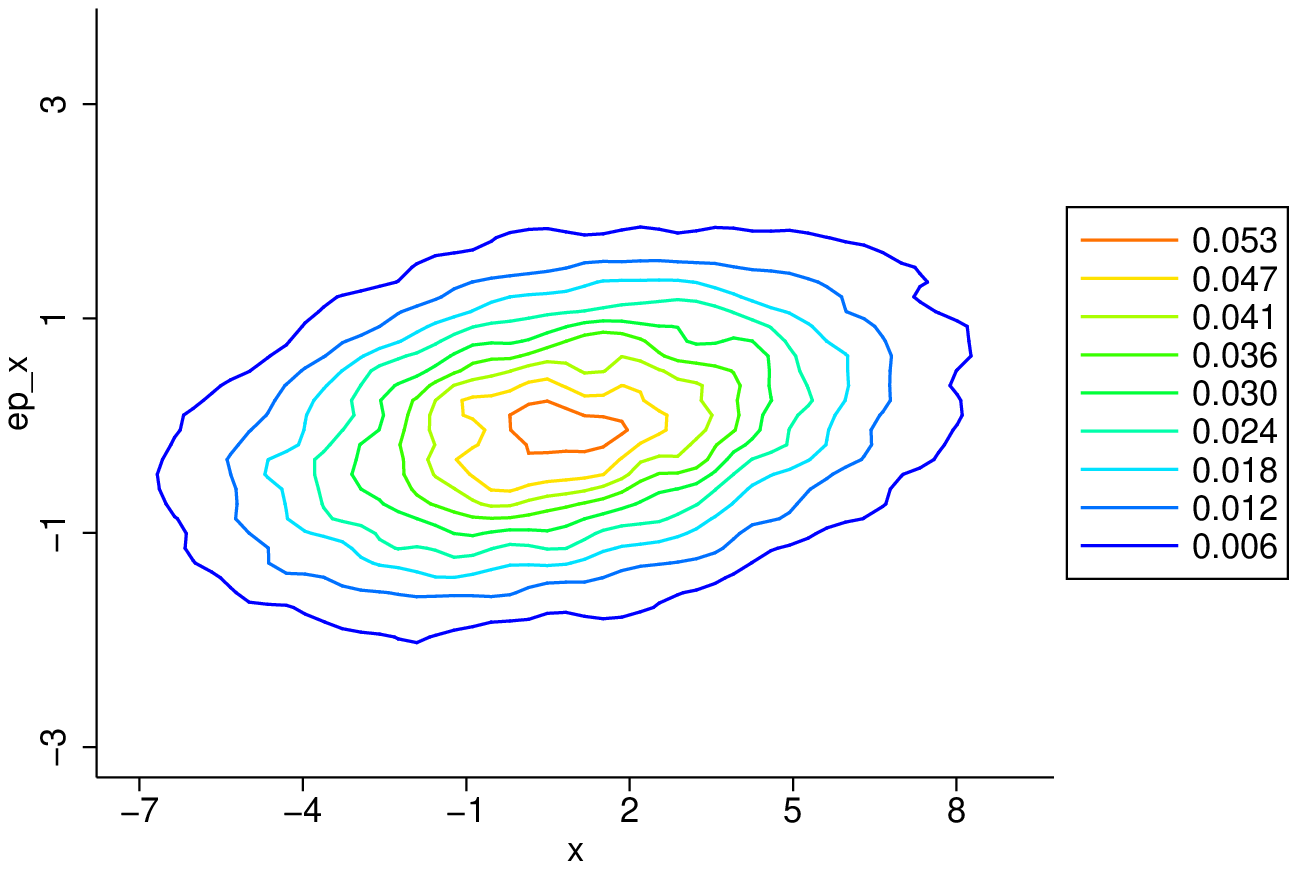}
		\caption{Joint Density of $\hat{\epsilon}_{it} $ and $ x $.}
	\end{subfigure}%
	\begin{subfigure}{.5\textwidth}
		\centering
		\includegraphics[width=.925\linewidth,natwidth=353,natheight=275]{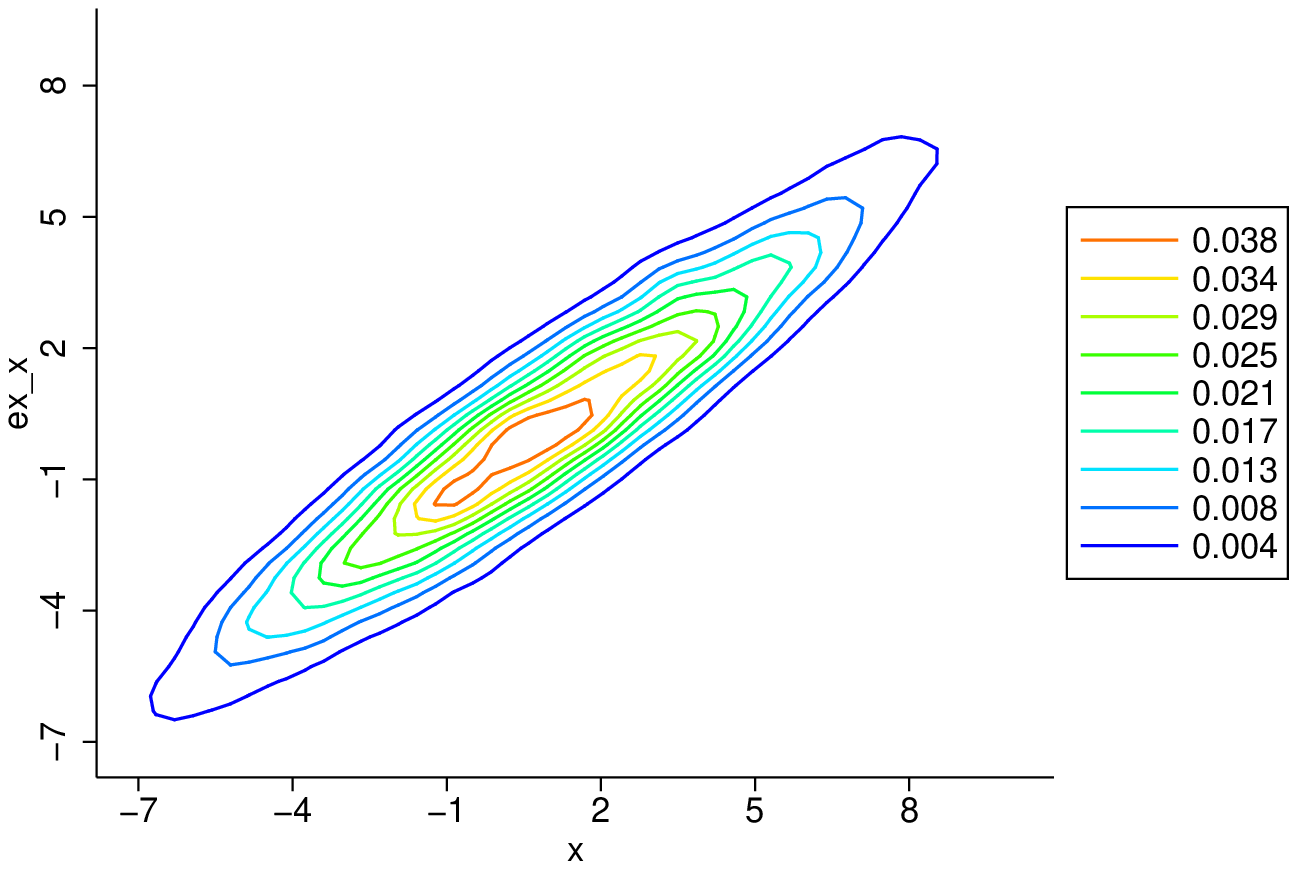}
		\caption{Joint Density of $\hat{\alpha}_{i} $ and $ x $.}
	\end{subfigure}\\
	\centering
	\begin{subfigure}{.5\textwidth}
		\centering
		\includegraphics[width=.925\linewidth,natwidth=354,natheight=300]{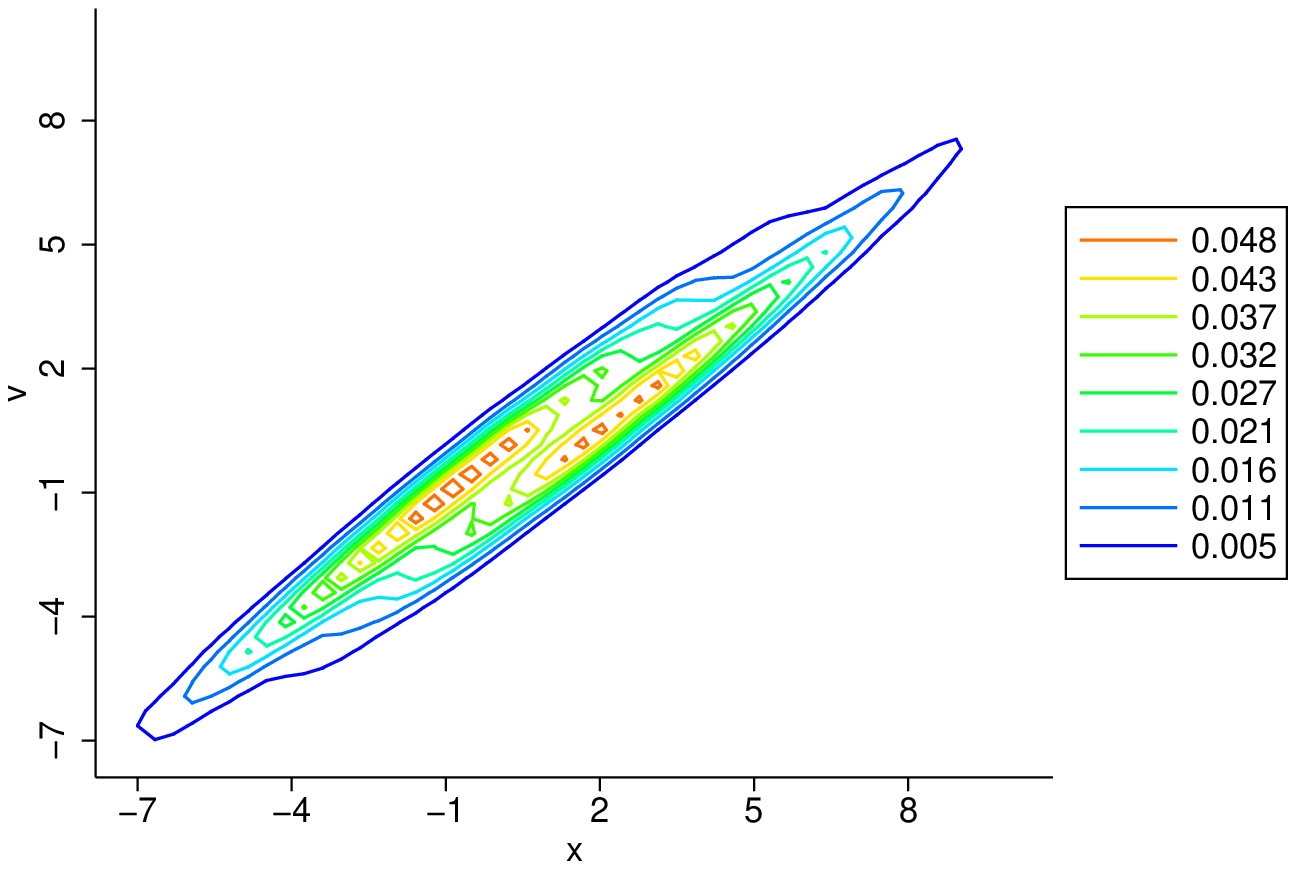}
		\caption{Joint Density of $\upsilon_{it} $ and $ x $.}
	\end{subfigure}%
	\begin{subfigure}{.5\textwidth}
		\centering
		
	\end{subfigure}
	\caption{Level curves of estimated joint density of $ x $ and various control functions.  }
	\label{fig:fig3}
\end{figure}

As our method allows for multiple endogenous regressors, we also conduct a simulation exercise with two endogenous regressors. The two instruments, $ \pmb{z}_{it} = (z_{1it}, z_{2it}) $, are i.i.d and marginally distributed as $ \N[0,\sigma^{2}_{z} ] $, where  $ \sigma_{z_{1}} = 5 $, $ \sigma_{z_{2}} = 2 $, and $ \rho_{z_{1}z_{2}} = 0.25 $.  The instruments, $ Z^{\prime}_{i} \equiv \{\pmb{z}_{i1}, \ldots, \pmb{z}_{i5} \} $ and the individual effects are correlated, and follow the joint distribution:  $ (Z^{\prime}_{i}, \alpha_{1i}, \alpha_{2i}, \theta_{i})^{\prime} \sim \N\begin{bmatrix} 0,  \Sigma_{z\alpha\theta} \end{bmatrix}$, where $ \sigma_{\alpha_{1}} =6 $, $ \sigma_{\alpha_{2}} =2 $,  $ \sigma_{\theta} = 4 $,  $ \rho_{z_{1}\alpha_{1}} = 0.2 $, $\rho_{z_{2}\alpha_{1}} = 0.3 $, $ \rho_{z_{1}\alpha_{2}} = 0.25 $, $\rho_{z_{2}\alpha_{2}} = 0.3 $, $ \rho_{\alpha_{1}\alpha_{2}} = 0.5 $,  $ \rho_{z_{1}\theta} = 0.1 $, $ \rho_{z_{2}\theta} = 0.15 $,  $ \rho_{\alpha_{1}\theta} = 0.5 $, $ \rho_{\alpha_{1}\theta} = 0.25 $. The above choice of correlation coefficients ensures that, conditional on $ \pmb{\alpha}_{i}  = (\alpha_{i2}, \alpha_{i2})  $, the conditional correlation between $\pmb{z}_{it}$ and $ \theta_{i} $ is 0. Having generated the data, we then discretize the instruments to take values 0 and 1: $ z_{1it} $ takes value 1 if it is non-negative and 0 otherwise, while $ z_{2it} $ takes value 1 if it is greater than or equal 1 and 0 otherwise. The idiosyncratic error terms  $ (\zeta_{it}, \epsilon_{1it}, \epsilon_{2it}) $ are drawn from $ \N\begin{bmatrix} 0,  \Sigma_{\zeta\epsilon} \end{bmatrix}$, where the elements of $ \Sigma_{\zeta\epsilon} $ are assumed as $ \sigma^{2}_{\zeta} = \sigma^{2}_{\epsilon_{1}} = \sigma^{2}_{\epsilon_{2}} =1$, $ \rho_{\zeta \epsilon_{1}} =0.75 $, $ \rho_{\zeta \epsilon_{2}} =0.25 $, and $ \rho_{\epsilon_{1} \epsilon_{2}} =0.5 $.

With $ Z_{i} $ and the error terms in place, we next generate $ x_{1it} $, $ x_{2it} $ and $ y_{it} $ according to:
\begin{align}\nonumber
&x_{1it} = -1 z_{1it} + .05z_{2it} + \alpha_{1i} + ep_{1it}\\\nonumber
&x_{2it} = .025z_{1it} + .75z_{2it} + \alpha_{2i} + ep_{2it} \\\nonumber 
&y_{it}  = 1\{ -1 x_{1it} + 0.5 x_{2it}   + \theta_{i} + \zeta_{it}>0\}.\\\nonumber 
\end{align} 
We compute the APEs at $\bar{x}_{1}= 0.5$, $\bar{x}_{2}=1$ and chose $ \Delta x_{1} = 0.05 $ and $ \Delta x_{2} = 0.1 $. Table \ref{table:mc2} provides the results for various sample size, $ n $, with $ m = 2000 $ Monte Carlo replications. In the Table, we compare the performance of our method to the CRE probit and the conditional logit models. Since \citetalias{papke:2008} consider only a single endogenous regressor, their model is not considered in these simulations. With lowest RMSE  for every sample size, our method outperforms the CRE probit and the conditional logit models.

\begin{table}[h!]
	\caption{\textbf{Performance of the APEs, $\frac{\partial G(\bar{x}_{1}=0.5, \bar{x}_{2}=1 )}{\partial x_{1}} $ and $\frac{\partial G(\bar{x}_{1}=0.5, \bar{x}_{2}=1 )}{\partial x_{2}} $, for alternative estimators. }}
	\centering
	\resizebox{\columnwidth}{!}{
		
		\begin{tabular}{l|c|c|c|c|c|c|c|c}
			\hline\hline
			&	& True APE	&\multicolumn{2}{c|}{CRECF Method} 	 & \multicolumn{2}{c|}{\citeauthor{chamberlain:1984}'s CRE Probit} & \multicolumn{2}{c}{\citeauthor{chamberlain:1984}'s Logit} \\\hline
			&	& Mean   	&	RMSE	&	Mean &	RMSE	&	Mean 	&	RMSE	&	Mean 	\\\hline
			$N$= 500 & $ x_{1} $ &	-0.3328	&	0.0538	&	-0.3385	&	0.0761	&	-0.3874	&	0.1114	&	-0.4294	\\	
			&  $ x_{2}$  &	0.1655	&	0.0411	&	0.1817	&	0.186	&	0.3491	&	0.2085	&	0.372	\\	\hline
			$N$= 1000 &  $ x_{1}$ &	-0.3325	&	0.0383	&	-0.3395	&	0.0659	&	-0.3869	&	0.103	&	-0.428	\\	
			& $ x_{2}$ &	0.1656	&	0.0316	&	0.1821	&	0.1845	&	0.349	&	0.2069	&	0.3715	\\	\hline
			$N$= 2000 & $ x_{1}$ &	-0.331	&	0.0282	&	-0.3402	&	0.0618	&	-0.3869	&	0.1009	&	-0.4281	\\	
			& $ x_{2}$ &	0.1662	&	0.0247	&	0.182	&	0.1828	&	0.3484	&	0.2055	&	0.3713	\\	\hline
			$N$= 5000 & $ x_{1}$ &	-0.3316	&	0.0192	&	-0.3407	&	0.0579	&	-0.387	&	0.098	&	-0.4281	\\	
			& $ x_{2}$ &	0.1657	&	0.0205	&	0.1813	&	0.1832	&	0.3487	&	0.2059	&	0.3714	\\	\hline \hline
			\multicolumn{9}{l}{RMSE is Root Mean Square Error and Mean is the mean value of $ m=2000 $ APEs.}
	\end{tabular} }
	\label{table:mc2}
\end{table}

The results therefore imply that assuming $(\hat{\pmb{\epsilon}}_{it}, \hat{\pmb{\alpha}}_{i}) $ as control function for identifying the ASF and APE, may not be restrictive, and that the developed method can yield consistent result.


To conclude, this finite sample study establishes the following: 

(1) Our method performs well with sample sizes frequently encountered in practice.

(2) It performs better than the alternative estimators with set-ups similar to ours. 

(3) Employing $ V_{PW}$, instead of $ \pmb{\upsilon}_{PWt} $, as a control function yielded consistent APEs when the instrument had a large support. This suggest that when $ Z $ is independent of the error terms and the instruments, $ \pmb{z}_{t} $, have a large support, then employing $V$ as a control function could yield consistent APEs.

\section{Implications of Ownership of Land and Farm Assets on Child Labor }
\subsection{Introduction}
Child labor is a pressing concern in all developing countries. According to International Labour Office's 2016 estimates, worldwide, 152 million children in the age group of 5 to 17 years age group are victims of child labor; 62 million of which are in the
Asia-Pacific region. Conditions of child labor can vary. Many children work in hazardous industries that take a toll on their health. Moreover, when children work, they forego education and human capital accumulation, with deleterious effect on their future earning potential. Furthermore, since there is positive externality to human capital accumulation, as argued by \citet{baland:2000} (BR), the social return to such accumulation, too, is not realized.

There is a huge literature, both empirical and theoretical, that has sought to understand the mechanism underlying child labor. What has emerged is that poverty (\citealt{basu:1998} \& \citetalias{baland:2000}), along with imperfection in labor and land market \citep[][]{bhalotra:2003,dumas:2007,basu:2010} and capital market (\citetalias{baland:2000}) to be the major causes of child labor. \citetalias{baland:2000} show that child labor increases when endowments of parents are low, and that when capital market
imperfections exist and parents cannot borrow, child labor becomes inefficiently high.

\citet{basu:2010} (BDD) point out that papers like \citet{bhalotra:2003}(BHy) and \citet{dumas:2007} show that in some developing countries the amount of work the children of a household do increases with the amount of land possessed by the household. Since land is usually strongly correlated with a household's income, this finding seems to challenge the presumption that child labor involves the poorest households. They argue that these perverse findings are a facet of labor and land market imperfections, and that in developing countries, poor households in order to escape poverty want to send their children to work but are unable to do so because they have no access to labor markets close to their home. In such a situation, if the household comes to acquire some wealth, say land, its children, if only to escape penury, will start working. However, if the household's land ownership continues to rise, then beyond a point the household will be well-off enough and it will not want to make its children work.

\citetalias{bhalotra:2003} argue that on one hand there is the negative wealth effect of large landholding on child labor, whereby large landholding generate higher income and, thereby, makes it easier for the household to forgo the income that child labor would bring. On the other there is the substitution effect, where due to labor market imperfections, owners of land who are unable to productively hire labor on their farms have an incentive to employ their children. Since the marginal product of child labor is increasing in farm size, this incentive is stronger amongst larger landowners. The value of work experience will also tend to increase in farm size  if the child stands to inherit the family farm. Furthermore, they argue that large landowners who cannot productively hire labor would want to sell their land rather than employ their children on it, but, because of land market failure, are unable to do so. Thus, land market failure reinforces labor market failure.

\citet{Cockburn:2007} (CD) in their analysis of child labor in Ethiopia find that in presence of labor market imperfections, all assets need not be child labor enhancing. They find that certain productive assets that enable an increase in the total family income may not necessarily increase child labor. They show that assets such as oxen and ploughs that are operated by adults decrease child labor. To test this hypothesis, in our empirical specification we include an index of productive farm assets.

Now, while land and labor market imperfections may exist in developing countries, the extent of imperfection may not be uniform across all countries, or regions within a country. Hence, the relationship between child labor and different kinds of assets, such as landholding
or agrarian assets, is an empirical question. The question is important because policy implications could be different under different relationships between various kinds of assets and child labor. For example, if one were to confirm the findings in \citetalias{bhalotra:2003} and \citetalias{basu:2010}, then if monetary transfers are used to increase landholding or land redistribution is done in favor of the poor, child labor may in fact increase. On the other hand, when monetary transfers are used to increase agrarian assets, then is an inverse relationship between agrarian assets and child labor holds, such transfers could reduce the incidence of child labor.

In our data, we find non-agricultural income to be much higher than agricultural income (Table \ref{table:app2}). This suggests that land is not the only source of income as in \citetalias{bhalotra:2003} and \citetalias{basu:2010}. \citetalias{basu:2010} assume that land is the only source of income and derive a regression equation where household income is left out. Since non-agricultural income constitutes a major portion of total household income, we also control for household income.

We also find that, overtime, land size distribution has become more unequal (Table \ref{table:app2}), which indicates that land market exists in the regions from where the data has been collected. Now, if land market exists, even if imperfect, then it is unlikely that land owned by households will be exogenous to a household's labor supply as in \citetalias{bhalotra:2003} and \citetalias{basu:2010}, where land is mainly inherited, but endogenously determined
along with household's, including children's, labor supply decisions. However, endogeneity could also arise due to omitted variables. To account for the endogeneity of landholding along with that of productive assets and household income, we employ the method developed in the paper.

\subsection{Data and Empirical Model}
\subsubsection{Data}
We conduct our empirical analysis at the level of the child using two waves, 2006-07 and 2009-2010, of the data from Young Lives Study (YLS), a panel study from six districts of the state of United\footnote{In 2014 the north-western portion of the then Andhra Pradesh was separated to form the new state of Telangana.} Andhra Pradesh (henceforth AP) in India. We restrict our sample to children
in the age group of 5 to 14 years in 2007 living in rural areas. Finally, excluding children for whom relevant information was missing, we were left with 2458 children. Table \ref{table:app1} and Table \ref{table:app2} have the relevant descriptive statistics.
\begin{table}[h!]
	\caption{\textbf{Work Status by Age Group}}
	\centering
	\resizebox{\columnwidth}{!}{
		\begin{tabular}{l|c|c|c|l|c|c|c}
			\hline\hline
			\multicolumn{4}{c|}{Year 2007} & \multicolumn{4}{c}{ Year: 2010}\\\hline
			Age Group & Not Working  & Working  & Total  & Age Group  & Not Working  & Working  & Total \\\hline
			5 to 7 years  & 45.25  & 5.02  & 50.27  & 8 to 10 years  &  31.25  & 19.03  & 50.27 \\
			8 to 14 years  & 22.88  & 26.85  & 49.73  & 11 to 17 years  & 14.98  & 34.75 & 49.73\\\hline
			Total  & 68.13  & 31.87  & 100.00  & Total  & 46.23  & 53.77  & 100.00 \\\hline\hline
			\multicolumn{8}{l}{The figures are in percentage. Total number of children in each period: 2458}
	\end{tabular}}
	\label{table:app1}
\end{table}

Children were asked how much time they spent in the reference period (a typical day in the last week) doing (a) wage labor, (b) non-wage labor, or (c) domestic chores\footnote{Wage labor involves activities for pay, work done for money outside of household, or work done for
	someone not a part the household. Non-wage labor includes tasks on family farm, cattle herding (household and/or community), other family business, shepherding, piecework or handicrafts done at home (not just farming), and domestic work includes tasks and chores such as fetching water, firewood, cleaning, cooking, washing, and shopping.}. If the answer was positive number of hours for any one of the activities, then the binary variable $ DWORK $ was assigned value 1, 0 otherwise. The major component of work (not reported here) is due to domestic chores. While both domestic and non-domestic work registered increase over the years, the increase in the proportion of children doing non-domestic work was higher.

\begin{table}[h!]
	\caption{\textbf{Descriptive Statistics}}
	\centering
	\resizebox{\columnwidth}{!}{
		\begin{tabular}{l|c c|c c}\hline\hline
			& \multicolumn{2}{c|}{2007} & \multicolumn{2}{c}{2010} \\ \hline
			Variable	&	Mean &	Std. Dev.	& 	Mean  &	Std. Dev.\\ \hline
			\multicolumn{5}{c}{\textbf{Child characteristics}} \\\hline
			Sex (Male=1, Female=0) 	&	0.52 &	0.50	&		0.52 &	0.50 \\
			Age (yrs.)	&	8.07 &	2.97	&		11.07 &	2.97 \\\hline
			\multicolumn{5}{c}{\textbf{Household characteristics}} \\\hline
			Parents participated in NREGS	(Yes=1 \& No=0) &	0.33	&0.47	&		0.66	&0.47 \\
			Total number of days parents worked in NREGS	&	  9.21  &  21.44   & 36.00 & 48.10\\\hline
			Land Owned (acre) &  2.32   &  3.42    & 3.86  & 43.53\\
			Farm Asset Index &  -0.13   &  0.98    & 0.22  & 1.46\\\hline
			Gini Coefficient for Land Owned & \multicolumn{2}{c|}{ 0.62}  & \multicolumn{2}{c}{0.74}   \\
			\hline
			Total Income of Household (in Thousand \rupee) &  30.91  &  34.35   & 48.88 & 60.24\\
			Annual non-agricultural income (\rupee)& 20787 & 35813  &  29013  &  62225 \\
			Annual agricultural income (\rupee) & 5060  &  23319  & 9936  &  42746  \\\hline
			Does a household own farm assets (Yes=1 \& No=0)&  0.69   & 0.46 &  0.91   &  0.29 \\
			Number of farm assets              &4.70  &  11.06 & 6.29  &  9.01  \\
			\hline
			Engineered Road to the Locality (Yes=1 \& No=0) & 0.32 & 0.47 & 0.58 & 0.49\\
			Drinkable Water in the Locality (Yes=1 \& No=0) & 0.87 & 0.34 & 0.86 & 0.34\\
			National Bank in the Locality (Yes=1 \& No=0) & 0.23 & 0.41 & 0.08 & 0.27\\
			Hospital in the Locality (Yes=1 \& No=0) & 0.37 & 0.89 & 0.38 & 0.48 \\\hline
			\multicolumn{5}{c}{\textbf{Community (Mandal) characteristics}}\\\hline
			Total NREGS amount sanctioned (in Million \rupee)& 7.25 & 8.30 & 20.19 & 19.17\\\hline
			
			\hline
			\multicolumn{5}{l}{Total number of children/observations in each period: 2458}
	\end{tabular} }
	\label{table:app2}
\end{table}

In Table \ref{table:app2} we find that while land ownership has become more unequal, the  average size of land owned increased over the years. Farming Asset Index, which too increased over the years, was constructed by Principal Component Analysis of several variables, each of which indicate the number of farming related assets\footnote{Farming assets constitute of agriculture tools, carts, pesticide pumps, ploughs, water pumps, threshers, tractors, and other farm equipments.} of each kind that the household owns.

\subsubsection{Empirical Model}

Let $ y_{it} $ be the binary variable that takes value 1 if the parents of the child $ i $ decide that the child works and 0 otherwise. The decision is modelled as in equation (\ref{eq:1}), where $  y^{\ast}_{it} $ is amount of time devoted to work by child $i$ in period $t$. The set of endogenous variables, $\pmb{x}_{it}$, include income ($INCOME_{it}$) of the household to which the child $ i $ belongs, size of the land holdings ($LAND_{it}$), and the index of productive farm assets ($ASSET_{it}$).

To address the issues of endogeneity and heterogeneity, we employ the two-step control function methodology developed in the paper, where the control functions,  \begin{align}\nonumber
\hat{\pmb{\alpha}}_{i}^{\prime} =
(\hat{\alpha}_{INCOME, i}, \hat{\alpha}_{LAND, i} , \hat{\alpha}_{ASSET, i})
\textrm{ and  }  \hat{\pmb{\epsilon}}_{it}^{\prime}=
(\hat{\epsilon}_{INCOME, it} , \hat{\epsilon}_{LAND, it}, \hat{\epsilon}_{ASSET, it}),
\end{align}
are obtained from the estimates of the first stage reduced form equations (\ref{eq:3}). After augmenting the structural equation (\ref{eq:1}) with the control functions, we get the modified structural equation (\ref{eq:7}), which we estimate as a probit model\footnote{
	For many children, as we know, the optimal choice of $ y^{\ast}_{it} $ is the corner solution, $ y^{\ast}_{it} = 0 $. For corner solution outcomes, we are interested in features of the distribution such as $ \int \Pr(y^{\ast}_{it} > 0 | \mathcal{X}_{it}, \hat{\pmb{\alpha}}_{i}, \hat{\pmb{\epsilon}}_{it})dF(\hat{\pmb{\alpha}},\hat{\pmb{\epsilon}}) $ and $ \int \E(y^{\ast}_{it} | \mathcal{X}_{it}, \hat{\pmb{\alpha}}_{i}, \hat{\pmb{\epsilon}}_{it} )dF(\hat{\pmb{\alpha}},\hat{\pmb{\epsilon}}) $, where $ \mathcal{X}_{it} = (\pmb{x}_{it}^{\prime}, \pmb{w}^{\prime}_{it})^{\prime} $ and		
	\begin{align}\nonumber
	\E(y^{\ast}_{it} | \mathcal{X}_{it}, \hat{\pmb{\alpha}}_{i}, \hat{\pmb{\epsilon}}_{it} ) &=  \Pr(y^{\ast}_{it} =0 | \mathcal{X}_{it}, \hat{\pmb{\alpha}}_{i}, \hat{\pmb{\epsilon}}_{it}). 0 +  \Pr(y^{\ast}_{it} > 0 | \mathcal{X}_{it}, \hat{\pmb{\alpha}}_{i}, \hat{\pmb{\epsilon}}_{it}). \E(y^{\ast}_{it} |  \mathcal{X}_{it}, \hat{\pmb{\alpha}}_{i}, \hat{\pmb{\epsilon}}_{it}, y^{\ast}_{it} > 0)\\\nonumber
	&= \Pr(y^{\ast}_{it} > 0 | \mathcal{X}_{it}, \hat{\pmb{\alpha}}_{i}, \hat{\pmb{\epsilon}}_{it}) E(y^{\ast}_{it} |  \mathcal{X}_{it}, \hat{\pmb{\alpha}}_{i}, \hat{\pmb{\epsilon}}_{it}, y^{\ast}_{it} > 0). 
	\end{align}
	Due to lack of space, in this application we study only $ \int \Pr(y^{\ast}_{it} > 0 | \mathcal{X}_{it}, \hat{\pmb{\alpha}}_{i}, \hat{\pmb{\epsilon}}_{it})dF(\hat{\pmb{\alpha}},\hat{\pmb{\epsilon}}) $.}.

To identify the impact of the endogenous variables on the parents' decision to make their children work, we employ the following instruments: (1) $ NREGS $, which is the total  sanctioned amount at the mandal (region) level at the beginning of financial year (in 2008-09 prices) to support an employment guarantee scheme; (2) $ CASTE $, caste (social group) of the child; and (3) a set of four indicator variables that capture the level of infrastructural development in the household's locality/settlement.


The National Rural Employment Guarantee Scheme (NREGS) was initiated in 2006 by the Government of India with the objective to alleviate rural poverty. NREGS legally entitles rural households to 100 days of employment in unskilled manual labour (on public work projects) at a prefixed wage. Now, it can be seen in Table \ref{table:app2} that over the years, the proportion of children with either parent working in NREGS almost doubled. This increase in participation was accompanied by a rise in the number of  days of work on NREGS projects as well. \citet{afridi:2016} claiming $NREGS$ to be a valid instrument for income, argue that since fund sanctioned at the beginning of the financial year is not be affected by current demand for work, the funds sanctioned is exogenous and more funds imply more work opportunity in NREGS, which can have a positive effect on household income. Also, the total fund allocation to NREGS increased during the period 2007-2010. However, this increase was not uniform across the 15 \textit{mandal}s\footnote{Data on the sanctioned funds at the \textit{mandal} level was obtained from the Andhra Pradesh Government's website on NREGS (http://nrega.ap.gov.in/).}.

Our second instrument is the caste, a system of social stratification, to which the child belongs. India is beleaguered with a caste system. Within this caste system, historically, the Scheduled Castes and Scheduled Tribes (SC/ST's) have been economically backward and
concentrated in low-skill (mostly agricultural) occupations in rural areas. Moreover, they were also subject to centuries of systematic caste based discrimination, both economically and socially. The historical tradition of social division through the caste system created a
social stratification along education, occupation, income, and wealth lines that has continued into modern India\footnote{In fact, this stratification was so endemic that the constitution of India aggregated these castes into a schedule of the constitution and provided them with affirmative action cover in both education and public sector employment. This constitutional initiative was viewed as a key component of attaining the goal of raising the social and economic status of the SC/STs to the levels of the non-SC/ST's.}. Fairing better than SC/ST's are those belonging to the ``Other Backward Classes " (OBC)\footnote{The Government of India classifies,  a classification based on social and economic conditions, some of its citizen as Other Backward Classes  (OBC). The OBC list is dynamic (castes and communities can be added or removed) and is supposed to change from time to time depending on social, educational and economic factors.  In the constitution, OBC's are described as ``socially and educationally backward classes", and government is enjoined to ensure their social and educational development.}. Hence, given the fact that income and wealth, both land and productive assets, vary with caste, we choose $ CASTE $ as our second instrument, which is a discrete variable that takes three values: 1 if the child belongs to SC/ST household, 2 if the child belongs to OBC, and 3 if the child belongs to group labelled as ``Others" (OT). 


\begin{table}[h!]
	\caption{\textbf{Descriptive Statistics of some Variables by Caste}}
	\centering
	\resizebox{\columnwidth}{!}{
		\begin{tabular}{l|c |c|c |c}\hline\hline
			&&Scheduled Castes/Tribes & Other Backward Classes  & Others  \\\hline
			Year: 2007 & Household Income			&	31.22	&	31.64	&	43.21	\\
			&	(in Thousand \rupee).	&	(33.94)	&	(34.29)	&	(48.59)	\\\cline{2-5}
			&Land	Owned 		&	1.58	&	2.32	&	3.08	\\
			&	in acre	&	(2.12)	&	(3.51)	&	(4.53)	\\ \cline{2-5}
			&Index of Productive 			&	-0.22	&	-0.14	&	0.04	\\
			& Farm Asset	&	(0.71)	&	(1.02)	&	(1.17)	\\ \cline{2-5}
			&School Dummy			&	0.90	&	0.89	&	0.96	\\
			& $DSCHOOL=1$		&	(0.29)	&	(0.32)	&	(0.19)	\\ \cline{2-5}
			&Work Dummy			&	0.33	&	0.33	&	0.29	\\
			& $DWORK=1$		&	(0.47)	&	(0.47)	&	(0.45)	\\\hline 
			Year: 2010 & Household Income			&	45.99	&	50.22	&	64.76	\\
			&	in Thousand Rs.	&	(45.51)	&	(66.35)	&	(70.26)	\\ \cline{2-5}
			&Land Owned 			&	2.10	&	2.79	&	10.90	\\
			&	in acre &	(1.95)	&	(15.82)	&	(108.71)	\\ \cline{2-5}
			&Index of productive 			&	0.12	&	0.29	&	0.54	\\
			&	Farm Asset	&	(1.16)	&	(1.56)	&	(1.89)	\\ \cline{2-5}
			&School Dummy			&	0.89	&	0.87	&	0.94	\\
			&	$DCHOOL=1$	&	(0.31)	&	(0.33)	&	(0.23)	\\ \cline{2-5}
			&Work Dummy			&	0.52	&	0.57	&	0.48	\\
			&	$DWORK=1$	&	(0.50)	&	(0.49)	&	(0.50)	\\ \hline
			\multicolumn{2}{l|}{Number of Children/observations}		&	&		&	 \\
			\multicolumn{2}{l|}{in each period:}		&	906	&	1269	& 283 \\\hline \hline
			\multicolumn{4}{l}{\footnotesize Standard errors in parentheses.}\\
	\end{tabular} }
	\label{table:app4}
\end{table}

We claim that $ CASTE $ is a valid instrument for landholding because, though average wealth and income are evidently distributed along caste lines, we do not find a significant variation in child labor or school enrolment across caste or social group to which the child belongs (Table \ref{table:app4}). In other words, no social group is inherently disposed to make their children work or send them to school. This could be because rising awareness, overtime, about returns from education persuades families of all castes to send their children to
school. We find support for the assertion in the literature too. \citet{hnatkovska:2012} find significant convergence in the education attainment levels and occupation choice of SC/ST's and non-SC/ST's between 1983 and 2004-2005; moreover, the convergence in education level has been highest for the youngest cohort.  Secondly, time-invariant ethnicity variable such as caste cannot be correlated with unobserved time-invariant heterogeneity such as parents' or children's abilities and land quality.


Our assertion that the preferences of parents regarding child labor and schooling does not differ systemically across social groups is supported by the data. In the first wave of the data, the following question was asked: ``Imagine that a family in the village has a 12 year old son/daughter who is attending school full-time. The family badly needs to increase the household income. One option is to send the son/daughter to work but the son/daughter wants to stay in school. What should the family do?" There was little difference in the response across caste groups -- 90\% of SC/ST's, 87\% of OBC's, and 93\% of OT's wanted that sons of such distressed families be kept at school. For daughters, the
corresponding figures are: 87\% of SC/ST's, 87\% of OBC's, and 91\% of OT's. Also, 96\% of SC/ST households expected their children to complete a minimum of high school. The corresponding figure for OBC's and OT's are 95\% and 98\% respectively.

Our third set of instruments is a set of four dummy variables, which indicate (1) if drinkable water is provided in the locality, (2) if the services of a national bank are provided in the locality, (3) if private hospitals exist in the locality, and (4) if access to the locality is via an engineered road. As in \citetalias{bhalotra:2003}, these variables, which indicate the level
of infrastructure development, are employed to instrument the index of productive farm assets.

\subsection{Discussion of Results}

The results of the first stage reduced form equations in Table \ref{table:app5} suggest that our instruments are good predictors of the endogenous variables. First, we find that an increase in the amount sanctioned for  NREGS  projects  increases the household income. Secondly, $ CASTE $ does, on an average, correctly predict the economic (income, land holding, and assets) status of households. Finally, the dummy variables indicating the level of infrastructure development
are positively correlated with the index of productive farm assets.
\begin{table}[h!]
	\caption{\textbf{First Stage Reduced Form Estimates: Joint Estimation of Income, Land, and Farm Assets Equations}}
	\centering
	\resizebox{\columnwidth}{!}{
		\begin{tabular}{l| r@{}l|r@{}l |r@{}l} \hline\hline
			&	\multicolumn{2}{|c|}{Income}			&	\multicolumn{2}{|l}{Landholding}		&	\multicolumn{2}{|l}{Farm Asset}		\\	\hline
			Total NREGS amount sanctioned (in Million \rupee )	&	0.047	&	\onepc	&	-0.008	&		&	-0.0003	&		\\	
			&	(0.009)	&		&	(0.007)	&		&	(0.0002)	&		\\	\hline
			Caste (SC/ST = 1, OBC = 2, OT = 3)	&	9.220	&	\onepc	&	2.278	&	\onepc	&	0.171	&	\onepc	\\	
			&	(1.217)	&		&	(0.726)	&		&	(0.0300)	&		\\	\hline
			Drinkable Water in the Locality (Yes=1 \& No=0)	&	5.417	&		&	-1.879	&		&	0.341	&	\fivepc	\\	
			&	(5.703)	&		&	(4.260)	&		&	(0.150)	&		\\	\hline
			National Bank in the Locality (Yes=1 \& No=0)	&	-2.785	&		&	4.684	&	\fivepc	&	0.046	&		\\	
			&	(3.099)	&		&	(2.315)	&		&	(0.082)	&		\\	\hline
			Engineered Road to the Locality (Yes=1 \& No=0)	&	0.159	&		&	2.413	&		&	0.182	&	\onepc	\\	
			&	(2.130)	&		&	(1.591)	&		&	(0.0561)	&		\\	\hline
			Hospital in the Locality (Yes=1 \& No=0)	&	-0.689	&		&	-4.143	&	\onepc	&	0.056	&	\tenpc	\\	
			&	(1.248)	&		&	(0.932)	&		&	(0.033)	&		\\	\hline
			Other Exogenous Variables of the Structural  	&		\multicolumn{2}{|l|}{Yes}		&		\multicolumn{2}{|l}{Yes} &		\multicolumn{2}{|l}{Yes}	\\
			Equation: Age and Sex of the Child	&		\multicolumn{2}{|c|}{}		&		\multicolumn{2}{|c|}{}		\\\hline		\hline
			\multicolumn{7}{p{1\textwidth}}{\citeauthor{biorn:2004}'s stepwise MLE was employed to obtain these estimates.}\\\hline
			\legend & \multicolumn{4}{l}{\footnotesize Standard errors (SE) in parentheses }  
	\end{tabular} }
	\label{table:app5}
\end{table}

Before we begin to discuss the result of the second-stage estimation in Table \ref{table:app6}, we state a few points regarding the estimation. (a) The only exogenous explanatory variables in our parsimonious\footnote{Though we do not report here, we did not find that nonlinear terms of income, land, and productive
	assets to be significant. We had also included four education related dummy variables, two for the father and two for the mother. The dummy variables for the mother, for example, indicated (1) if the mother had some schooling and (2) if the mother had attended secondary or post secondary school. The education dummies, though substantially affecting household income, did not seem to affect child labor propensity. This suggests that parents' education level has had no independent impact except through income.} specification are the age and the sex of the children.  
(b) The specification includes district dummies, a time dummy, and the interaction of the two to account for the fact that the districts to
which children belong may have different economic growth trajectories as well as trends related to work and education. The time dummy allows us to control for changes in demand and supply of work over time. (c) Since the support assumption for point identification of the APEs is not met, we estimate the bounds on the APEs and the 95\% confidence interval ($ \text{CI}_{95\%} $) for the partially identified APEs.  (d) For the continuous variables, the bounds on the APE of a variable were computed by increasing the variable by one standard deviation from its mean, where the mean and the SD of the variable are from the 2010 data. For age, the bounds on APE were computed by increasing the mean age in 2010 by 1 year. (e) The standard errors of the coefficients were estimated using the analytical expression of the covariance matrix derived in Appendix D.     

\begin{table}[h!]
	\caption{\textbf{Household Income and Wealth Effect on Incidence of Child Labor}}
	\centering
	\resizebox{\columnwidth}{!}{		
		\begin{tabular}{l| r@{}l|r@{}l|r@{}l| c c |r@{}l } \hline\hline
			&\multicolumn{2}{c|}{\textbf{CRE Probit}} &\multicolumn{8}{c}{\textbf{Control Function (CF) Method}}\\\hline 
			&\multicolumn{2}{c|}{}&\multicolumn{2}{c|}{}& \multicolumn{2}{c}{APE Bounds  }& \multicolumn{2}{|c|}{}&\multicolumn{2}{c}{}\\
			&\multicolumn{2}{c|}{Coeff.}&\multicolumn{2}{c|}{Coeff.}& \multicolumn{2}{c}{ $ \text{CI}_{95\%} $}& \multicolumn{2}{|c|}{CFs}&\multicolumn{2}{c}{Coeff.}\\\hline
			Income	&	0.003  	&	\onepc	&	-0.0234	&	\onepc	&	\multicolumn{2}{c|}{ [-0.00532,   -0.00451] } & \multicolumn{2}{c|}{$\hat{\alpha}_{INCOME}$} &	0.005	& \\	
			&	(0.0008)	&	    	&	(0.0028)	&		&	\multicolumn{2}{c|}{ [-0.00533,    -0.0045] }	& & & (0.003) &  	\\	\hline
			Landholding	&	0.002	&		&	0.031	&	\onepc	&	\multicolumn{2}{c|}{ [0.00679,    0.00749 ] } 	&	\multicolumn{2}{c|}{$\hat{\alpha}_{LAND}$}			&	-0.015	&	\fivepc	\\	
			&	(0.002)	&		&	(0.007)	&		&	\multicolumn{2}{c|}{ [0.00676,    0.00752] }	& & & (0.0065) & 	\\	\hline
			Farm Asset Index	&	-0.011	&		&	-0.976	&	\onepc	&	\multicolumn{2}{c|}{[-0.22077,   -0.18734] } &	\multicolumn{2}{c|}{$\hat{\alpha}_{ASSET}$}			&	1.512	&	\onepc	\\	
			&	(0.0279)	&		&	(0.169)	&		&	\multicolumn{2}{c|}{ [-0.22121,    -0.1869] }	& & & (0.129) & 	\\	\hline
			Age	&	2.019	&	\onepc	&	0.402	&	\onepc	&	\multicolumn{2}{c|}{ [0.0757,    0.12533] } &	\multicolumn{2}{c|}{$\hat{\epsilon}_{INCOME}$}			&	0.0275	&	\onepc	\\	
			&	(0.072)	&		&	(0.057)	&		&	\multicolumn{2}{c|}{ [0.0755,    0.12552] }	& & & (0.003) & 	\\	\hline
			Sex	&	0.644	&	\onepc	&	0.394	&	\onepc	&	\multicolumn{2}{c|}{ [0.07355,    0.12318] }&	\multicolumn{2}{c|}{$\hat{\epsilon}_{LAND}$}			&	-0.031	&	\onepc	\\	
			&	(0.042)	&		&	(0.0473)	&		&	\multicolumn{2}{c|}{ [0.07322,    0.12351] }	& & & (0.0075) & 		\\	\hline
			\multicolumn{3}{c|}{ }					& 	\multicolumn{4}{c|}{ }					&	\multicolumn{2}{c|}{$\hat{\epsilon}_{ASSET}$}			&	0.882	&	\onepc	\\	
			\multicolumn{3}{c|}{ }					& 	\multicolumn{4}{c|}{ }					&	\multicolumn{2}{c|}{}			&	(0.185)	&		\\	\hline \hline
	\end{tabular}}
	\centering
	\resizebox{\columnwidth}{!}{
		\begin{tabular}{ c c c c c c c c  }
			\multicolumn{7}{l}{Total number of children: 2458. Total number of observations with positive outcome: 2128 }\\\hline
			\legend, & \multicolumn{3}{l}{\footnotesize Standard errors (SE) in parentheses} 
	\end{tabular} }
	
	\label{table:app6}
\end{table}

We begin by comparing the results from \citeauthor{chamberlain:1984}'s CRE probit model with the estimates obtained from applying the method developed in this paper. The significance of estimated coefficients of the control functions suggests that income, land size, and productive farm assets are endogenously determined along with household's labor supply, including that of the child's, decisions.  When income and  wealth are not instrumented, as in the CRE probit, considering the discussion in the paper, we get an incorrect sign for the coefficient on income. Moreover, the result of CRE probit suggests that ownership of land and farm assets do not affect child labor, which, given the many recent evidences, is unlikely in a developing country. The results, thus, make clear the importance of accounting for endogeneity of income, landholding, and farm asset.

The estimates from the control function method suggest that children of households that have a higher landholding are more likely to engage in work. This is in conformity with the findings in \citetalias{basu:2010},  \citetalias{bhalotra:2003} and \citetalias{Cockburn:2007}, where, due to presence of land, labor, and credit market imperfections, ownership of large amount land provides incentives for children to work. As far as income is concerned, we find that higher household income reduces the chances of child labor, which again confirms poverty to be a cause of child labor.

Since the upper and lower bounds of the APE of productive farm assets are high and since the $ \text{CI}_{95\%} $ is only marginally bigger than the bound, it seems that ownership of farm assets leads to a significantly high reduction in children's participation in work. \citeauthor{dumas:2007}, \citetalias{bhalotra:2003} and \citetalias{Cockburn:2007} argue that an increase in asset holding that increases the marginal productivity of labor induces two opposite effects on labor. While the income effect of increased wealth tends to reduce the labor time, the substitution effect, due to the absence of labor market, provides incentives for work, and tends to increase children's labor time. Our results suggest that the wealth effect of farm assets, which are not likely to be operated by children, dominate to reduce children's labor time. Secondly, since the prevalence of farm assets is high in those regions where there has been infrastructure development, it seems that lack of infrastructure development that impedes access to, or does not provide incentives to acquire, productive farm assets may be an important factor determining child labor\footnote{In a separate set of regressions that included only the exogenous variables, we tried to assess if the infrastructure variables had independent impacts on work and schooling decisions of children. These variables turned out to be insignificant, suggesting that the demand for child labor or opportunities for schooling were not affected by infrastructure development or its lack in rural AP. In other words, infrastructure had
	its effect on work and schooling outcomes only through its impact on the economic conditions of certain households, which validates using infrastructure variables as instruments for farming assets.}. Finally, we find that older children and boys are more likely to work.

\section{Concluding Remarks}
The objective of the paper has been to develop a method to estimate structural measures of interest such as the average partial effects for panel data binary response model in a triangular system while accounting for multiple unobserved heterogeneities. The unobserved heterogeneity terms constitute of time invariant random effects/coefficients and idiosyncratic errors. We propose that the expected values -- conditional on the histories of the endogenous variables, $X_{i}\equiv(\pmb{x}^{\prime}_{i1},\ldots, \pmb{x}^{\prime}_{iT})^{\prime}$, and the exogenous variables, $Z_{i}\equiv(\pmb{z}^{\prime}_{i1},\ldots, \pmb{z}^{\prime}_{iT})^{\prime}$ -- of the heterogeneity terms be used as control functions (CF).

The proposed method makes a number of interesting contribution to the literature. First, among the class of triangular system with imposed structures similar to ours, the proposed CF method requires weaker restrictions than the traditional control function methods. Secondly, when instruments have a small support, the CFs, which exploit panel data, help in point-identifying structural measures such as the APEs when the endogenous variables have a large support. Bounds on the structural measures are provided when the support assumption is not satisfied. Thirdly, the method allows for multiple endogenous variables, all of which are determined simultaneously. Finally, in an equivalence result we showed that for linear panel data models, when the structural equation is augmented with the proposed control functions, the resulting estimates are equivalent to the ones that are obtained when the structural model is estimated by a certain two-stage least squares. Also, Monte Carlo experiments show that compared to alternative panel data binary choice models similar to ours, our method performs better.  


The estimator was applied to estimate the causal effects of income, land size, and farm assets on the incidence of child labor. We found that household income and ownership of farming assets significantly lower the incidence of child labor, suggesting a strong income effect of farm assets. Secondly, large landholding increases the likelihood of child labor, suggesting a substitution effect of land ownership. Thirdly, a test of exogeneity revealed that land size is determined endogenously along with household labor supply decisions, contrary to what most empirical studies on child labor in developing countries assume.

Finally, we would like to note that (i) extension of the methodology for estimating dynamic binary choice models and (ii) identification and estimation the proposed control functions without making distributional assumptions about the heterogeneity terms of the reduced form equations would be important contributions to the literature.


\bibliography{Citation}{}
\bibliographystyle{ecca}

\clearpage	

\appendix

\newcommand{\appsection}[1]{\let\oldthesection\thesection
	\renewcommand{\thesection}{Section \oldthesection:}
	\section{#1}\let\thesection\oldthesection}

\renewcommand{\theequation}{A-\arabic{equation}}
\setcounter{equation}{0}  

\section{Proofs}

\begin{customlem}{2}\label{alm:1} 
	
	(a) Let $X\equiv (\pmb{x}^{\prime}_{1},\ldots,\pmb{x}^{\prime}_{T})^{\prime}$ and $Z\equiv(\pmb{z}_{1}\ldots \pmb{z}_{T} )$. If $\pmb{x}_{t}$ is specified as
	\begin{align}\label{aeq:1}
	\pmb{x}_{t} = \pi\pmb{z}_{t} + \underbrace{\bar{\pi}\bar{\pmb{z}}+
		\pmb{a}}_{\pmb{\alpha}} + \pmb{\epsilon}_{t}, t \in \{1,\ldots, T\} ,
	\end{align}
	where $ \bar{\pmb{z}} = \frac{1}{T}\sum_{t=1}^{T}\pmb{z}_{t} $, $ Z \independent (\pmb{a},\pmb{\epsilon}_{t}) $, $ \pmb{a}\independent \pmb{\epsilon}_{t} $, $ \pmb{\epsilon}_{t} $ is i.i.d., $ \pmb{a} \sim \text{N}(0, \Lambda_{\alpha\alpha})$ and $ \pmb{\epsilon}_{t} \sim \text{N}(0, \Sigma_{\epsilon\epsilon})$, then
	\begin{align}\nonumber
	\E( \pmb{\alpha}|X, Z) \equiv \hat{\pmb{\alpha}}(X, Z, \Theta_{1} )=\bar{\pi}\bar{\pmb{z}}+ \E( \pmb{a}|X, Z) = \bar{\pi}\bar{\pmb{z}}+ \Omega\Sigma_{\epsilon\epsilon}^{-1}\sum_{t=1}^{T}(\pmb{x}_{t}-\pi\pmb{z}_{t}  -\bar{\pi}\bar{\pmb{z}}),
	\end{align}
	where $ \Omega=[T\Sigma_{\epsilon\epsilon}^{-1}+\Lambda_{\alpha\alpha}^{-1}]^{-1} $ is the conditional variance of $ \pmb{a} $ given $ X $ and $ Z $.

	(b) Suppose we have a single endogenous variable, $ x_{t} $. Let $X\equiv(x_{1},\ldots,x_{T})$ and define $Z\equiv(\pmb{z}_{1}\ldots \pmb{z}_{T} )$. Suppose $ x_{t} $ is given by
	\begin{align}\label{aeq:2}
	x_{t} = \pi\pmb{z}_{t} + \bar{\pi}\bar{\pmb{z}}+ a + \epsilon_{t}, t = 1, \ldots, T, 
	\end{align}
	where $ \bar{\pmb{z}} = \frac{1}{T}\sum_{t=1}^{T}\pmb{z}_{t} $. If the errors, $\pmb{\epsilon} \equiv (\epsilon_{1}, \ldots, \epsilon_{T})^{\prime} $, are normally distributed with mean 0 and are non-spherical such that  $ \E(\pmb{\epsilon}\pmb{\epsilon}^{\prime}) = \Omega_{\epsilon\epsilon}$,  $ a \sim N(0,\sigma^{2}_{\alpha})$, $ a\independent \pmb{\epsilon} $, and $ Z \independent (\pmb{a},\pmb{\epsilon}) $, then   
	\begin{align}\nonumber
	\E( a|X, Z) =  \hat{a}(X, Z )=(x_{1}-\pi\pmb{z}_{1} -\bar{\pi}\bar{\pmb{z}})\omega_{1}  + \ldots + (x_{T}-\pi\pmb{z}_{T}-\bar{\pi}\bar{\pmb{z}})\omega_{T}, 
	\end{align}
	where $ (\omega_{1}, \ldots, \omega_{T})^{\prime} = \frac{\Omega_{\epsilon\epsilon}^{-1}e_{T}}{(e^{\prime}_{T}\Omega_{\epsilon\epsilon}^{-1}e_{T} + \sigma^{-2}_{\alpha})} $ and $ e_{T} $ is a vector of ones of dimension $ T $.

\end{customlem}

\begin{proof}
	
	(a)  To obtain $ \hat{\pmb{a}}(X, Z) = \E( \pmb{a}|X, Z)$, we first derive $ f(\pmb{a}|X,
	Z) $, the conditional density function of $ \pmb{a} $ given $ X $ and $ Z $. By Bayes' rule we have
	\begin{align}\label{aeq:4}
	f(\pmb{a}|X,
	Z)=\frac{f(X, Z|\pmb{a})f(\pmb{a}) }{f(X, Z)} = \frac{f(X| Z, \pmb{a})f(Z |\pmb{a})f(\pmb{a}) }{f(X|Z)f(Z)}= \frac{f(X| Z, \pmb{a})f(\pmb{a}) }{f(X| Z)},
	\end{align}
	where the last equality is obtained because $Z$ is independent of the residual individual effects, $\pmb{a}$; that is, $f(Z|\pmb{a}) = f(Z)$. 
	
	Since $ \pmb{a}\independent \pmb{\epsilon}_{t} $, $ \pmb{\epsilon}_{t} $ is i.i.d., $ \pmb{a} \sim \text{N}(0, \Lambda_{\alpha\alpha})$ and $ \pmb{\epsilon}_{t} \sim \text{N}(0, \Sigma_{\epsilon\epsilon})$ then, given (\ref{aeq:1}), it implies that $X$, given $Z$, is normally distributed with mean $ ( (\pi\pmb{z}_{1} + \bar{\pi}\bar{\pmb{z}})^{\prime}, \ldots, (\pi\pmb{z}_{T} + \bar{\pi}\bar{\pmb{z}}))^{\prime} $, and variance $ \Sigma= I_{T} \otimes\Sigma_{\epsilon\epsilon}  + E_{T} \otimes\Lambda_{\alpha\alpha} $, where $ I_{T} $ is an identity matrix of dimension $ T $ and $ E_{T} $ is a $ T\times T $ matrix of ones. That is, $ f(X| Z) $ in (\ref{aeq:4}) is given by 
	\begin{align}\nonumber
	f(X| Z) = \frac{1}{\sqrt{(2\pi)^{mT}|\Sigma|} }\exp\biggr(-\frac{1}{2}R^{\prime}\Sigma^{-1}R\biggr), \text{  where }  R = X -  \begin{bmatrix} \pi\pmb{z}_{1} + \bar{\pi}\bar{\pmb{z}} \\ \vdots \\ \pi\pmb{z}_{T} + \bar{\pi}\bar{\pmb{z}}\end{bmatrix} = \begin{bmatrix} \pmb{r}_{1}\\  \vdots \\ \pmb{r}_{T} \end{bmatrix} 
	\end{align}
	and $m = d_{x}$ is the dimension of $ \pmb{x}_{t} $. Since $ \rank(E_{T}) = 1 $, we can use example 5 in \cite{miller:1981}, which is on the inverse of sum of two Kronecker products, to obtain 
	\begin{align}\nonumber
	&\Sigma^{-1} = I_{T} \otimes\Sigma_{\epsilon\epsilon}^{-1} - E_{T} \otimes[\Sigma_{\epsilon\epsilon} + \text{tr}(E_{T})\Lambda_{\alpha\alpha}]^{-1}\Lambda_{\alpha\alpha}\Sigma_{\epsilon\epsilon}^{-1} \text{ and } \\\nonumber
	&|\Sigma| = |\Sigma_{\epsilon\epsilon}|^{(T-1)}|\Sigma_{\epsilon\epsilon} + \text{tr}(E_{T})\Lambda_{\alpha\alpha}|   \text{ where $ \text{tr}(E_{T}) = T $}, 
	\end{align}
	which allows us to write $ f(X| Z) $ as
	\begin{align}\nonumber
	f(X| Z) &= \frac{1}{\sqrt{(2\pi)^{mT}|\Sigma_{\epsilon\epsilon}|^{(T-1)}|\Sigma_{\epsilon\epsilon} + T\Lambda_{\alpha\alpha}|} }\times\\\label{aeq:6}&\exp\biggr(-\frac{1}{2}\biggr[R^{\prime}[I_{T} \otimes\Sigma_{\epsilon\epsilon}^{-1}]R-\sum_{t=1}^{T}\pmb{r}_{t}^{\prime}[\Sigma_{\epsilon\epsilon} + T\Lambda_{\alpha\alpha}]^{-1}\Lambda_{\alpha\alpha}\Sigma_{\epsilon\epsilon}^{-1}\sum_{t=1}^{T}\pmb{r}_{t} \biggr]\biggr).
	\end{align}

	Since $ f(X| Z, \pmb{a} ) = f((\pmb{\epsilon}^{\prime}_{1}, \ldots \pmb{\epsilon}_{T}^{\prime} )^{\prime}) $, $ \pmb{\epsilon}_{t} $'s are i.i.d., $ \pmb{\epsilon}_{t} \sim \text{N}(0, \Sigma_{\epsilon\epsilon})$, and $ \pmb{a} \sim \text{N}(0, \Lambda_{\alpha\alpha})$, $ f(X| Z, \pmb{a} )f(\pmb{a}) $  in (\ref{aeq:4}) is
	\begin{align}\nonumber
	f(X| Z, \pmb{a})f(\pmb{a}) &= \frac{1}{\sqrt{(2\pi)^{mT+m}|\Sigma_{\epsilon\epsilon}|^{T}|\Lambda_{\alpha\alpha}|} }\times\\\label{aeq:7}&\exp\biggr(-\frac{1}{2}\biggr[(R-e_{T}\otimes\pmb{a})^{\prime}[I_{T} \otimes\Sigma_{\epsilon\epsilon}]^{-1}(R-e_{T}\otimes\pmb{a})+ \pmb{a}^{\prime}\Lambda_{\alpha\alpha}^{-1}\pmb{a}\biggr]\biggr),
	\end{align}	 
	where $ R-e_{T}\otimes\pmb{a} = (\pmb{\epsilon}^{\prime}_{1}, \ldots \pmb{\epsilon}_{T}^{\prime} )^{\prime} $, $ e_{T} $ being vector of ones of dimension $ T $. 
	
	The following matrix results, 
	\begin{enumerate}
		
		\item $ (A_{m\times m} \otimes B_{n\times n})^{-1}  = A_{m\times m}^{-1} \otimes B^{-1}_{n\times n} $,
		
		\item $ (A_{p\times q} \otimes B_{r\times s})(C_{q \times k} \otimes D_{s\times l}) = A_{p\times q}C_{q\times k} \otimes B_{r\times s}D_{s\times l} $ and
		
		\item $ e^{\prime}_{T}I_{T}e_{T} = T $,
	\end{enumerate}
	allow us to write the expression in the square parenthesis in (\ref{aeq:7}) as 
	\begin{align}\label{aeq:8}
	&R^{\prime}[I_{T} \otimes\Sigma_{\epsilon\epsilon}^{-1}]R - \pmb{a}^{\prime}\Sigma_{\epsilon\epsilon}^{-1}\sum_{t=1}^{T}\pmb{r}_{t}-  \sum_{t=1}^{T}\pmb{r}_{t}^{\prime}\Sigma_{\epsilon\epsilon}^{-1}\pmb{a}+ \pmb{a}^{\prime}[\Lambda_{\alpha\alpha}^{-1} +T\Sigma_{\epsilon\epsilon}^{-1} ]\pmb{a}.
	\end{align}
	
	Using the results in  (\ref{aeq:6}),  (\ref{aeq:7}) and (\ref{aeq:8}) and the result that $ |A^{-1}| = |A|^{-1} $, if $ A $ is nonsingular, we get
	\begin{align}\nonumber
	f(\pmb{a}|X, Z) &= \frac{f(X| Z, \pmb{a})f(\pmb{a})}{f(X| Z)}\\\nonumber
	&= \frac{1}{\sqrt{(2\pi)^{m}|\Sigma_{\epsilon\epsilon}[\Sigma_{\epsilon\epsilon} + T\Lambda_{\alpha\alpha}]^{-1}\Lambda_{\alpha\alpha}|} }\times\\\nonumber&\exp\biggr(-\frac{1}{2}\biggr[\sum_{t=1}^{T}\pmb{r}_{t}^{\prime}[\Sigma_{\epsilon\epsilon} + T\Lambda_{\alpha\alpha}]^{-1}\Lambda_{\alpha\alpha}\Sigma_{\epsilon\epsilon}^{-1}\sum_{t=1}^{T}\pmb{r}_{t} - \pmb{a}^{\prime}\Sigma_{\epsilon\epsilon}^{-1}\sum_{t=1}^{T}\pmb{r}_{t}-\\ \label{aeq:9} &\hspace{2cm}\sum_{t=1}^{T}\pmb{r}_{t}^{\prime}\Sigma_{\epsilon\epsilon}^{-1}\pmb{a}+ \pmb{a}^{\prime}[\Lambda_{\alpha\alpha}^{-1} +T\Sigma_{\epsilon\epsilon}^{-1} ]\pmb{a}\biggr]\biggr).
	\end{align}

	Let $ [\Lambda_{\alpha\alpha}^{-1} +T\Sigma_{\epsilon\epsilon}^{-1} ] = \Omega^{-1} $, then $ \sum_{t=1}^{T}\pmb{r}_{t}^{\prime}[\Sigma_{\epsilon\epsilon} + T\Lambda_{\alpha\alpha}]^{-1}\Lambda_{\alpha\alpha}\Sigma_{\epsilon\epsilon}^{-1}\sum_{t=1}^{T}\pmb{r}_{t} $ and  $ \sum_{t=1}^{T}\pmb{r}^{\prime}_{t}\Sigma_{\epsilon\epsilon}^{-1} $, 
	in (\ref{aeq:9}), after a few matrix manipulations, can be written as
	\begin{align}\label{aeq:10}
	\sum_{t=1}^{T}\pmb{r}_{t}^{\prime}[\Sigma_{\epsilon\epsilon} + T\Lambda_{\alpha\alpha}]^{-1}\Lambda_{\alpha\alpha}\Sigma_{\epsilon\epsilon}^{-1}\sum_{t=1}^{T}\pmb{r}_{t} = \sum_{t=1}^{T}\pmb{r}_{t}^{\prime}\Sigma_{\epsilon\epsilon}^{-1}\Omega\Omega^{-1}\Omega\Sigma_{\epsilon\epsilon}^{-1}\sum_{t=1}^{T}\pmb{r}_{t} \text{  and }
	\end{align}  
	\begin{align}\label{aeq:11}
	\sum_{t=1}^{T}\pmb{r}^{\prime}_{t}\Sigma_{\epsilon\epsilon}^{-1}=\sum_{t=1}^{T}\pmb{r}^{\prime}_{t}\Sigma_{\epsilon\epsilon}^{-1}\Omega\Omega^{-1} \text{  respectively. }
	\end{align}

	Given (\ref{aeq:10}) and (\ref{aeq:11}), we can write $ f(\pmb{a}|X, Z) $ in (\ref{aeq:9}) as 
	\begin{align}\nonumber
	f(\pmb{a}|X, Z) &=\frac{1}{\sqrt{(2\pi)^{m}|\Omega|} }\exp\biggr(-\frac{1}{2}\biggr[ \biggr[\pmb{a}-\Omega\Sigma_{\epsilon\epsilon}^{-1}\sum_{t=1}^{T}\pmb{r}_{t}\biggr]^{\prime}\Omega^{-1}\biggr[\pmb{a}-\Omega\Sigma_{\epsilon\epsilon}^{-1}\sum_{t=1}^{T}\pmb{r}_{t}\biggr]  \biggr]\biggr).
	\end{align} 
	In other words,  $ \pmb{a} $, given $ X $ and $ Z $, is normally distributed with conditional mean 
	\begin{align}\nonumber
	\E( \pmb{a}|X, Z) = \hat{\pmb{a}}(X, Z)=\Omega\Sigma_{\epsilon\epsilon}^{-1}\sum_{t=1}^{T}(\pmb{x}_{t}-\pi\pmb{z}_{t}-\bar{\pi}\bar{\pmb{z}}_{t}) 
	\end{align} and conditional variance $\Omega =\Sigma_{\epsilon\epsilon} [\Sigma_{\epsilon\epsilon} + T\Lambda_{\alpha\alpha} ]^{-1}\Lambda_{\alpha\alpha}$.

	(b) While discussing the restrictions imposed on the reduced form equation, we had stated that when $ d_{x} = 1 $, the assumption that $ a $ and $ \epsilon_{t} $ are completely independent of $ Z $ can be weakened to allow for non-spherical error components. Suppose that $  \epsilon_{t}, t = 1, \ldots, T $ are serially dependent such that $\pmb{\epsilon} \equiv (\epsilon_{1}, \ldots, \epsilon_{T})^{\prime} $ normally distributed with  $\E(\pmb{\epsilon}\pmb{\epsilon}^{\prime}) = \Omega_{\epsilon\epsilon}$  and $ a $ is normally distributed and  is heteroscedastic as in \cite{baltagi:2010}.
	
	To obtain $ \hat{a}(X, Z) = \E( a|X, Z)$, as in part (a), we first derive $ f(a|X,
	Z) $.  Using the fact that $ Z \independent a $, by an application of Bayes' rule, as in part (a), equation (\ref{aeq:4}), we have $ f(a|X, Z)= \frac{f( X| Z, a)f(a) }{f(X|Z)}, $ where $ f(a) $ is the normal density function of $a$.

	Now, since in (\ref{aeq:2}), $ a \independent \pmb{\epsilon}  $, $ a \sim N(0,\sigma^{2}_{\alpha}) $, and $ \pmb{\epsilon}\sim N(0,\Omega_{\epsilon\epsilon})$, it implies that $X$, given $Z$, is normally distributed with mean $ ( (\pi\pmb{z}_{1} + \bar{\pi}\bar{\pmb{z}}), \ldots, (\pi\pmb{z}_{T} + \bar{\pi}\bar{\pmb{z}}))^{\prime} $, and variance $ \Sigma= \Omega_{\epsilon\epsilon}  + \sigma^{2}_{\alpha}e_{T}e_{T}^{\prime} $, where  $ e_{T} $ is a vector of ones of dimension $ T $. That is, 
	\begin{align}\label{aeq:12}
	f(X|Z) = \frac{1}{\sqrt{(2\pi)^{T}|\Sigma|}}\exp(-\frac{1}{2}R^{\prime}\Sigma^{-1}R), \text{  where }  R = X -  \begin{bmatrix} \pi\pmb{z}_{1} + \bar{\pi}\bar{\pmb{z}} \\ \vdots \\ \pi\pmb{z}_{T} + \bar{\pi}\bar{\pmb{z}}\end{bmatrix},
	\end{align}
	and where by Sherman-Morrison formula,  
	$\Sigma^{-1}= \Omega_{\epsilon\epsilon}^{-1}  - \frac{\sigma^{2}_{\alpha}\Omega_{\epsilon\epsilon}^{-1}e_{T}e_{T}^{\prime}\Omega_{\epsilon\epsilon}^{-1}}{1+e_{T}^{\prime} \Omega_{\epsilon\epsilon}^{-1} e_{T}} $, and $ |\Sigma| =  |\Omega_{\epsilon\epsilon}|(1+\sigma^{2}_{\alpha}e_{T}^{\prime} \Omega_{\epsilon\epsilon}^{-1} e_{T}) $. 
	
	Since $ X$ given $ (Z, a) $ has the same distribution as $ \pmb{\epsilon} = R - ae_{T}  $, we have  
	\begin{align}\label{aeq:13}
	f( X| Z, a)f(a) = \frac{1}{\sqrt{(2\pi)^{T+1}|\Omega_{\epsilon\epsilon}|\sigma^{2}_{\alpha}}}  \exp\biggr(-\frac{1}{2}[(R - ae_{T})^{\prime}\Omega_{\epsilon\epsilon}^{-1}(R - ae_{T})+ \frac{a^{2}}{\sigma^{2}_{\alpha}}]\biggr).
	\end{align}
	
	Finally, because $ f(a|X,Z)=\frac{f(X| Z, a)f(a)}{f(X| Z)} $, using (\ref{aeq:12}) and (\ref{aeq:13}), it can be shown that $ a $ given $ X $ and $ Z $ is normally distributed with conditional mean 
	\begin{align}\nonumber
	\E(a|X, Z) =\hat{a}(X, Z )=(x_{1}-\pi\pmb{z}_{1}-\bar{\pi}\bar{\pmb{z}})\omega_{1}  + \ldots + (x_{T}-\pi\pmb{z}_{T}-\bar{\pi}\bar{\pmb{z}})\omega_{T}, 
	\end{align}
	where $ (\omega_{1}, \ldots, \omega_{T})^{\prime} = \frac{\Omega_{\epsilon\epsilon}^{-1}e_{T}}{(e_{T}^{\prime}\Omega_{\epsilon\epsilon}^{-1}e_{T} + \sigma^{-2}_{\alpha})} $, and conditional variance, $ \sigma^{2}_{\alpha}(\sigma^{2}_{\alpha}e_{T}^{\prime}\Omega_{\epsilon\epsilon}^{-1}e_{T} +1)^{-1}  $.
	
\end{proof}

\begin{aprop}
	Let  $ Z \indep (\theta, \pmb{\alpha}, \zeta_{t}, \pmb{\epsilon}_{t})$. When $ \theta$ and $ \pmb{\alpha}$  are correlated and so are $ \zeta_{t}$ and $ \pmb{\epsilon}_{t} $, then $ \theta + \zeta_{t}\indep X | V$ whereas $\theta + \zeta_{t}\nindep X | \pmb{\upsilon}_{t}$, where  
	$ \pmb{\upsilon}_{t} =  \pmb{\alpha}+ \pmb{\epsilon}_{t} = \pmb{x}_{t}+ \pi\pmb{z}_{t} $ and $ V \equiv (\pmb{\upsilon}_{1}, \ldots, \pmb{\upsilon}_{T})$. 
\end{aprop}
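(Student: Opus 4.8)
I would prove the two assertions separately. For $\theta+\zeta_{t}\indep X\mid V$: writing $\pmb{x}_{s}=\pi\pmb{z}_{s}+\pmb{\upsilon}_{s}$ makes $X$ a deterministic affine function of the pair $(Z,V)$, while $V=(\pmb{\alpha}+\pmb{\epsilon}_{1},\ldots,\pmb{\alpha}+\pmb{\epsilon}_{T})$ and $\theta+\zeta_{t}$ depend only on the heterogeneity vector. Reading the hypothesis as independence of $Z$ from the whole heterogeneity vector $(\theta,\pmb{\alpha},\pmb{\zeta},\pmb{\epsilon})$, as in the traditional set-up, we obtain $Z\indep(\theta+\zeta_{t},V)$; and whenever $Z\indep(U,W)$ one has $U\indep g(Z,W)\mid W$ for every measurable $g$, because conditionally on $W$ the variable $Z$ keeps its unconditional law and stays independent of $U$. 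Taking $g:(Z,V)\mapsto X$ gives $\theta+\zeta_{t}\indep X\mid V$, with no rank or invertibility condition on $\pi$; this is the reasoning already sketched in Remark \ref{rm:2}.

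For $\theta+\zeta_{t}\nindep X\mid\pmb{\upsilon}_{t}$ (which presumes $T\ge 2$, else $\pmb{\upsilon}_{t}=V$), I would argue by contradiction: conditional independence would force $\E[\theta+\zeta_{t}\mid\pmb{\upsilon}_{t},X]=\E[\theta+\zeta_{t}\mid\pmb{\upsilon}_{t}]$, so it suffices to show the left-hand conditional mean genuinely varies with the extra coordinates $\pmb{x}_{s}$, $s\neq t$. Mimicking the proof of Lemma \ref{lm:0} --- law of iterated expectations plus its condition (ii), $\E[\theta\mid\pmb{\alpha}]=\pmb{\varphi}_{\alpha}\pmb{\alpha}$ and $\E[\zeta_{t}\mid\pmb{\epsilon}_{t}]=\pmb{\varphi}_{\epsilon}\pmb{\epsilon}_{t}$ --- yields $\E[\theta+\zeta_{t}\mid\pmb{\upsilon}_{t},X]=\pmb{\varphi}_{\alpha}\E[\pmb{\alpha}\mid\pmb{\upsilon}_{t},X]+\pmb{\varphi}_{\epsilon}\E[\pmb{\epsilon}_{t}\mid\pmb{\upsilon}_{t},X]$, and since $\pmb{\epsilon}_{t}=\pmb{\upsilon}_{t}-\pmb{\alpha}$ with $\pmb{\upsilon}_{t}$ already conditioned on, this equals $(\pmb{\varphi}_{\alpha}-\pmb{\varphi}_{\epsilon})\E[\pmb{\alpha}\mid\pmb{\upsilon}_{t},X]+\pmb{\varphi}_{\epsilon}\pmb{\upsilon}_{t}$. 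Everything then reduces to showing that $\E[\pmb{\alpha}\mid\pmb{\upsilon}_{t},X]$ is not $\sigma(\pmb{\upsilon}_{t})$-measurable, which holds because each $\pmb{x}_{s}=\pi\pmb{z}_{s}+\pmb{\alpha}+\pmb{\epsilon}_{s}$, $s\neq t$, is an extra noisy signal of the time-invariant $\pmb{\alpha}$ whose noise $\pi\pmb{z}_{s}+\pmb{\epsilon}_{s}$ is independent of $\pmb{\alpha}$, so under the Gaussian AS \ref{as:3} (with $\Lambda_{\alpha\alpha},\Sigma_{\epsilon\epsilon}$ of full rank) the posterior mean of $\pmb{\alpha}$ loads strictly on those signals. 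Equivalently, under joint normality one computes that for any $s\neq t$, $\cov(\theta+\zeta_{t},\pmb{x}_{s}\mid\pmb{\upsilon}_{t})=\cov(\theta+\zeta_{t},\pmb{\alpha}\mid\pmb{\upsilon}_{t})$ --- the $\pi\pmb{z}_{s}$ and (in the leading case where only contemporaneous errors are correlated) $\pmb{\epsilon}_{s}$ pieces drop out --- and a conditionally independent pair would be conditionally uncorrelated.

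The step I expect to be the real obstacle is the knife-edge $\pmb{\varphi}_{\alpha}=\pmb{\varphi}_{\epsilon}$. In the scalar case $\pmb{\varphi}_{\alpha}=\cov(\theta,\alpha)/\var(\alpha)$ and $\pmb{\varphi}_{\epsilon}=\cov(\zeta_{t},\epsilon_{t})/\var(\epsilon_{t})$, and $\cov(\theta+\zeta_{t},\alpha\mid\upsilon_{t})=\bigl(\cov(\theta,\alpha)\var(\epsilon_{t})-\cov(\zeta_{t},\epsilon_{t})\var(\alpha)\bigr)/\var(\upsilon_{t})$, which vanishes exactly when $\pmb{\varphi}_{\alpha}=\pmb{\varphi}_{\epsilon}$; in that degenerate configuration the two channels cancel in the display above and, under normality and the maintained correlation pattern, conditional independence actually holds. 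I would therefore carry the explicit covariance formula and state the second assertion for the generic configuration of covariance parameters consistent with the stated nonzero correlations (or simply add $\pmb{\varphi}_{\alpha}\neq\pmb{\varphi}_{\epsilon}$ as a maintained condition). The remainder is bookkeeping: assembling the Gaussian conditional moments via AS \ref{as:3} and Lemma \ref{lm:1}-type algebra, and noting that $\pi\pmb{z}_{t}$, recoverable from $(\pmb{x}_{t},\pmb{\upsilon}_{t})$, is pure independent noise, which makes transparent why the single-period set $\pmb{\upsilon}_{t}$ is strictly less informative than $V$ once $T\ge 2$.
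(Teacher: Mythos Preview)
Your first part matches the paper's argument exactly: the paper also passes from $\sigma(X,V)$ to $\sigma(\pi Z,V)$ via the one-to-one map and then uses $Z\indep(r_{t},V)$ to drop $\pi Z$ from the conditioning set, working with $\E[f(r_{t})\mid\cdot]$ for bounded measurable $f$.

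For the second part, the paper takes a shorter route than yours. It simply asserts that $r_{t}=\theta+\zeta_{t}$ and $\pmb{\upsilon}_{s}=\pmb{\alpha}+\pmb{\epsilon}_{s}$, $s\neq t$, remain correlated after conditioning on $\pmb{\upsilon}_{t}$, and since $\pmb{x}_{s}=\pi\pmb{z}_{s}+\pmb{\upsilon}_{s}$ with $\pi\pmb{z}_{s}$ independent of everything, $r_{t}$ remains correlated with $\pmb{x}_{s}$ given $\pmb{\upsilon}_{t}$; hence $\E[f(r_{t})\mid X_{-t},\pi\pmb{z}_{t},\pmb{\upsilon}_{t}]\neq\E[f(r_{t})\mid\pmb{\upsilon}_{t}]$. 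Your conditional-mean decomposition $(\pmb{\varphi}_{\alpha}-\pmb{\varphi}_{\epsilon})\E[\pmb{\alpha}\mid\pmb{\upsilon}_{t},X]+\pmb{\varphi}_{\epsilon}\pmb{\upsilon}_{t}$ is a more explicit way of computing the same residual dependence, and your conditional-covariance formula is precisely what underlies the paper's one-line claim. What your route buys is rigor on a point the paper leaves implicit: you correctly observe that in the scalar Gaussian case $\cov(r_{t},\pmb{\upsilon}_{s}\mid\pmb{\upsilon}_{t})$ vanishes exactly when $\pmb{\varphi}_{\alpha}=\pmb{\varphi}_{\epsilon}$, so the paper's unqualified ``are correlated even after conditioning on $\pmb{\upsilon}_{t}$'' needs the generic-parameter caveat you supply. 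The paper does not flag this knife-edge; your treatment is more careful here.
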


\begin{proof}
	
	Now, to show that $ r_{t} =\theta + \zeta_{t}\indep X | V$, we can show that 
	\begin{align}\label{eq:8}
	\E(f(r_{t})|X, V)=\E(f(r_{t})|V),
	\end{align}
	where $ f $ is real, bounded and measurable function \citep[see Proposition 2.3 in ][]{constantinou:2017}. 
	
	Since $ X = \pi Z + V $, there is one-to-one mapping between $(X, V) $ and $(\pi Z, V) $, and therefore the conditioning $ \sigma $-algebra, $\sigma(X, V) $, is same as the $ \sigma $-algebra, $\sigma(\pi Z, V) $. Hence, 
	\begin{align}\label{eq:9}
	\E(f(r_{t})|X, V )=\E(f(r_{t})|\pi Z, V). 
	\end{align}
	Since $ r_{t} $ and $ V $ are independent of $ \pi Z $, we get $ \E(f(r_{t})|\pi Z, V)=\E(f(r_{t})| V) $, and therefore we have $ \E(f(r_{t})|X, V)=\E(f(r_{t})| V) $, which is what we wanted to show. 
	
	When the control function is $ \pmb{\upsilon}_{t} $, we have 
	\begin{align}\nonumber
	\E(f(r_{t})|X, \pmb{\upsilon}_{t} )&=\E(f(r_{t})|X_{-t}, \pmb{x}_{t}, \pmb{\upsilon}_{t} )=\E(f(r_{t})|X_{-t}, \pi \pmb{z}_{t} + \pmb{\upsilon}_{t} , \pmb{\upsilon}_{t}) = \E(f(r_{t})|X_{-t}, \pi \pmb{z}_{t}, \pmb{\upsilon}_{t}),
	\end{align}
	where the second equality follows because $\pmb{x}_{t}= \pi\pmb{z}_{t} + \pmb{\upsilon}_{t}$ and third by the same logic by which we get (\ref{eq:9}). Because $r_{t} = \theta + \zeta_{t} $  and $ \upsilon_{s} =  \pmb{\alpha} +  \pmb{\epsilon}_{s} $, $ s\neq t$, are correlated even after conditioning on $ \pmb{\upsilon}_{t} $, so, conditional on $ \pmb{\upsilon}_{t} $, $ r_{t} $ is correlated with $ \pmb{x}_{s} =\pi \pmb{z}_{s} +  \pmb{\upsilon}_{s} $, $ s\neq t$; that is, $ \E(f(r_{t})|X_{-t}, \pi\pmb{z}_{t}, \pmb{\upsilon}_{t})\neq \E(f(r_{t})| \pi\pmb{z}_{t}, \pmb{\upsilon}_{t}) = \E(f(r_{t})| \pmb{\upsilon}_{t})$.

\end{proof}

\begin{atheo}\label{alm:2}
	If  (i) $ \rank(\E(\pmb{x}_{t}\pmb{x}_{t}^{\prime}))=d_{x}$; (ii) $ \rank(\Pi)=d_{x} $, where $ \Pi = \begin{pmatrix}
	\pi & \bar{\pi}
	\end{pmatrix} $; (iii) $ \rank(\E((\pmb{z}^{\prime}_{t},\bar{\pmb{z}}^{\prime} )^{\prime}(\pmb{z}^{\prime}_{t},\bar{\pmb{z}}^{\prime} )))=k$  where $ k = \dim((\pmb{z}^{\prime}_{t},\bar{\pmb{z}}^{\prime} )^{\prime}) $; and (iv) if assumption AS \ref{as:3} holds so that the covariance matrices of $\pmb{\epsilon}_{t}$ and $\pmb{\alpha}$ are of full rank, then $\rank(\E(\mathbb{X}_{t}\mathbb{X}_{t}^{\prime})) = 3d_{x} $. 
\end{atheo}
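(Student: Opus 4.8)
The plan is to prove the equivalent statement that $\E(\mathbb{X}_t\mathbb{X}_t')$, which is positive semidefinite, is nonsingular, i.e.\ that no nonzero $\pmb{c}=(\pmb{c}_1',\pmb{c}_2',\pmb{c}_3')'$, with the three blocks conformable to $\pmb{x}_t,\hat{\pmb{\alpha}},\hat{\pmb{\epsilon}}_t$ and hence each of dimension $d_x$, satisfies $\pmb{c}'\mathbb{X}_t=\pmb{c}_1'\pmb{x}_t+\pmb{c}_2'\hat{\pmb{\alpha}}+\pmb{c}_3'\hat{\pmb{\epsilon}}_t=0$ almost surely. So I would assume such a $\pmb{c}$ exists and deduce $\pmb{c}=0$. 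First I would substitute the closed forms: by Lemma~\ref{lm:1}, writing $\pmb{r}_s\equiv\pmb{x}_s-\pi\pmb{z}_s-\bar{\pi}\bar{\pmb{z}}=\pmb{a}+\pmb{\epsilon}_s$ and $\bar{\pmb{\epsilon}}\equiv T^{-1}\sum_s\pmb{\epsilon}_s$, one has $\hat{\pmb{\alpha}}=\bar{\pi}\bar{\pmb{z}}+M(\pmb{a}+\bar{\pmb{\epsilon}})$ and hence $\hat{\pmb{\epsilon}}_t=\pmb{x}_t-\pi\pmb{z}_t-\hat{\pmb{\alpha}}=(\pmb{a}+\pmb{\epsilon}_t)-M(\pmb{a}+\bar{\pmb{\epsilon}})$, where $M\equiv[\Sigma_{\epsilon\epsilon}^{-1}+\tfrac1T\Lambda_{\alpha\alpha}^{-1}]^{-1}\Sigma_{\epsilon\epsilon}^{-1}$; under AS~\ref{as:3} and condition~(iv) both $\Sigma_{\epsilon\epsilon}$ and $\Lambda_{\alpha\alpha}$ are positive definite, so $M$ is invertible. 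In particular $\hat{\pmb{\epsilon}}_t$ is a function of the error terms $(\pmb{a},\pmb{\epsilon}_1,\ldots,\pmb{\epsilon}_T)$ alone, while $\pmb{x}_t$ and $\hat{\pmb{\alpha}}$ decompose into a $Z$-part ($\pi\pmb{z}_t+\bar{\pi}\bar{\pmb{z}}$ and $\bar{\pi}\bar{\pmb{z}}$, respectively) plus an error-part.

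The core of the argument is a separation into independent blocks. By AS~\ref{as:1} and AS~\ref{as:3}, $Z$ is independent of $(\pmb{a},\pmb{\epsilon}_1,\ldots,\pmb{\epsilon}_T)$, and $\pmb{a},\pmb{\epsilon}_1,\ldots,\pmb{\epsilon}_T$ are themselves mutually independent, each with positive definite covariance matrix. Writing $\pmb{c}'\mathbb{X}_t=g(Z)+h(\pmb{a},\pmb{\epsilon}_1,\ldots,\pmb{\epsilon}_T)=0$ a.s., the two summands are measurable with respect to independent $\sigma$-algebras, so each is a.s.\ a constant; and since $h$ is linear and homogeneous in mutually independent nondegenerate vectors, $h\equiv\mathrm{const}$ forces the coefficient of each $\pmb{\epsilon}_s$ and of $\pmb{a}$ to be zero (take variances). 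Collecting these coefficients in $h=(\pmb{c}_1+\pmb{c}_3)'(\pmb{a}+\pmb{\epsilon}_t)+(\pmb{c}_2-\pmb{c}_3)'M(\pmb{a}+\bar{\pmb{\epsilon}})$ and using $\bar{\pmb{\epsilon}}=T^{-1}\pmb{\epsilon}_t+T^{-1}\sum_{s\neq t}\pmb{\epsilon}_s$: the coefficient of $\pmb{\epsilon}_s$ for $s\neq t$ is $T^{-1}M'(\pmb{c}_2-\pmb{c}_3)$, so $\pmb{c}_2=\pmb{c}_3$ by invertibility of $M$; the coefficient of $\pmb{\epsilon}_t$ is then $\pmb{c}_1+\pmb{c}_3$, so $\pmb{c}_3=-\pmb{c}_1$ (the $\pmb{a}$-coefficient equation is then automatically satisfied). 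Hence $\pmb{c}_2=\pmb{c}_3=-\pmb{c}_1$, which also makes $h\equiv 0$, so the constant is $0$ and therefore $g(Z)=0$ a.s.

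It remains to squeeze $\pmb{c}_1$ out of $g(Z)=0$. With $\pmb{c}_2=-\pmb{c}_1$, the $\bar{\pi}\bar{\pmb{z}}$ contributions in $\pmb{c}_1'\pmb{x}_t+\pmb{c}_2'\hat{\pmb{\alpha}}$ cancel and $g(Z)=\pmb{c}_1'\pi\pmb{z}_t=0$ a.s. Condition~(iii) makes $\E(\pmb{z}_t\pmb{z}_t')$ nonsingular — it is a principal submatrix of the nonsingular matrix there — so $\pmb{c}_1'\pi\pmb{z}_t=0$ a.s.\ forces $\pi'\pmb{c}_1=0$; the first-stage rank condition~(ii) then gives $\pmb{c}_1=0$, whence $\pmb{c}_2=\pmb{c}_3=0$, contradicting $\pmb{c}\neq 0$. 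Thus $\rank(\E(\mathbb{X}_t\mathbb{X}_t'))=3d_x$. I expect the one delicate point to be the middle step: because the proposed control functions average the reduced-form residuals over $t$, each $\pmb{\epsilon}_t$ enters $\mathbb{X}_t$ both ``directly'' (through $\pmb{x}_t$ and $\hat{\pmb{\epsilon}}_t$) and ``through the average'' $\bar{\pmb{\epsilon}}$ (inside $\hat{\pmb{\alpha}}$ and $\hat{\pmb{\epsilon}}_t$), so the coefficient of $\pmb{\epsilon}_t$ is genuinely different from that of $\pmb{\epsilon}_s$, $s\neq t$; keeping these apart, together with the invertibility of the shrinkage matrix $M$ (which is exactly where condition~(iv) is used), is what pins $\pmb{c}_2,\pmb{c}_3$ to $\pmb{c}_1$, after which conditions (ii)–(iii) finish the job.
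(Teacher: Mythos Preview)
Your proof is correct and takes a genuinely different route from the paper's. The paper first changes basis from $(\pmb{x}_t,\hat{\pmb{\alpha}},\hat{\pmb{\epsilon}}_t)$ to $(\pmb{x}_t,\pmb{\upsilon}_t,\hat{\pmb{\alpha}})$ via $\hat{\pmb{\epsilon}}_t=\pmb{\upsilon}_t-\hat{\pmb{\alpha}}$, reduces ``without loss of generality'' to the case $\bar{\pi}=0$ (so that $\Pi=\pi$), and then argues block by block, showing that each of the three $d_x$-column slices of $\E(\mathbb{X}_t\mathbb{X}_t')$ has full column rank. Your null-space argument is more direct and more transparent about where each hypothesis enters: you split $\pmb{c}'\mathbb{X}_t$ into a $Z$-measurable piece and an error-measurable piece, use the independence in AS~\ref{as:1}/AS~\ref{as:3} to force each to be a.s.\ constant, read off $\pmb{c}_2=\pmb{c}_3=-\pmb{c}_1$ from the vanishing variance of the error part (this is exactly where (iv) enters, via invertibility of the shrinkage matrix $M$), and then finish with the instrument rank condition. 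One point worth flagging: at the very last step you actually use $\rank(\pi)=d_x$, not merely $\rank(\Pi)=d_x$, since once $\pmb{c}_2=\pmb{c}_3=-\pmb{c}_1$ one has $\pmb{c}'\mathbb{X}_t=\pmb{c}_1'(\pmb{x}_t-\hat{\pmb{\alpha}}-\hat{\pmb{\epsilon}}_t)=\pmb{c}_1'\pi\pmb{z}_t$ and the $\bar{\pi}$ block never bites. The paper handles exactly this by its $\bar{\pi}=0$ reduction, so you are in line with what the paper effectively takes condition (ii) to mean; just be explicit that the final implication rests on $\pi$ having full row rank. (You also implicitly use $T\ge 2$, so that an $s\ne t$ term exists to pin down $\pmb{c}_2=\pmb{c}_3$.)
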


\begin{proof}

	Now, condition (i) of the lemma is the ``rank condition" for the standard probit model when $ \pmb{x}_{it} $ is exogenous and the object of interest is $ \pmb{\varphi} $ or marginal effects. This condition is assumed to hold true. Similarly, condition (iii) is the rank condition for the identification of the reduced form coefficients, $  \Pi = \begin{pmatrix}
	\pi & \bar{\pi}\end{pmatrix}  $, which is also assumed to hold.  
	
	To begin with, without loss of generality assume that $ \pmb{z}_{t} $ is uncorrelated with the individual effects $\pmb{\alpha}$ so that $ \bar{\pi}\bar{\pmb{z}} = 0 $. This implies that we can ignore $ \bar{\pmb{z}} $ in the reduced form equation (\ref{aeq:1}) and consider only the dimension of $ \pmb{z}_{t} $, which is $ d_{z} $, in condition (iii) of the lemma, and that  
	\begin{align}\nonumber
	\Pi=\pi, \textrm{  } \hat{\pmb{\alpha}} = \hat{\pmb{a}}, \textrm{  } \hat{\pmb{\epsilon}}_{t} = \pmb{x}_{t} - \pi\pmb{z}_{t} - \hat{\pmb{a}}, \textrm{  } k = d_{z}, \textrm{ and }  \pi \textrm{ a }  d_{x}\times d_{z} \text{  matrix}.
	\end{align}
	
	Since, $ \hat{\pmb{\epsilon}}_{t}= \pmb{\upsilon}_{t} - \hat{\pmb{\alpha}}$, then if $\E(\mathbb{X}_{t}\mathbb{X}_{t}^{\prime})$, where $ \mathbb{X}_{t} = (\pmb{x}^{\prime}_{t},  
	\hat{\pmb{\epsilon}}_{t}^{\prime}, 
	\hat{\pmb{\alpha}}^{\prime})^{\prime} $, were to be invertible (or equivalently have a rank of $ 3d_{x} $) so, too, would  
	\begin{align}\nonumber
	\E\begin{bmatrix} \begin{bmatrix}
	\pmb{x}_{t} \\ \pmb{\upsilon}_{t}\\ \hat{\pmb{\alpha}} \end{bmatrix}
	\begin{bmatrix} \pmb{x}^{\prime}_{t} & \pmb{\upsilon}_{t}^{\prime} &\hat{\pmb{\alpha}}^{\prime}
	\end{bmatrix} \end{bmatrix} \text{ be.}
	\end{align}  
	This is equivalent to stating that the columns of $ (\pmb{x}^{\prime}_{t},  \pmb{\upsilon}_{t}^{\prime}, \hat{\pmb{\alpha}}^{\prime}) $ are linearly independent. Now, if the columns of $ (\pmb{x}^{\prime}_{t},  \pmb{\upsilon}_{t}^{\prime}, \hat{\pmb{\alpha}}^{\prime}) $ are linearly independent, then every subset of its columns, too, is linearly independent. Thus, to show the statement of the theorem to be true, we can show that
	\begin{align}\label{aeq:28}
	&\rank(\E[(\pmb{x}^{\prime}_{t},  
	\pmb{\upsilon}_{t}^{\prime}, \hat{\pmb{\alpha}}^{\prime})^{\prime} 
	\pmb{x}_{t}^{\prime}]) = d_{x},\\ \label{aeq:21}
	&\rank(\E[(\pmb{x}^{\prime}_{t},  
	\pmb{\upsilon}_{t}^{\prime}, \hat{\pmb{\alpha}}^{\prime})^{\prime} 
	\pmb{\upsilon}_{t}^{\prime}]) = d_{x} \text{ and } \\ \label{aeq:27}
	&\rank(\E[(\pmb{x}^{\prime}_{t}, \pmb{\upsilon}_{t}^{\prime}, 
	\hat{\pmb{\alpha}}^{\prime})^{\prime}\hat{\pmb{\alpha}}^{\prime}])= d_{x}, 
	\end{align}
	as both $ \pmb{\upsilon}_{t}$ and $\hat{\pmb{\alpha}}^{\prime} $ are vectors of dimension $ d_{x} $.
	
	Since $ \E[(\pmb{x}^{\prime}_{t},  \pmb{\upsilon}_{t}^{\prime}, \hat{\pmb{\alpha}}^{\prime})^{\prime} \pmb{x}_{t}^{\prime}]$ in eq. (\ref{aeq:28}) has $3d_{x}$ rows and $ d_{x} $ columns, $ \rank(\E[(\pmb{x}^{\prime}_{t}, \pmb{\upsilon}_{t}^{\prime}, \hat{\pmb{\alpha}}^{\prime})^{\prime} 
	\pmb{x}_{t}^{\prime}]) \leq d_{x} $. Consider $  
	\pmb{x}_{t}\pmb{x}_{t}^{\prime} $ in eq. (\ref{aeq:28}). Now, by condition (i) of the lemma, $ \rank(\E[ \pmb{x}_{t}\pmb{x}_{t}^{\prime}])= d_{x} $, which implies that $ d_{x} $ columns of $ \E[ \pmb{x}_{t}\pmb{x}_{t}^{\prime}] $ are linearly independent. This then implies that the $  d_{x}  $ columns of $ \E[(\pmb{x}^{\prime}_{t},\pmb{\upsilon}_{t}^{\prime}, \hat{\pmb{\alpha}}^{\prime})^{\prime} \pmb{x}_{t}^{\prime}] $ are also linearly independent; that is, it implies that $ \rank(\E[(\pmb{x}^{\prime}_{t},  \pmb{\upsilon}_{t}^{\prime},  \hat{\pmb{\alpha}}^{\prime})^{\prime} \pmb{x}_{t}^{\prime}]) \geq d_{x}$. Thus, we can conclude that $\rank(\E[(\pmb{x}^{\prime}_{t},  \pmb{\upsilon}_{t}^{\prime},  \hat{\pmb{\alpha}}^{\prime})^{\prime} \pmb{x}_{t}^{\prime}]) = d_{x}$.
	
	Again, given that $ \E[(\pmb{x}^{\prime}_{t}, \pmb{\upsilon}_{t}^{\prime}, \hat{\pmb{\alpha}}^{\prime})^{\prime} \pmb{\upsilon}_{t}^{\prime}]$ in eq. (\ref{aeq:21}) has $3d_{x}$ rows and $ d_{x} $ columns, $ \rank(\E[(\pmb{x}^{\prime}_{t}, \pmb{\upsilon}_{t}^{\prime}, \hat{\pmb{\alpha}}^{\prime})^{\prime} 
	\pmb{\upsilon}_{t}^{\prime}]) \leq d_{x} $. If we can show the $ \rank(\E[ \pmb{\upsilon}_{t}\pmb{\upsilon}_{t}^{\prime}])$ in eq. (\ref{aeq:21}) is $ d_{x} $, then it would imply that  the $  d_{x}  $ columns of $ \E[(\pmb{x}^{\prime}_{t},\pmb{\upsilon}_{t}^{\prime}, \hat{\pmb{\alpha}}^{\prime})^{\prime} 
	\pmb{\upsilon}_{t}^{\prime}] $ are also linearly independent, implying that $ \rank(\E[(\pmb{x}^{\prime}_{t},  \pmb{\upsilon}_{t}^{\prime},  \hat{\pmb{\alpha}}^{\prime})^{\prime} \pmb{\upsilon}_{t}^{\prime}]) \geq d_{x}$. Thus, we would be able to show that $\rank(\E[(\pmb{x}^{\prime}_{t},  \pmb{\upsilon}_{t}^{\prime},  \hat{\pmb{\alpha}}^{\prime})^{\prime} \pmb{\upsilon}_{t}^{\prime}]) = d_{x} $, as desired.
	
	To show that $\E[\pmb{\upsilon}_{t}\pmb{\upsilon}_{t}^{\prime}]$  has a full column rank of $ d_{x} $, is equivalent to showing that $\pmb{\upsilon}^{\prime}\pmb{c} = (\pmb{x}^{\prime} - \pmb{z}_{t}^{\prime}\pi^{\prime} )\pmb{c}\neq 0 $ almost surely (\textit{a.s.}) whenever $\pmb{c}\neq 0$, $ \pmb{c} \in \mathbb{R}^{d_{x}}$, and $ \pmb{x} \neq \pi\pmb{z}_{t} $ \textit{a.s.}.

	Now, by condition (i) of the lemma 
	\begin{align}\label{aeq:22}
	\pmb{x}^{\prime}_{t}\pmb{c} \neq 0  \text{  \textit{a.s.}}.
	\end{align}
	By condition (ii), according to which the rank of $\pi^{\prime}$ is $ d_{x} $,
	\begin{align}\label{aeq:23}
	\pi^{\prime}\pmb{c} = \pmb{c}_{\pi}\neq 0, \text{ where $ \pmb{c}_{\pi} $ is of dimension, $ d_{z}\times 1 $}.
	\end{align}
	By condition (iii) of the lemma and (\ref{aeq:23}), we have  
	\begin{align}\label{aeq:24}
	\pmb{z}^{\prime}_{t}\pi^{\prime}\pmb{c} = \pmb{z}^{\prime}_{t}\pmb{c}_{\pi} \neq 0  \text{  \textit{a.s.}}.
	\end{align}
	From (\ref{aeq:22}) and (\ref{aeq:24}) we can conclude that $\E[\pmb{\upsilon}_{t}\pmb{\upsilon}_{t}^{\prime}]$ has a full column rank of $ d_{x} $; thus we are able to establish (\ref{aeq:21}).  
	
	Similarly, to show (\ref{aeq:27}) to be true, we can show that $\hat{\pmb{\alpha}}^{\prime}\pmb{c}\neq 0$ a.s. whenever $\pmb{c}\neq 0$, $ \pmb{c} \in \mathbb{R}^{d_{x}}$. Now, 
	\begin{align}\nonumber
	\hat{\pmb{\alpha}}^{\prime}\pmb{c} = \biggr([T\Sigma_{\epsilon\epsilon}^{-1}+\Lambda_{\alpha\alpha}^{-1}]^{-1}\Sigma_{\epsilon\epsilon}^{-1}\biggr(\sum_{t=1}^{T}\pmb{\upsilon}_{t}\biggr)\biggr)^{\prime}\pmb{c}&=\biggr(\sum_{t=1}^{T}\pmb{\upsilon}^{\prime}_{t}\biggr)\Sigma_{\epsilon\epsilon}^{-1}[T\Sigma_{\epsilon\epsilon}^{-1}+\Lambda_{\alpha\alpha}^{-1}]^{-1}\pmb{c}\\\nonumber &=\biggr(\sum_{t=1}^{T}\pmb{\upsilon}^{\prime}_{t}\biggr)\pmb{c}_{\alpha}.
	\end{align} 
	Because  $\Sigma_{\epsilon\epsilon}$ and $\Lambda_{\alpha\alpha}$, the covariance matrices of $\pmb{\epsilon}_{t}$ and $\pmb{\alpha}$ respectively, are symmetric positive definite matrices, $\Sigma_{\epsilon\epsilon}^{-1}[T\Sigma_{\epsilon\epsilon}^{-1}+\Lambda_{\alpha\alpha}^{-1}]^{-1} $ is nonsingular. This implies that  $ \Sigma_{\epsilon\epsilon}^{-1}[T\Sigma_{\epsilon\epsilon}^{-1}+\Lambda_{\alpha\alpha}^{-1}]^{-1}\pmb{c} =\pmb{c}_{\alpha} \neq 0 $. 
	
	Since by (\ref{aeq:22})  and (\ref{aeq:24}),  $ (\pmb{x}^{\prime}_{t} - \pmb{z}^{\prime}_{t}\pi^{\prime})\bar{\pmb{c}}_{\alpha} = \pmb{\upsilon}^{\prime}_{t}\pmb{c}_{\alpha}\neq  0$, when $ \pmb{x}_{t} \neq \pi \pmb{z}_{t} $ \textit{a.s.}, we therefore have
	\begin{align}\label{aeq:25}
	\hat{\pmb{\alpha}}^{\prime}\pmb{c} &= \biggr(\sum_{t=1}^{T}\pmb{\upsilon}^{\prime}_{t}\biggr)\pmb{c}_{\alpha} \neq 0 \textrm{ \textit{a.s.}, thus establishing (\ref{aeq:27}).} 
	\end{align} 
	
	Having established (\ref{aeq:28}), (\ref{aeq:21}) and (\ref{aeq:27}), we can conclude that $\rank(\E(\mathbb{X}_{t}\mathbb{X}_{t}^{\prime})) = 3d_{x} $. 
	
\end{proof}

\begin{atheo}
	Let the linear structural model be $ y_{t}  =  \pmb{x}_{t}^{\prime}\pmb{\varphi}  + \theta + \zeta_{t}, t \in \{1,\ldots, T\} $. Under assumption AS \ref{as:3} and condition (ii) of lemma \ref{lm:0} we can write $ \theta $ as $\theta =  \E(\theta|Z) + \tau = \pmb{\rho}_{\alpha}\bar{\pi}\bar{\pmb{z}} + \tau $, which allows us to write the linear structural equation as 
	\begin{align}\label{aeq:34}
	y_{t}  =  \pmb{x}_{t}^{\prime}\pmb{\varphi} + \pmb{\rho}_{\alpha}\bar{\pi}\bar{\pmb{z}} + \tau + \zeta_{t}. 
	\end{align}
	Under assumptions ACF \ref{ascf:1} and AS \ref{as:4} in the main text, we can write $ \theta + \zeta_{t} $ in the structural equation as $ \theta + \zeta_{t} = \E(\theta+ \zeta_{t}|X,Z) +\eta_{t} = \hat{\pmb{\alpha}}^{\prime}\pmb{\varphi}_{\alpha}+\hat{\pmb{\epsilon}}^{\prime}_{t}\pmb{\varphi}_{\epsilon} + \eta_{t} $, which allows us to write the linear structural equation as 
	\begin{align}\label{aeq:35}
	y_{t} =  \pmb{x}_{t}^{\prime}\pmb{\varphi}_{CF}  + \hat{\pmb{\alpha}}^{\prime}\pmb{\varphi}_{\alpha}+\hat{\pmb{\epsilon}}^{\prime}_{t}\pmb{\varphi}_{\epsilon} + \eta_{t}
	\end{align}
	where $ \tau + \zeta_{t} $ is mean independent of $ Z $ and  $\eta_{t} = \theta+\zeta_{t} - \E(\theta+\zeta_{t}|X, Z)$ is mean independent of $ X $ and $ Z $. 
	
	Given that $ X $ and $ \tau + \zeta_{t} $ are dependent, let $ \hat{\pmb{\varphi}}_{IV} $ denote the estimated coefficient of $ \pmb{x}_{t} $  when (\ref{aeq:34}) is estimated by two stage least squares (2SLS) with instruments, $ \bar{\pmb{z}} = \frac{1}{T}\sum_{t=1}^{T}\pmb{z}_{t} $, $\ddot{\pmb{z}}_{t} = \pmb{z}_{t} - \bar{\pmb{z}}$, and $ \bar{\pmb{z}}^{\prime}\bar{\pi} $. And let $ \hat{\pmb{\varphi}}_{CF} $ denote the estimate of $ \pmb{\varphi} $ obtained from estimating (\ref{aeq:35}) by pooling the data. In the following, we show that $ \hat{\pmb{\varphi}}_{CF} = \hat{\pmb{\varphi}}_{IV} $.  
\end{atheo}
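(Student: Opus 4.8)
This is the panel-data version of the well-known equivalence, in linear models, between control-function estimation and a two-stage least squares procedure built on the same reduced form \citep[cf.][]{wooldridge:2015}, and I would establish it by a Frisch--Waugh--Lovell (FWL) / orthogonal-projection argument carried out on the pooled sample. The starting point is the exact in-sample identity supplied by Lemma \ref{lm:1}: writing $\hat{\pmb{x}}_{it}\equiv\pi\pmb{z}_{it}+\hat{\pmb\alpha}_i$ for the first-stage fit, we have $\pmb{x}_{it}=\hat{\pmb{x}}_{it}+\hat{\pmb\epsilon}_{it}$ and $\hat{\pmb\alpha}_i+\hat{\pmb\epsilon}_{it}=\pmb{x}_{it}-\pi\pmb{z}_{it}$ for every $i,t$ --- and, importantly, these hold whatever split of the reduced-form residual into $\hat{\pmb\alpha}_i$ and $\hat{\pmb\epsilon}_{it}$ is used --- so that the $3d_x$ columns $\{\pmb{x}_{it},\hat{\pmb\alpha}_i,\hat{\pmb\epsilon}_{it}\}$, $\{\hat{\pmb{x}}_{it},\hat{\pmb\alpha}_i,\hat{\pmb\epsilon}_{it}\}$ and $\{\pi\pmb{z}_{it},\hat{\pmb\alpha}_i,\hat{\pmb\epsilon}_{it}\}$ span one and the same column space.

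First I would rewrite the control-function regression \eqref{aeq:35}. Substituting $\pmb{x}_{it}=\pi\pmb{z}_{it}+\hat{\pmb\alpha}_i+\hat{\pmb\epsilon}_{it}$ into the fitted value of the pooled OLS of $y_{it}$ on $(\pmb{x}_{it},\hat{\pmb\alpha}_i,\hat{\pmb\epsilon}_{it})$ and matching coefficient blocks --- legitimate by uniqueness of the OLS solution under the rank condition of Theorem \ref{lm:2} --- shows that $\hat{\pmb\varphi}_{CF}$ equals the coefficient block on $\pi\pmb{z}_{it}$, and hence (after a further reparametrisation) on $\hat{\pmb{x}}_{it}$, in the pooled OLS of $y_{it}$ on $(\hat{\pmb{x}}_{it},\hat{\pmb\alpha}_i,\hat{\pmb\epsilon}_{it})$.

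On the instrumental-variables side I would unpack the 2SLS estimator of \eqref{aeq:34}: because the included exogenous regressor $\bar\pi\bar{\pmb z}_i$ lies in the span of the instrument list $\{\bar{\pmb z}_i,\ddot{\pmb z}_{it},\bar\pi\bar{\pmb z}_i\}$, its second stage reduces to the pooled OLS of $y_{it}$ on $(\widehat{\pmb{x}}^{R}_{it},\bar\pi\bar{\pmb z}_i)$, where $\widehat{\pmb{x}}^{R}_{it}$ is the (error-component-weighted) projection of $\pmb{x}_{it}$ onto the instrument space --- the object the paper identifies as a variant of Baltagi's EC2SLS. The two decisive claims to verify are then: (i) this projection coincides with the Bi{\o}rn reduced-form fit, $\widehat{\pmb{x}}^{R}_{it}=\hat{\pmb{x}}_{it}=\pi\pmb{z}_{it}+\hat{\pmb\alpha}_i$, which I would check using the explicit shrinkage form of $\hat{\pmb\alpha}_i$ in Lemma \ref{lm:1} (and Remark \ref{rm:0}) together with the pooled orthogonality $\sum_{i,t}\ddot{\pmb z}_{it}\bar{\pmb z}_i^{\prime}=0$ and the fact that the Mundlak augmentation makes the first-stage slope the within estimator; and (ii) the regression of $y_{it}$ on $(\hat{\pmb{x}}_{it},\bar\pi\bar{\pmb z}_i)$ returns the same coefficient on $\pmb{x}_{it}$ as the reduced control-function regression of the previous step. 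Claim (ii) is again FWL: the control block $\hat{\pmb\epsilon}_{it}$ is orthogonal to $(\pi\pmb{z}_{it},\hat{\pmb\alpha}_i)$ in the metric in which the first-stage normal equations are satisfied, so partialling it out leaves the $\hat{\pmb{x}}_{it}$-coefficient --- hence $\hat{\pmb\varphi}_{CF}$ --- intact, giving $\hat{\pmb\varphi}_{CF}=\hat{\pmb\varphi}_{IV}$.

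I expect the main obstacle to be exactly this reconciliation: pinning down the metric in which $\hat{\pmb\epsilon}_{it}$ is orthogonal to the fit, and matching the error-component weighting that defines the EC2SLS-type first-stage projection with the empirical-Bayes/BLUP structure of $\hat{\pmb\alpha}_i$ and $\hat{\pmb\epsilon}_{it}$ --- while keeping the roles of ``instrument'' and ``included regressor'' straight in the 2SLS normal equations. Once the first-stage fit and the weighting are aligned, the remaining manipulations are routine projection algebra.
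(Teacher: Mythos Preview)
Your plan has a genuine gap in the two claims you flag as ``decisive.'' Claim (i) --- that the pooled 2SLS first-stage projection of $x_{it}$ onto $(\ddot{\pmb z}_{it},\bar{\pmb z}_i,\bar\pi\bar{\pmb z}_i)$ equals $\pi\pmb z_{it}+\hat{\pmb\alpha}_i$ --- is false. Since $\bar\pi\bar{\pmb z}_i$ lies in the span of $\bar{\pmb z}_i$, the pooled first-stage fit is simply $\pi_w\ddot{\pmb z}_{it}+\pi_b\bar{\pmb z}_i$, whereas from Lemma~\ref{lm:1} (with $\pi=\pi_w$, $\bar\pi=\pi_b-\pi_w$) one has $\pi\pmb z_{it}+\hat{\pmb\alpha}_i=\pi_w\ddot{\pmb z}_{it}+\pi_b\bar{\pmb z}_i+(1-\lambda)(\bar x_i-\pi_b\bar{\pmb z}_i)$, which differs by the shrunk between-residual. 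Claim (ii) fails for the same reason: in the pooled (unweighted) inner product, $\hat{\pmb\epsilon}_{it}$ is \emph{not} orthogonal to $\hat{\pmb\alpha}_i$ --- both contain the between-residual $\bar R_i=\bar x_i-\pi_b\bar{\pmb z}_i$, weighted by $\lambda$ and $1-\lambda$ respectively --- so partialling out $\hat{\pmb\epsilon}_{it}$ does alter the coefficient on $\hat{\pmb x}_{it}$. Appealing to an ``error-component metric'' does not help, because the control-function regression whose coefficient you must recover is pooled OLS, not GLS.

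The paper's proof deals with exactly this obstruction. It writes the CF design in stacked form, splits it with the idempotent within/between projectors $Q$ and $P$, and observes that on the between piece the columns $[\bar X,\ \bar Z\bar\pi+(1-\lambda)\bar R,\ \lambda\bar R]$ span the same space as $[\bar X,\ \bar Z\bar\pi,\ \bar R]$ --- this column-space step is what makes the shrinkage factor $\lambda$ disappear and is the ingredient your argument is missing. Only after that simplification does FWL, together with the residual identity $\ddot X'M_{\ddot R}=\ddot X'P_{\ddot Z}$ (and its between analogue $\bar X'M_{\bar R}=\bar X'P_{\bar Z}$), reduce the CF normal equations to the pooled-2SLS ones. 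If you want to salvage your route, the fix is to carry out the $P$/$Q$ decomposition first and remove $\lambda$ via the column-space argument before invoking any orthogonality; attempting to match fits and metrics directly, as you propose, will not close.
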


\begin{proof} 
	Without any loss of generality assume that there is only endogenous regressor, $ x_{t} $; the proof generalizes to multiple regressors. Given assumption in AS \ref{as:3}, the reduced form is given by 
	\begin{align}\label{aeq:36}
	x_{t} = \pmb{z}^{\prime}_{t}\pi + \bar{\pmb{z}}^{\prime}\bar{\pi}+
	a + \epsilon_{t} ,
	\end{align}
	where $ \bar{\pmb{z}}^{\prime}\bar{\pi}+
	a= \alpha  $ and $ a \sim \N(0, \sigma^{2}_{\alpha} ) $ and $ \epsilon_{t} \sim \N(0, \sigma^{2}_{\epsilon} ) $. Here, $ \bar{y} $,  $ \bar{x} $ and  $ \bar{\pmb{z}} $ denote group means of $ y_{t} $, $ x_{t} $, and $ \pmb{z}_{t} $ respectively.

	Now, by \cite{mundlak:1978} we know that $ \pi $  is equal to the within estimator, $ \pi_{w} $, of the reduced form equation (\ref{aeq:36}) and $ \bar{\pi} = \pi_{b} -\pi_{w}  $, where $ \pi_{b} $ is the between estimator of $ \pi $ in (\ref{aeq:36}). Thus by lemma \ref{lm:1}, the control functions, $\hat{\alpha}$ and $\hat{\epsilon}_{t}$, in (\ref{aeq:35}) are respectively given by    
	\begin{align}\nonumber
	&\hat{\alpha} = \bar{\pmb{z}}^{\prime}(\pi_{b}-\pi_{w}) + (1-\lambda)(\bar{x} - \bar{\pmb{z}}^{\prime}\pi_{b})\\\nonumber
	&\hat{\epsilon}_{t} = x_{t} - \pmb{z}^{\prime}_{t}\pi_{w} - \bar{\pmb{z}}^{\prime}(\pi_{b}-\pi_{w})-(1-\lambda)(\bar{x} - \bar{\pmb{z}}^{\prime}\pi_{b}), \text{    where $  \lambda = \frac{\sigma^{2}_{\epsilon}}{\sigma^{2}_{\epsilon}+T\sigma^{2}_{\alpha}}$.}
	\end{align}
	
	It is convenient to write the panel model in (\ref{aeq:35}) in vector form as  
	\begin{align}\label{aeq:37}
	\textbf{Y} = \textbf{X}\varphi + \hat{\textbf{A}}\varphi_{\alpha} + \hat{\textbf{E}}\varphi_{\epsilon} + \pmb{\eta}, 
	\end{align}
	where \textbf{Y} is a $ NT \times 1 $ data matrix which has the observations on $ y_{it} $ stacked in $ NT $ rows; similarly, \textbf{X} has stacked observations on $ x_{it} $, $ \hat{\textbf{A}} $  on $ \hat{\alpha}_{i} $, and  $ \hat{\textbf{E}} $ on $ \hat{\epsilon}_{it} $. Now, given $ \hat{\alpha} $ and $ \hat{\epsilon}_{t} $, we can write $\hat{\textbf{A}} $  and  $ \hat{\textbf{E}} $, respectively, as 
	\begin{align}\nonumber
	&\hat{\textbf{A}} = P\textbf{Z}\bar{\pi} + (1-\lambda)P\textbf{R} \text{ and  } \hat{\textbf{E}} = Q\textbf{R} + \lambda P\textbf{R},
	\end{align}
	where \textbf{R} is a $ NT \times 1 $  matrix, which has the residuals of the reduced form equation (\ref{aeq:36}), $ r_{it} =  x_{t} - \pmb{z}^{\prime}_{t}\pi_{w} - \bar{\pmb{z}}^{\prime}(\pi_{b}-\pi_{w})$, stacked in $ NT $ rows, \textbf{Z} has stacked observations on $ \pmb{z}_{it} $,  $ P $ is a matrix that averages the observation across time for each individual, i.e., $P=I_{N}\otimes\tilde{J}_{T}$ where $J_{T}$ is a matrix of ones of dimension $T$ and $\tilde{J}_{T}=J_{T}/T$,  and $ Q=I_{NT}-P $ is a matrix that obtains the deviations from individual means. Given the above, we can write (\ref{aeq:37}) as 
	\begin{align} \label{aeq:38}
	\textbf{Y} = \textbf{X}\varphi + (P\textbf{Z}\bar{\pi} + (1-\lambda)P\textbf{R})\varphi_{\alpha} + (Q\textbf{R} + \lambda P\textbf{R})\varphi_{\epsilon} + \pmb{\eta}.  
	\end{align}
	
	Because $ P $ and $ Q $ are idempotent and $ P $ is orthogonal to $ Q $, premultiplying (\ref{aeq:38}) through out by $ Q $, we get
	\begin{align} \label{aeq:39}
	\ddot{\textbf{Y}} = \ddot{\textbf{X}}\varphi  + \ddot{\textbf{R}}\varphi_{\epsilon} + \ddot{\pmb{\eta}}, 
	\end{align}
	where $ \ddot{\textbf{Y}} $ is a $ NT \times 1 $  matrix that has $\ddot{y}_{it} = y_{it} -\bar{y}_{i} $, the deviations of $ y_{it} $ from the individual means, $ \bar{y}_{i} $, (within transformation) stacked in $ NT $ rows, and $ \ddot{\textbf{R}} = \ddot{\textbf{X}}-\ddot{\textbf{Z}}\pi_{w} $ is the matrix obtained after within transforming the residuals, $ \textbf{R} $. $ \ddot{\textbf{R}} $, incidentally, are the residuals obtained after applying within transformation to the reduced form equation (\ref{aeq:36}) and estimating it as a fixed effect model. 
	
	Now, premultiplying (\ref{aeq:38}) through out by $ P $, we get
	\begin{align} \label{aeq:40}
	\bar{\textbf{Y}} = \bar{\textbf{X}}\varphi +(\bar{\textbf{Z}}\bar{\pi} + (1-\lambda)\bar{\textbf{R}})\varphi_{\alpha} + \lambda\bar{\textbf{R}}\varphi_{\epsilon} + \bar{\pmb{\eta}}, 
	\end{align}
	where $ \bar{\textbf{Y}} $ is a $ NT \times 1 $  matrix that has the individual means, $ \bar{y}_{i} $, (between transformation) stacked in $ NT $ rows, and $ \bar{\textbf{R}} = \bar{\textbf{X}}-\bar{\textbf{Z}}\pi_{b} $ is the matrix obtained after between transforming the residuals, $ \textbf{R} $. $ \bar{\textbf{R}} $, incidentally, are the residuals obtained after applying between transformation to the reduced form equation (\ref{aeq:36}) and estimating it by OLS.
	
	Since the column space of the LHS variables, $  [\bar{\textbf{X}}, \bar{\textbf{Z}}\bar{\pi} + (1-\lambda)\bar{\textbf{R}},    \lambda\bar{\textbf{R}}] $,  in (\ref{aeq:40})  is same as that of $  [\bar{\textbf{X}}, \bar{\textbf{Z}}\bar{\pi} ,  \bar{\textbf{R}}] $, the projections of the two matrices will be the same. Therefore, the  estimates of $ \varphi $, $ \varphi_{\alpha} $, and  $ \varphi_{\epsilon} $ in (\ref{aeq:40}) will be the same from estimating the following equation:     
	\begin{align} \label{aeq:41}
	\bar{\textbf{Y}} = \bar{\textbf{X}}\varphi +\bar{\textbf{Z}}\bar{\pi}\varphi_{\alpha} + \bar{\textbf{R}}\varphi_{\epsilon} + \bar{\pmb{\eta}}, 
	\end{align}
	
	From (\ref{aeq:39}) and (\ref{aeq:41}), it is thus evident that estimating the following equation 
	\begin{align} \label{aeq:42}
	\textbf{Y} = \textbf{X}\varphi +\bar{\textbf{Z}}\bar{\pi}\varphi_{\alpha} +[ Q\textbf{R} + P\textbf{R}]\varphi_{\epsilon} + \pmb{\eta}, 
	\end{align}
	by OLS would yield the same result as estimating (\ref{aeq:38}) by OLS. 
	
	By Frisch-Waugh-Lovell Theorem, then $ \varphi $ and $ \varphi_{\alpha} $ in (\ref{aeq:42}) is estimated as  
	\begin{align}\nonumber
	\begin{bmatrix}
	\hat{\varphi}_{CF}\\
	\hat{\varphi}_{\alpha}
	\end{bmatrix} =   \begin{bmatrix}\begin{bmatrix}\textbf{X}^{\prime}\\\hat{\bar{\pi}}^{\prime}\bar{\textbf{Z}}^{\prime}\end{bmatrix} \begin{bmatrix} QM_{\ddot{\textbf{R}}} +PM_{\bar{\textbf{R}}}\end{bmatrix}\begin{bmatrix}\textbf{X} &\bar{\textbf{Z}}\hat{\bar{\pi}}\end{bmatrix}\end{bmatrix}^{-1} \begin{bmatrix}\textbf{X}^{\prime}\\\hat{\bar{\pi}}^{\prime}\bar{\textbf{Z}}^{\prime}\end{bmatrix}\begin{bmatrix} QM_{\ddot{\textbf{R}}} +PM_{\bar{\textbf{R}}}\end{bmatrix} \textbf{Y}, 
	\end{align}
	where $ \hat{\bar{\pi}} $ is the estimate of $ \bar{\pi} $, $ M_{\ddot{\textbf{R}}} = I_{NT} - \ddot{\textbf{R}}[\ddot{\textbf{R}}^{\prime}\ddot{\textbf{R}}]^{-1}\ddot{\textbf{R}}^{\prime}$  and $ M_{\bar{\textbf{R}}} = I_{NT} - \bar{\textbf{R}}[\bar{\textbf{R}}^{\prime}\bar{\textbf{R}}]^{-1}\bar{\textbf{R}}^{\prime}$. Since \textbf{X} and \textbf{Y} can be decomposed as $ \textbf{X} = \ddot{\textbf{X}} + \bar{\textbf{X}} $ and $ \textbf{Y} = \ddot{\textbf{Y}} + \bar{\textbf{Y}} $ respectively, we can write the above equation as 
	\begin{align}\nonumber
	\begin{bmatrix}
	\hat{\varphi}_{CF}\\
	\hat{\varphi}_{\alpha}
	\end{bmatrix} =   \begin{bmatrix}\begin{bmatrix}\ddot{\textbf{X}}^{\prime} + \bar{\textbf{X}}^{\prime}\\\hat{\bar{\pi}}^{\prime}\bar{\textbf{Z}}^{\prime}\end{bmatrix} \begin{bmatrix} QM_{\ddot{\textbf{R}}} +PM_{\bar{\textbf{R}}}\end{bmatrix}\begin{bmatrix}\ddot{\textbf{X}}+\bar{\textbf{X}} &\bar{\textbf{Z}}\hat{\bar{\pi}}\end{bmatrix}\end{bmatrix}^{-1} \begin{bmatrix}\ddot{\textbf{X}}^{\prime} + \bar{\textbf{X}}^{\prime}\\\hat{\bar{\pi}}^{\prime}\bar{\textbf{Z}}^{\prime}\end{bmatrix}\begin{bmatrix} QM_{\ddot{\textbf{R}}} +PM_{\bar{\textbf{R}}}\end{bmatrix} \begin{bmatrix}\ddot{\textbf{Y}} + \bar{\textbf{Y}}\end{bmatrix}.
	\end{align}
	Taking into account the orthogonality conditions in the above, the above simplifies to  
	\begin{align}\label{aeq:43}
	\begin{bmatrix}
	\hat{\varphi}_{CF}\\
	\hat{\varphi}_{\alpha}
	\end{bmatrix} =   \begin{bmatrix} \ddot{\textbf{X}}^{\prime}M_{\ddot{\textbf{R}}}\ddot{\textbf{X}} + \begin{bmatrix}\bar{\textbf{X}}^{\prime}\\\hat{\bar{\pi}}^{\prime}\bar{\textbf{Z}}^{\prime}\end{bmatrix} M_{\bar{\textbf{R}}}\begin{bmatrix}\bar{\textbf{X}} &\bar{\textbf{Z}}\hat{\bar{\pi}}\end{bmatrix} \end{bmatrix}^{-1} \begin{bmatrix} \ddot{\textbf{X}}^{\prime}M_{\ddot{\textbf{R}}}\ddot{\textbf{Y}} + \begin{bmatrix}\bar{\textbf{X}}^{\prime}\\\hat{\bar{\pi}}^{\prime}\bar{\textbf{Z}}^{\prime}\end{bmatrix} M_{\bar{\textbf{R}}}\bar{\textbf{Y}} \end{bmatrix}.
	\end{align}
	Because $\ddot{\textbf{R}} = M_{\ddot{\textbf{Z}}}\ddot{\textbf{X}} $, where $ M_{\ddot{\textbf{Z}}} = I_{NT} - \ddot{\textbf{Z}}[\ddot{\textbf{Z}}^{\prime}\ddot{\textbf{Z}}]^{-1}\ddot{\textbf{Z}}^{\prime} $, it can be verified that $ \ddot{\textbf{X}}^{\prime}M_{\ddot{\textbf{R}}}= \ddot{\textbf{X}}^{\prime}P_{\ddot{\textbf{Z}}} $,
	where $ P_{\ddot{\textbf{Z}}} = \ddot{\textbf{Z}}[\ddot{\textbf{Z}}^{\prime}\ddot{\textbf{Z}}]^{-1}\ddot{\textbf{Z}}^{\prime} $ is the projection matrix. And because $\bar{\textbf{R}} = M_{\bar{\textbf{Z}}}\bar{\textbf{X}} $, where $ M_{\bar{\textbf{Z}}} = I_{NT} - \bar{\textbf{Z}}[\bar{\textbf{Z}}^{\prime}\bar{\textbf{Z}}]^{-1}\bar{\textbf{Z}}^{\prime} $, it can be verified that $ \bar{\textbf{X}}^{\prime}M_{\bar{\textbf{R}}}= \bar{\textbf{X}}^{\prime}P_{\bar{\textbf{Z}}} $,
	where $ P_{\bar{\textbf{Z}}} = \bar{\textbf{Z}}[\bar{\textbf{Z}}^{\prime}\bar{\textbf{Z}}]^{-1}\bar{\textbf{Z}}^{\prime} $, and that $ \hat{\bar{\pi}}^{\prime}\bar{\textbf{Z}}^{\prime}M_{\bar{\textbf{R}}}= \hat{\bar{\pi}}^{\prime}\bar{\textbf{Z}}^{\prime}$. Thus, we can write $ \hat{\varphi}_{CF}$ and  $
	\hat{\varphi}_{\alpha} $ in (\ref{aeq:43}) as
	\begin{align}\label{aeq:44}
	\begin{bmatrix}
	\hat{\varphi}_{CF}\\
	\hat{\varphi}_{\alpha}
	\end{bmatrix} =   \begin{bmatrix} \ddot{\textbf{X}}^{\prime}P_{\ddot{\textbf{Z}}}\ddot{\textbf{X}} + \begin{bmatrix}\bar{\textbf{X}}^{\prime}\\\hat{\bar{\pi}}^{\prime}\bar{\textbf{Z}}^{\prime}\end{bmatrix} P_{\bar{\textbf{Z}}}\begin{bmatrix}\bar{\textbf{X}} &\bar{\textbf{Z}}\hat{\bar{\pi}}\end{bmatrix} \end{bmatrix}^{-1} \begin{bmatrix} \ddot{\textbf{X}}^{\prime}P_{\ddot{\textbf{Z}}}\ddot{\textbf{Y}} + \begin{bmatrix}\bar{\textbf{X}}^{\prime}\\\hat{\bar{\pi}}^{\prime}\bar{\textbf{Z}}^{\prime}\end{bmatrix} P_{\bar{\textbf{Z}}}\bar{\textbf{Y}} \end{bmatrix},
	\end{align}
	which is the same as the estimates $\hat{\varphi}_{IV}$ and $
	\hat{\rho}_{\alpha}$ when (\ref{aeq:34}) is estimated by the 2SLS using  $ \bar{\pmb{z}} $, $\ddot{\pmb{z}}_{t} = \pmb{z}_{t} - \bar{\pmb{z}}$ and $ \bar{\pmb{z}}^{\prime}\hat{\bar{\pi}} $ as instruments. 
	
	If only the within transform, equation (\ref{aeq:39}), is estimated, then by a similar process as above, beginning with the Frisch-Waugh-Lovell Theorem,  one obtains 
	\begin{align} 
	\hat{\varphi}^{w}_{CF} =   [\ddot{\textbf{X}}^{\prime}P_{\ddot{\textbf{Z}}}\ddot{\textbf{X}}]^{-1} \ddot{\textbf{X}}^{\prime}P_{\ddot{\textbf{Z}}} \ddot{\textbf{Y}}, 
	\end{align}
	which is same as the within estimate of $ \varphi^{w}_{IV} $ when (\ref{aeq:34})  is estimated by fixed effect two-stage least squares (FE2SLS) that utilizes $\ddot{\pmb{z}}_{t} = \pmb{z}_{t} - \bar{\pmb{z}}$ as instruments.
	
	If only the between transform, equation ( \ref{aeq:41}), is estimated, then, again, a similar process as the one used to derive (\ref{aeq:44}), one finally obtains 
	\begin{align}\nonumber
	\begin{bmatrix}
	\hat{\varphi}_{CF}^{b}\\
	\hat{\varphi}_{\alpha}^{b}
	\end{bmatrix} =   \begin{bmatrix}\begin{bmatrix}\bar{\textbf{X}}^{\prime}P_{\bar{\textbf{Z}}}\\\hat{\bar{\pi}}^{\prime}\bar{\textbf{Z}}^{\prime}\end{bmatrix}\begin{bmatrix}P_{\bar{\textbf{Z}}}\bar{\textbf{X}} &\bar{\textbf{Z}}\hat{\bar{\pi}}\end{bmatrix}\end{bmatrix}^{-1} \begin{bmatrix}\bar{\textbf{X}}^{\prime}P_{\bar{\textbf{Z}}}\\\hat{\bar{\pi}}^{\prime}\bar{\textbf{Z}}^{\prime}\end{bmatrix} \bar{\textbf{Y}}, 
	\end{align}
	which is the estimate of $ \varphi^{b}_{IV} $ and $\rho^{b}_{\alpha}$ when (\ref{aeq:34}) is between transformed and estimated by two-stage least squares (2SLS) with $ \bar{\pmb{z}}$  and $ \bar{\pmb{z}}^{\prime}\hat{\bar{\pi}} $ as instruments. After some algebraic manipulations, we get   
	\begin{align}\nonumber
	\hat{\varphi}^{b}_{CF} = 
	\begin{bmatrix}\bar{\textbf{X}}^{\prime}P_{\bar{\textbf{Z}}}M_{\bar{\textbf{Z}}\hat{\bar{\pi}}}P_{\bar{\textbf{Z}}}\bar{\textbf{X}}\end{bmatrix}^{-1}\bar{\textbf{X}}^{\prime}P_{\bar{\textbf{Z}}}M_{\bar{\textbf{Z}}\hat{\bar{\pi}}} \bar{\textbf{Y}}, 
	\end{align}
	where $ M_{\bar{\textbf{Z}}\hat{\bar{\pi}}} = I -\bar{\textbf{Z}}\hat{\bar{\pi}}[\hat{\bar{\pi}}^{\prime}\bar{\textbf{Z}}^{\prime}\bar{\textbf{Z}}\hat{\bar{\pi}}]^{-1}\hat{\bar{\pi}}^{\prime}\bar{\textbf{Z}}^{\prime}$. 
	
\end{proof}

\begin{customlem}{3}\label{alm:4}
	If the endogenous variables, $ \pmb{x} $, have large a support, then under AS 3, the support of the conditional distribution of $\hat{\pmb{\alpha}}(X, Z, \Theta_{1})$ and $\hat{\pmb{\epsilon}}_{t}(X, Z, \Theta_{1})$, conditional on $\pmb{x}_{t}=\bar{\pmb{x}}$, is same as the support of their marginal distribution.
\end{customlem}
\begin{proof} 
	
	(a) We have shown that the expected value of $ \pmb{\alpha} = \bar{\pi}\bar{\pmb{z}} + \pmb{a} $ and $ \pmb{\epsilon}_{t} $ given  $ Z $ and $ X  $, where  $ \pmb{a} $ and $ \pmb{\epsilon}_{t} $ are normally distributed with variances $ \Lambda_{\alpha\alpha} $ and $\Sigma_{\epsilon\epsilon}$ respectively, are given 
	\begin{align}\nonumber
	&\E(\pmb{\alpha}|X, Z) =\hat{\pmb{\alpha}}=\bar{\pi}\bar{\pmb{z}}+\hat{\pmb{a}} =\bar{\pi}\bar{\pmb{z}}+\sum_{t=1}^{T}\varOmega(\pmb{x}_{t}-\Pi Z_{t}) \text{ and }\\\nonumber
	&\E(\pmb{\epsilon}_{t}|X, Z) =\hat{\pmb{\epsilon}}_{t}=\pmb{x}_{t}-\Pi Z_{t}-\sum_{t=1}^{T}\varOmega(\pmb{x}_{t}-\Pi Z_{t}) \text{  respectively, }
	\end{align}
	where $ \Pi = (\bar{\pi}, \bar{\pi}) $, $Z_{t} = (\pmb{z}^{\prime}_{t},  \bar{\pmb{z}}^{\prime})^{\prime} $ and  $\varOmega =[T\Sigma_{\epsilon\epsilon}^{-1}+\Lambda_{\alpha\alpha}^{-1}]^{-1}\Sigma_{\epsilon\epsilon}^{-1}$.

	Because the support of $ \pmb{x}_{t} $ is $ \mathbb{R}^{d_{x}} $ and $\varOmega $ is a $ d_{x}\times d_{x} $ nonsingular matrix, 
	\begin{align}\nonumber
	\textrm{Supp}(\hat{\pmb{\alpha}}) =\textrm{Supp}(\hat{\pmb{\epsilon}}_{t})= \mathbb{R}^{d_{x}} \text{ whether or not $ \pmb{z}_{t} $ has a large support}. 
	\end{align} 
	
	Now fix $ \pmb{x}_{t}=\bar{\pmb{x}} $. Then, because the  $ \pmb{x}_{s}$'s, $ s\neq t $, are not restricted, we have
	\begin{align}\nonumber
	&\textrm{Supp}(\hat{\pmb{\alpha}}| \pmb{x}_{t}=\bar{\pmb{x}}) = \textrm{Supp}\biggr(\bar{\pi}\bar{\pmb{z}}+ \varOmega(\bar{\pmb{x}}-\Pi Z_{t}) + \sum_{s\neq t}\varOmega(\pmb{x}_{s}-\Pi Z_{s})\biggr) = \mathbb{R}^{d_{x}}\text{ and }\\\nonumber
	&\textrm{Supp}(\hat{\pmb{\epsilon}}_{t}| \pmb{x}_{t}=\bar{\pmb{x}}) = \textrm{Supp}\biggr( [I_{m}-\varOmega](\bar{\pmb{x}}-\Pi Z_{t}) - \sum_{s\neq t}\varOmega(\pmb{x}_{s}-\Pi Z_{s})\biggr) = \mathbb{R}^{d_{x}}.
	\end{align} 
	
	(b) When we have a single endogenous variable, $ x_{t} $, given by
	\begin{align}\nonumber
	x_{t} = \Pi Z_{t} + a + \epsilon_{t}, t = 1, \ldots, T, 
	\end{align}
	where the errors, $\pmb{\epsilon} \equiv (\epsilon_{1}, \ldots, \epsilon_{T})^{\prime} $, are non-spherical such that  $ \E(\pmb{\epsilon}\pmb{\epsilon}^{\prime}) = \Omega_{\epsilon\epsilon}$, a invertible $ T\times T $ matrix, and $ a $ is normally distributed with variance $ \sigma^{2}_{\alpha} $, then we showed that  
	\begin{align}\nonumber
	\hat{a}(X, Z, \Theta_{1} )=(x_{1}-\Pi Z_{1})\omega_{1}  + \ldots + (x_{T}-\Pi Z_{T})\omega_{T}, 
	\end{align}
	where $ (\omega_{1}, \ldots, \omega_{T})^{\prime} = \frac{\Omega_{\epsilon\epsilon}^{-1}e_{T}}{(e^{\prime}\Omega_{\epsilon\epsilon}^{-1}e_{T} + \sigma^{-2}_{\alpha})} $ and $ e_{T} $ is a vector of ones of dimension $ T $.
	
	Given that $x_{t}$'s have large supports, using a similar argument as in part (a), we get 
	\begin{align}\nonumber
	\textrm{Supp}(\hat{a}| x_{t}=\bar{x}) =  \mathbb{R} \textrm{  and   } \textrm{Supp}(\hat{\epsilon}_{t}| x_{t}=\bar{x}) =  \mathbb{R}.
	\end{align}
	
\end{proof}

\renewcommand{\theequation}{B-\arabic{equation}}
\setcounter{equation}{0}  

\section{Estimation Of Probit Conditional Mean Function}
If $\eta_{t}$ in equation (2.9) in main text is assumed to follow a normal distribution, then  
\begin{align}
\E(y_{t}|X, Z)=\Phi((\mathbb{X}_{t}^{\prime}\Theta_{2})/\sigma),
\end{align}
where $\mathbb{X}_{t} = (\pmb{x}_{t}^{\prime}, \hat{\pmb{\alpha}}^{\prime}(X, Z), \hat{\pmb{\epsilon}}_{t}^{\prime}(X, Z))$, $\Theta_{2}=(\pmb{\varphi}^{\prime}, \pmb{\rho}^{\prime}_{\alpha}, \pmb{\rho}^{\prime}_{\epsilon})^{\prime}$, and $\sigma^{2}$ is the variance of $\eta_{t}$. Since in probit models the coefficients can only be identified up to a scale, in this section with a slight abuse of notation we denote the scaled parameters,  $\frac{1}{\sigma} \Theta_{2} $, by $\Theta_{2} $.  To estimate $\Theta_{2}$, one can employ nonlinear least squares by pooling the data. However, as \citetalias{papke:2008} discuss, since $\var(y_{t}| (X, Z))$ will most likely be heteroscedastic and since there will be serial correlation across time in the joint distribution, $F(y_{1},\ldots,y_{T}|X, Z)$, the estimates, though consistent, will be estimated inefficiently resulting in biased standard errors. \citetalias{papke:2008} argue that modelling $F(y_{1},\ldots,y_{T}|X, Z)$ when $ y_{t} $'s are fractional response and applying MLE methods, while possible, is not trivial. Moreover, if the model for $F(y_{1},\ldots,y_{T}|X, Z)$  is misspecified but $\E(y_{t}|X, Z)$ is correctly specified, the MLE will be inconsistent for $\Theta_{2}$ and the resulting APEs. This is likely to be true when response outcomes are binary. 

To account for heteroscedasticity\footnote{Another possibility would be to assume the form of heteroscedasticity such as multiplicative heteroscedasticity as is common in heteroscedastic probit model. However, since it is likely that there will be serial dependence across time, we favour the GEE method proposed in this section, which can potentially account for both heteroscedasticity and serial dependence.} and serial dependence when all covariates are exogenous, \citetalias{papke:2008} employ the method of multivariate weighted nonlinear least squares (MWNLS) to obtain efficient estimates of $\Theta_{2}$. To get the correct estimates of the standard errors, the method requires is a parametric model of $\var(\textit{y}_{i}|X_{i}, Z_{i})$, where $\textit{y}_{i}$ is the $T \times 1$ vector of responses. Similar to \citetalias{papke:2008}'s, we specify $\var(y_{t}|X, Z)$ as
\begin{align}\label{eq:13}
\var(y_{t}|X, Z) = \tau\textrm{\textbf{m}}(\mathbb{X}_{t},\Theta_{2})(1-\textrm{\textbf{m}}(\mathbb{X}_{t},\Theta_{2})),
\end{align}
where $\textrm{\textbf{m}}(\mathbb{X}_{t},\Theta_{2})= \Phi(\mathbb{X}_{t}^{\prime}\Theta_{2})$ and $0<\tau\leq1$. For covariances, $\cov(y_{t} , y_{r} |X, Z)$, a ``working" version, which can be misspecified for $\var(\textit{y}|X, Z)$, is assumed. This, in the context of panel data, is what underlies the method of generalized estimating equation (GEE), as described in \citet{liang:1986}. The main advantage of GEE lies in the consistent and unbiased estimation of parameters' standard errors even when the correlation structure is misspecified. Also, GEE and MWNLS are asymptotically equivalent whenever they use the same estimates of the $T \times T$ positive definite matrix, $\var(\textit{y}|X, Z)$.

Generally, the conditional correlations, $\cov(y_{t} , y_{s} |X, Z)$, are a function of $X$ and $Z$. In the GEE literature, the ``working correlation matrix" is that which assumes the dependency structure to be invariant over all observations; that is, the correlations are not a function of $X$ and $Z$. Here we will focus on a particular correlation matrix that is suited for panel data applications with small $T$ . In the GEE literature it is called an ``exchangeable" correlation pattern. Exchangeable correlation assumes constant time dependency, so that all the off-diagonal elements of
the correlation matrix are equal. Though other correlation patterns such as ``autoregressive", which assumes the correlations to be an exponential function of the time lag, or  ``stationary $M$", which assumes constant correlations within equal time intervals could also be assumed.

GEE method suggests that parameter, $\rho$, that characterize $\var(\textit{y}|X, Z) = \textbf{V}(X, Z, \Theta_{2},\tau, \rho)$ can be estimated using simple functions of residuals, $u_{t}$,
\begin{align}\nonumber
u_{t}=y_{t}-\E(y_{t}|X, Z)=y_{t}-\textrm{\textbf{m}}(\mathbb{X}_{t},\Theta_{2}),
\end{align}
where the mean function, $\E(y_{t}|X, Z)$, is correctly specified. With the variance defined in (\ref{eq:13}), we can define standardized errors as
\begin{align}\nonumber
e_{t}=\frac{u_{t}}{\sqrt{\textrm{\textbf{m}}(\mathbb{X}_{t},\Theta_{2})(1-\textrm{\textbf{m}}(\mathbb{X}_{t},\Theta_{2}))}}.
\end{align}
Then we have $\var(e_{t}|X, Z)=\tau$. The exchangeability assumption is that the pairwise correlations between pairs of standardized errors are constant, say $\rho$. This, to reiterate, is a ``working" assumption that leads to an estimated variance matrix to be used in MWNLS. Neither consistency of the estimator of $\rho$, nor valid inference, will rest on exchangeability being true. 

To estimate a common correlation parameter, let $\tilde{\Theta}_{2}$ be a preliminary, consistent estimator of $\Theta_{2}$.   $\tilde{\Theta}_{2}$  could be the pooled ML estimate of the heteroscedastic probit model. Define the residuals, $\tilde{u}_{t},$ as $\tilde{u}_{t}= y_{t}-\textrm{\textbf{m}}(\mathbb{X}_{t},\tilde{\Theta}_{2})$ and the standardized residuals as
\begin{align}\nonumber
\tilde{e}_{t}=\frac{\tilde{u}_{t}}{\sqrt{\textrm{\textbf{m}}(\mathbb{X}_{t},\tilde{\Theta}_{2})(1-\textrm{\textbf{m}}(\mathbb{X}_{t},\tilde{\Theta}_{2}))}}.
\end{align}
Then a natural estimator of a common correlation coefficient is
\begin{align}
\tilde{\rho}=\frac{1}{NT(T - 1)}\sum^{ N}_{i=1}
\sum^{ T}_{t=1}\sum _{s\neq t}\tilde{e}_{it}\tilde{e}_{is}.
\end{align}
Under standard regularity conditions, without much restrictions on $\textrm{Corr}(e_{t} , e_{s}|X, Z)$, the plim of $\tilde{\rho}$ is
\begin{align}
\textrm{plim}( \tilde{\rho}) = \frac{1}{[T (T - 1)]}\sum ^{T}_{t=1}
\sum_{s\neq t}\E(e_{it}e_{is})\equiv\rho^{*}.
\end{align}

If $\textrm{Corr}(e_{t} , e_{s}|X, Z)$ happens to be the same for all $t \neq s$, then $\tilde{\rho}$ consistently estimates this constant correlation. Generally, it consistently estimates the average of these correlations across all $(t, s)$ pairs, which is defined as $\textbf{C}(\tilde{\rho})$.
Given the estimated $T \times T$ working correlation matrix, $\textbf{C}(\tilde{\rho})$, which has unity down its diagonal and $\tilde{\rho}$ everywhere else, we can
construct the estimated working variance matrix:
\begin{align}\nonumber
\textbf{V}(X, Z, \tilde{\Theta}_{2}, \tilde{\rho})=\textbf{D}(X, Z, \tilde{\Theta}_{2} )^{1/2}\textbf{C}( \tilde{\rho} )\textbf{D}(X, Z, \tilde{\Theta}_{2})^{1/2} = \textbf{V}(X, Z, \tilde{\Upsilon})
\end{align}
where $\textbf{D}(X, Z, \Theta_{2})$ is the $T \times T$ diagonal matrix with $\textrm{\textbf{m}}(\mathbb{X}_{t}, \Theta_{2})(1-\textrm{\textbf{m}}(\mathbb{X}_{t},\Theta_{2}))$ down its diagonal. (Note that dropping the variance scale
factor, $\tau$ , has no effect on estimation or inference.) 

Estimation by MWNLS then involves solving for $\hat{\Theta}_{2}$ by minimizing the following with respect to $\Theta_{2}$:
\begin{align}\label{eq:14}
\min_{\Theta_{2}}\sum^{N}_{i=1}[\mathbf{y}_{i} - \mathbf{m}_{i}(X_{i}, Z_{i}, \Theta_{2})]^{\prime}[\textbf{V}(X_{i}, Z_{i},\tilde{\Upsilon})]^{-1}[\mathbf{y}_{i} - \mathbf{m}_{i}(X_{i}, Z_{i}, \Theta_{2})],
\end{align}
where $\mathbf{m}_{i}(X_{i}, Z_{i}, \Theta_{2})$ is the $T \time 1$ vector with $t^{th}$ element being $\textrm{\textbf{m}}(\mathbb{X}_{it},\Theta_{2})$.

The requirement of GEE is that the mean model, $\E(y_{t}|X, Z)$,  be correctly specified, else the GEE approach to estimation can give inconsistent results. We have, given our identifying assumptions, shown that $\E(y_{t}|X, Z)=\Phi(\mathbb{X}_{t}^{\prime}\Theta_{2})$, and therefore we can employ GEE to account for serial correlation across time. Once the control functions have been estimated, one can then use the STATA command,``xtgee," which fits generalized linear models and allows one to specify the within-group correlation structure for the panels, to estimate  $ \Theta_{2} $.

\renewcommand{\theequation}{C-\arabic{equation}}
\setcounter{equation}{0}  

\section{Panel Probit Model with Random Coefficients in a Triangular System}
In this section we extend the model with random effect studied in section 2 to allow for random coefficients, and discuss identification of certain structural measures of interest. Consider the following binary choice random coefficient model 
\begin{align}\label{rc:1}
y_{it}=1\{y^{*}_{it} = \mathcal{X}_{it}^{\prime}\pmb{\varphi}_{i}  +\zeta_{it}>0\},
\end{align}
where $\mathcal{X}_{it}=(x_{it}, \pmb{w}^{\prime}_{it})^{\prime}$ and $\zeta_{it}$ are the idiosyncratic errors.  Here we consider a single continuous endogenous variable, $x_{it}$, with a large support. Let $ d_{w} $ be the dimension of the exogenous variables, $ \pmb{w}_{it} $. In (\ref{rc:1}), the random coefficients are $ \pmb{\varphi}_{i} = \pmb{\varphi}+\pmb{\theta}_{i} $, where $ \pmb{\varphi} $ is a $ (1+d_{w})\times 1 $ vector of constant means and $ \E(\pmb{\theta}_{i}) = 0 $. Thus, $ \pmb{\varphi} $ is the average
slope that we might be interested in.

The reduced form in the triangular system is given by: 
\begin{align}\label{rc:2}
x_{it} = \pmb{z}^{\prime}_{it}\pmb{\alpha}_{i} + \epsilon_{it},
\end{align}
where $ \pmb{z}_{it} = (\pmb{w}^{\prime}_{it}, \tilde{\pmb{z}}^{\prime}_{it})^{\prime} $. The dimension of the vector of instruments, $ \tilde{\pmb{z}}_{it} $, $ d_{z} $, is greater than or equal to 1.   $ \pmb{\alpha}_{i} = \pmb{\alpha} + \pmb{a}_{i} $ is the $ (d_{w}+d_{z}) \times 1 $ vector of random coefficients, where $ \pmb{\alpha} $  is a  vector of constants and $ \pmb{a}_{i} $ a  vector of stationary random
variables with zero means and constant variance-covariances. And finally, $ \epsilon_{it} $ is a scalar idiosyncratic term.


The identifying distributional restrictions are summarized as follows:
\begin{assrc}\label{arc:1}
	(a)	$(\pmb{\theta}_{i},  \pmb{\zeta}_{i}) , (\pmb{a}_{i}, \pmb{\epsilon}_{i}) \independent Z_{i} $ and (b) $ \pmb{\theta}_{i}, \pmb{a}_{i} \independent   \pmb{\zeta}_{i}, \pmb{\epsilon}_{i}$,   where   $Z_{i}\equiv (
	\pmb{z}_{i1}, \ldots , \pmb{z}_{iT})
	$ is a $ T\times(d_{w}+d_{z}) $ matrix, $\pmb{\zeta}_{i}\equiv (\zeta_{i1},\ldots, \zeta_{iT})^{\prime}$ , and $\pmb{\epsilon}_{i}\equiv (\epsilon_{i1},\ldots, \epsilon_{iT})^{\prime}$.
\end{assrc}
In the above assumption, $ \pmb{z}_{it} $ is independent of the random coefficients, $ (\pmb{\varphi}_{i}, \pmb{\alpha}_{i}) $, and the idiosyncratic errors, $ (\zeta_{it}, \epsilon_{it}) $. Also, as in the random effects model, we assume that the random coefficients and the idiosyncratic errors are independent of each other.

\begin{assrc}\label{arc:2} 
	\begin{align}\nonumber
	\pmb{\theta}_{i}, \zeta_{it}| X_{i}, Z_{i}, \pmb{a}_{i} &\sim \pmb{\theta}_{i}, \zeta_{it}| X_{i}-\E(X_{i}|Z_{i}, \pmb{a}_{i}), Z_{i}, \pmb{a}_{i}\\\nonumber &\sim  \pmb{\theta}_{i}, \zeta_{it}|\pmb{\epsilon}_{i}, Z_{i}, \pmb{a}_{i} \\\nonumber&\sim \pmb{\theta}_{i}, \zeta_{it}| \pmb{\epsilon}_{i}, \pmb{a}_{i},
	\end{align}
	where $X_{i}\equiv (x_{i1},\ldots, x_{iT})^{\prime}$ and $ \pmb{\epsilon}_{i} =X_{i}-\E(X_{i}|Z_{i}, \pmb{a}_{i}) =X_{i} - Z_{i}(\pmb{\alpha}+\pmb{a}_{i}) $.

\end{assrc}In RC \ref{arc:2} the assumption is that the dependence of the structural error terms $\pmb{\theta}_{i}$ and $\zeta_{it}$ on $X_{i}$, $Z_{i}$, and $\pmb{a}_{i}$ is completely characterized by the reduced form error components, $\pmb{\epsilon}_{i}$ and $\pmb{a}_{i}$. If given $(\epsilon_{it}, \pmb{a}_{i}) $ only contemporaneous correlations matter, then $ \pmb{\theta}_{i}, \zeta_{it}\independent \pmb{\epsilon}_{i,-t}|(\epsilon_{it}, \pmb{a}_{i}) $. 

As in the model for random effects, we specify the marginal distributions of $ \pmb{a}_{i} $ and $ \epsilon_{it} $. We assume that  
\begin{assrc}\label{arc:3}
	$ \pmb{a}_{i} \sim \text{N}(0, \Sigma_{a})$ and that $ \epsilon_{it} \sim \text{N}(0, \sigma^{2}_{\epsilon}) $.
\end{assrc}
Let $ \Theta_{1} \equiv \{\pmb{\alpha}, \Sigma_{a},  \sigma^{2}_{\epsilon} \}$ denote the set of parameters of random coefficient model in (\ref{rc:2}), the reduced form equation. The random coefficient model is a standard one, and most statistical packages have routines to estimate $ \Theta_{1} $.


Given assumption RC \ref{arc:2}, we have 
\begin{align}\nonumber
&\E(\pmb{\theta}_{i}|X_{i},Z_{i},\pmb{a}_{i})=\E(\pmb{\theta}_{i}|\pmb{a}_{i},\pmb{\epsilon}_{i})=\E(\pmb{\theta}_{i}|\pmb{a}_{i}) =  \pmb{\rho}_{\theta a}\pmb{a}_{i} \text{ and }\\\label{rc:4}
&\E(\zeta_{it}|X_{i},Z_{i},\pmb{a}_{i}) = \E(\zeta_{it}|\pmb{a}_{i},\pmb{\epsilon}_{i})= \E(\zeta_{it}| \pmb{\epsilon}_{i})= \pmb{\rho}_{\zeta\epsilon}\pmb{\epsilon}_{i},
\end{align}
where the second equality in each of the above follows from part (b) of assumption RC \ref{arc:1}. In the above, $\pmb{\rho}_{\theta a}$ is the $ (d_{w}+1) \times (d_{w}+d_{z}) $ matrix of population regression coefficients of $\pmb{\theta}_{i}$ on $\pmb{a}_{i}$, and $\pmb{\rho}_{\zeta\epsilon}$ is the population regression coefficient of $\zeta_{it}$ on $\pmb{\epsilon}_{i}$.

Our assumptions and (\ref{rc:4}) then imply that the conditional expectation of $y^{*}_{it}$ given $X_{i}$, $Z_{i}$, and $\pmb{a}_{i}$ is given by
\begin{align}\nonumber
\E(\textit{y}^{*}_{it}|X_{i}, Z_{i}, \pmb{a}_{i}) = \mathcal{X}_{it}^{\prime}\pmb{\varphi}+\mathcal{X}_{it}^{\prime}\pmb{\rho}_{\theta a}\pmb{a}_{i}  + \pmb{\rho}_{\zeta\epsilon}\pmb{\epsilon}_{i}.
\end{align}

Because the stochastic part, $\pmb{a}_{i}$, of the random coefficients in the reduced form equation are unobserved, the conditioning variable, $\pmb{\epsilon}_{i}  = X_{i} - Z_{i}(\pmb{\alpha} + \pmb{a}_{i})$, too, is not identified. To estimate the structural parameters, as in the model with random effects, we first integrate out $\pmb{a}_{i}$ from $\E(y^{*}_{it}|X_{i}, Z_{i}, \pmb{a}_{i})$ with respect to its conditional distribution, $f(\pmb{a}_{i}|X_{i}, Z_{i})$, to obtain 
\begin{align}\nonumber
\E(y^{*}_{t}|X_{i},
Z_{i})&= \int\E(y^{*}_{it}|X_{i},
Z_{i}, \pmb{a}_{i})f(\pmb{a}_{i}|X_{i},
Z_{i})d\pmb{a}_{i}\\\label{rc:5}
&=\mathcal{X}_{it}^{\prime}\pmb{\varphi}+\mathcal{X}_{it}^{\prime}\pmb{\rho}_{\theta a}\hat{\pmb{a}}_{i}  + \pmb{\rho}_{\zeta\epsilon}\hat{\pmb{\epsilon}}_{i},
\end{align}
where $\hat{\pmb{a}}_{i}=\E(\pmb{a}_{i}|X_{i}, Z_{i})$ and $\hat{\pmb{\epsilon}}_{i} =  X_{i} - Z_{i}(\pmb{\alpha} + \hat{\pmb{a}}_{i})$. In lemma 1 we show that

\vspace{0.5cm}

\textsc{Lemma} C1  \hspace{0.2cm}	\textit{If $ x_{it} = \pmb{z}^{\prime}_{it}\pmb{\alpha} + \pmb{z}^{\prime}_{it}\pmb{a}_{i} +\epsilon_{it} $, and if RC \ref{arc:1} and RC \ref{arc:3} hold, then
	\begin{align}\nonumber
	\E(\pmb{a}_{i}|X_{i}, Z_{i})=\hat{\pmb{a}_{i}}(X_{i}, Z_{i}, \Theta_{1} )=[\sum_{t=1}^{T}\pmb{z}_{it}\pmb{z}^{\prime}_{it}+\sigma^{2}_{\epsilon}\Sigma_{a}^{-1}]^{-1}\biggr(\sum_{t=1}^{T}\pmb{z}_{it}(x_{it} - \pmb{z}^{\prime}_{it}\pmb{\alpha})\biggr). 
	\end{align}}

\textsc{Proof of Lemma C1 } \hspace{0.2cm} Given in section C.1 of this appendix.  

\vspace{0.5cm}

From (\ref{rc:4}) and (\ref{rc:5}), it therefore follows that 
\begin{align}\label{rc:6}
\E(\pmb{\varphi}_{i}|X_{i},Z_{i}) = \E(\pmb{\varphi}+\pmb{\theta}_{i}|X_{i},Z_{i}) = \pmb{\varphi} +\pmb{\rho}_{\theta a}\hat{\pmb{a}}_{i} \text{  and  } \E(\zeta_{it}|X_{i},Z_{i}) = \pmb{\rho}_{\zeta\epsilon}\hat{\pmb{\epsilon}}_{i}.
\end{align} Writing $\pmb{\varphi}_{i}$ and $ \zeta_{it} $ in error form as $ \pmb{\varphi}_{i} = \pmb{\varphi} +\pmb{\rho}_{\theta a}\hat{\pmb{a}}_{i}  + \tilde{\pmb{\theta}}_{i} $ and $ \zeta_{it} = \pmb{\rho}_{\zeta\epsilon}\hat{\pmb{\epsilon}}_{i} +\tilde{\zeta}_{it} $ respectively, we can write the structural equation (\ref{rc:1}) as
\begin{align}
y_{it}=1\{y^{*}_{it} = \mathcal{X}_{it}^{\prime}\pmb{\varphi} +\mathcal{X}_{it}^{\prime}\pmb{\rho}_{\theta a}\hat{\pmb{a}}_{i}  + \pmb{\rho}_{\zeta\epsilon}\hat{\pmb{\epsilon}}_{i} + \mathcal{X}_{it}^{\prime}\tilde{\pmb{\theta}}_{i} +\tilde{\zeta}_{it}>0\}.
\end{align}

While $ \tilde{\pmb{\theta}}_{i}  $ and $\tilde{\zeta}_{it} $ are mean independent of $ X_{i} $ and $ Z_{i}$, if, as in \citet{chamberlain:1984}, we make a stronger assumption of complete independence and assume that $ \tilde{\pmb{\theta}}_{i}  $ and $\tilde{\zeta}_{it} $ are  distributed normally with mean zero and variances $ \Sigma_{\theta} $ and $ 1 $ respectively, then the  parameters, $ \Theta_{2} \equiv \{\pmb{\varphi}, \pmb{\rho}_{\theta a}, \pmb{\rho}_{\zeta\epsilon}, \Sigma_{\theta}\}$, of the above model can be estimated by integrated maximum likelihood method, where one can integrate out $ \tilde{\pmb{\theta}}_{i} $ using numerical multidimensional integration \citep[see][]{heiss:2008}. Alternatively,  maximum simulated likelihood or Markov Chain Monte Carlo (MCMC) methods as discussed in \citet{Greene:2004}, too, can be used to obtain $ \Theta_{2} $.

Once $ \Theta_{2} $ is estimated, the following measures of interest can be obtained. (A) The expected value, $ \E(\pmb{\varphi}_{i}|X_{i},Z_{i})= \pmb{\varphi} +\pmb{\rho}_{\theta a}\hat{\pmb{a}}_{i} $. (B)  The Average Partial Effect (APE) of changing a variable, say $w$, in time period $t$ from $w_{it}$ to $w_{it} + \Delta_{w}$ can be obtained as
\begin{align}\nonumber
\frac{\Delta G(\mathcal{X}_{it})}{\Delta_{w}}= \frac{G(\mathcal{X}_{it_{-w}},(w_{t}+\Delta_{w})) - G(\mathcal{X}_{it})}{\Delta_{w}}, \text{   where   }
\end{align}
\begin{align}\nonumber
G(\mathcal{X}_{it})=	\int\Phi\biggr(\frac{\mathcal{X}_{it}^{\prime}\pmb{\varphi} +\mathcal{X}_{it}^{\prime}\pmb{\rho}_{\theta a}\hat{\pmb{a}}_{i}  + \pmb{\rho}_{\zeta\epsilon}\hat{\pmb{\epsilon}}_{i}}{1+\mathcal{X}_{it}^{\prime}\Sigma_{\theta}\mathcal{X}_{it}} \biggr)dF(\hat{\pmb{a}},\hat{\epsilon}).
\end{align}
The above integral/partial mean rests on the assumption that analogous support requirement as in lemma 3 of the main text is satisfied. 

Since we were able to identify the average slopes coefficients and the APEs when we augmented the structural equation with $ \hat{\pmb{a}}^{\prime}_{i}\otimes\mathcal{X}^{\prime}_{it} $ and $\hat{\pmb{\epsilon}}_{i}$, as in the random effects case, we propose  $ (\hat{\pmb{\epsilon}}_{i}, \hat{\pmb{a}}_{i}) $ to be used as control function. 


\begin{asscf}\label{ascf:2}	
	\begin{align}\nonumber
	\zeta_{it},\pmb{\theta}_{i} | X_{i}, Z_{i}, \hat{\pmb{a}}_{i} &\sim \zeta_{it},\pmb{\theta}_{i} | \hat{\pmb{\epsilon}}_{i}, Z_{i} , \hat{\pmb{a}}_{i}\\\nonumber
	&\sim \zeta_{it},\pmb{\theta}_{i} | \hat{\pmb{\epsilon}}_{i}, \hat{\pmb{a}}_{i},
	\end{align}
	where $ \hat{\pmb{\epsilon}}_{i} \equiv (\hat{\epsilon}_{i1}, \ldots, \hat{\epsilon}_{iT})^{\prime} =X_{i} - Z_{i}(\pmb{\alpha}+\hat{\pmb{a}}_{i}) $ and  $ \hat{\pmb{a}}_{i} = \E(\pmb{a}_{i}| X_{i},Z_{i})  $.
\end{asscf}
In the above, $ \hat{\pmb{\epsilon}}_{i}( X_{i},Z_{i}) $ and $\hat{\pmb{a}}_{i}( X_{i},Z_{i}) $ are assumed to fully characterize the dependence of $ X_{i} $ and $ Z_{i} $ on the structural errors, $ \zeta_{it} $ and $ \pmb{\theta}_{i} $, in (\ref{rc:1}). With $ \hat{\pmb{\epsilon}}_{i} $ and $\hat{\pmb{a}}_{i}  $ as control functions the semiparametric method in \citet{hoderlein:2015} can be employed to estimate the mean of $\pmb{\varphi}_{i}$. 

\citet{kasy:2011} considers non-separable triangular systems for cross-sectional data to characterizes systems for which control functions -- control functions such as $ C(x,z) = x-\E(x|z) $ or  $ C(x,z)=F(x|z) $, where $ F $ is the conditional cumulative distribution function of $x$ given $z$ -- exist. \citeauthor{kasy:2011} shows that when unobserved heterogeneity in first-stage reduced form equations is multi-dimensional, such as the reduced form equations with random coefficients, the errors in the structural equation are not independent of the endogenous covariates, $ x $, or the instruments, $z$, given $ C(x,z) $. 

We consider panel data, where the random coefficient are time invariant, and our control functions, $ \hat{\pmb{\epsilon}}_{i} $ and $\hat{\pmb{a}}_{i} $, are different from those considered in \citeauthor{kasy:2011}. Since $\hat{\pmb{a}}_{i} $, a function of $ X_{i}$ and $ Z_{i} $, summarizes certain individual specific information, as argued in section 2.1, the assumption in ACF \ref{ascf:2} is that the dependence of $ (\pmb{\theta}_{i}, \zeta_{it}) $ on $(X_{i}, Z_{i})$ can be reduced to dependence of $ (\pmb{\theta}_{i}, \zeta_{it}) $ on  $(\hat{\pmb{a}}_{i},\hat{\pmb{\epsilon}}_{i}) $, which is akin to dependence assumption in papers such as by \citet{altonji:2005} and \citet{bester1:2009}. The assumption is motivated by the result that under the restrictions in RC \ref{arc:1}, RC \ref{arc:2}, and (\ref{rc:4}), the expectations of $ \zeta_{it}$ and $\pmb{\theta}_{i} $ given $ (X_{i}, Z_{i}) $ depend on $ (X_{i}, Z_{i} )$ only through $ \hat{\pmb{\epsilon}}_{i} $ and $\hat{\pmb{a}}_{i} $ respectively.       	

\subsection{Proof of Lemma C1 }

As in the models with random effects, to obtain $\hat{\pmb{a}}(X, Z )=\E(\pmb{a}|X, Z)$  we first derive $ f(\pmb{a}|X,
Z) $.   Again, using the fact that $ Z \independent \pmb{a} $, by an application of Bayes' rule, we have  $ f(\pmb{a}|X, Z)= \frac{f( X| Z, \pmb{a})f(\pmb{a}) }{f(X|Z)} $.

Since $ \pmb{a} \independent \epsilon_{t}  $, $ \pmb{a} \sim N(0,\Sigma_{a}) $, and $ \epsilon\sim N(0,\sigma^{2}_{\epsilon})$, it implies that $X$, given $Z$, is normally distributed with mean $ Z^{\prime}\pmb{\alpha}$, and variance $ \Sigma= \sigma^{2}_{\epsilon}I_{T} + Z^{\prime}\Sigma_{a}Z $, where $ I_{T} $ is an identity matrix of dimension $ T $. That is, 
\begin{align}\label{aeq:14}
f(X|Z) = \frac{1}{\sqrt{(2\pi)^{T}|\Sigma|}}\exp(-\frac{1}{2}R^{\prime}\Sigma^{-1}R), \text{  where $ R = X-Z^{\prime}\pmb{\alpha}$,} 
\end{align} 
and where by Woodbury matrix identity,
$ \Sigma^{-1} = \frac{1}{\sigma^{2}_{\epsilon}}I_{T} - \frac{1}{\sigma^{2}_{\epsilon}}Z^{\prime} [\sigma^{2}_{\epsilon}\Sigma_{a}^{-1}+ZZ^{\prime}]^{-1} Z $, and by Matrix determinant lemma, $ |\Sigma| =|\sigma^{2}_{\epsilon}\Sigma_{a}^{-1}+ZZ^{\prime}||\Sigma_{a}|$.

Since $ X$ given $ (Z, \pmb{a}) $ has the same distribution as $ \pmb{\epsilon} = R - Z^{\prime}\pmb{a}  $,  we have 	
\begin{align}\label{aeq:15}
f( X| Z, \pmb{a})f(\pmb{a}) = \frac{1}{\sqrt{(2\pi)^{T+k}\sigma^{2}_{\epsilon}|\Sigma_{a}|}}\exp\biggr(-\frac{1}{2\sigma^{2}_{\epsilon}}\biggr[[R-Z^{\prime}\pmb{a}]^{\prime}[R-Z^{\prime}\pmb{a}]+ \sigma^{2}_{\epsilon}\pmb{a}^{\prime}\Sigma_{a}^{-1}\pmb{a}\biggr]\biggr),   
\end{align}
where $ k $ is the dimension of $ \pmb{a} $. 

Since $ f(\pmb{a}|X, Z)= \frac{f( X| Z, \pmb{a})f(\pmb{a}) }{f(X|Z)} $, as shown earlier, using (\ref{aeq:14}) and (\ref{aeq:15}) it can be shown the $ \pmb{a} $ given $ X$ and  $Z $ is normally distributed with mean
\begin{align}\nonumber
\E( \pmb{a}|X, Z) = \hat{\pmb{a}}(X, Z )=[\sigma^{2}_{\epsilon}\Sigma_{a}^{-1}+ZZ^{\prime}]^{-1}Z(X - Z^{\prime}\pmb{\alpha}),
\end{align}	
and conditional variance $ \sigma^{2}_{\epsilon}[\sigma^{2}_{\epsilon}\Sigma_{a}^{-1}+ZZ^{\prime}]^{-1} $.

\renewcommand{\theequation}{D-\arabic{equation}}
\setcounter{equation}{0}  

\section{Asymptotic Covariance Matrix for Structural Parameters}
Though obtaining the parameters of the second stage, given the first stage consistent estimates $\hat{\Theta}_{1}$, is asymptotically equivalent to estimating the subsequent stage parameters had the true value of $\Theta_{1}^{*}$ been known, to obtain correct inference about the structural parameters, one has to account for the fact that instead of true values of first stage reduced form parameters, we use their estimated value. Here we are assuming that the first stage estimation involves the estimation of system of regression using \citeauthor{biorn:2004}'s method and that in the second stage a probit model, using the method of multivariate weighted nonlinear least squares (MWNLS), is estimated. 

\citet{newey:1984} has shown that sequential estimators can be interpreted as members of a class of Method of Moments (MM) estimators and that this
interpretation facilitates derivation of asymptotic covariance matrices for multi-step estimators. Let  $\Theta = (\Theta_{1}^{\prime}, \Theta_{2}^{\prime})^{\prime}$, where $\Theta_{1}$ and $\Theta_{2}$  are respectively the parameters to be estimated in the first and second step estimation of the sequential
estimator.  Following \citet{newey:1984} we write the  first and second step estimation as an MM estimation based on the
following population moment conditions:
\begin{align}\nonumber
\E(\mathcal{L}_{i\Theta_{1}})=E\frac{\partial \ln
	L_{i}(\Theta_{1})}{\partial\Theta_{1}}=0
\end{align}
\begin{align}\nonumber
\E(H_{i\Theta_{2}}(\Theta_{1},\Theta_{2} ))=0
\end{align}
and
where $L_{i}(\Theta_{1})$ is the likelihood function for individual $i$ for the first step system of reduced form equations and $\E(H_{i\Theta_{2}}(\Theta_{1},\Theta_{2} ))$ is the population moment condition for estimating $\Theta_{2}$ given $\Theta_{1}$.


The estimates for $\Theta_{1}$ and $\Theta_{2}$ are obtained by solving the sample analog of the above population moment conditions. The sample analog of moment conditions for the first step estimation is given by
\begin{align}\nonumber
\frac{1}{N} \mathcal{L}_{\Theta_{1}}(\hat{\Theta}_{1}) =  \frac{1}{N}\sum_{i=1}^{N}\frac{\partial
	\mathcal{L}_{i}(\hat{\Theta}_{1})}{\partial\Theta_{1}} =\frac{1}{N}\sum_{i=1}^{N}\frac{\partial
	\ln L_{i}(\hat{\Theta}_{1})}{\partial\Theta_{1}}
\end{align}
where $\mathcal{L}_{i}(\Theta_{1})$ and the first order conditions with respect to 
$\Theta_{1}=(\pmb{\delta}^{\prime}, \textrm{vec}(\Lambda_{\alpha\alpha})^{\prime}, \textrm{vec}(\Sigma_{\epsilon\epsilon})^{\prime} )^{\prime}$\footnote{ While we have written our reduced form equation as 
	\begin{align}\nonumber
	\pmb{x}_{it} = \pi\pmb{z}_{it} + \bar{\pi}\bar{\pmb{z}}_{i}+\pmb{a}_{i} + \pmb{\epsilon}_{it}, \textrm{ \citeauthor{biorn:2004} writes it as } 	\pmb{x}_{it} =  Z^{\prime}_{it}\pmb{\delta} + \pmb{a}_{i} + \pmb{\epsilon}_{it},
	\end{align}
	where $ Z_{it} = \textrm{diag}((\pmb{z}^{\prime}_{it},\bar{\pmb{z}}_{i}^{\prime})^{\prime}, \ldots, (\pmb{z}^{\prime}_{it},\bar{\pmb{z}}_{i}^{\prime})^{\prime}  )  $ and $ \pmb{\delta} = (\textrm{vec}(\pi)^{\prime},\textrm{vec}(\bar{\pi})^{\prime})^{\prime} $. } are given in appendix E, and $N$ is the total number of individuals.

The sample analog of population moment condition for the second step estimation is given by
\begin{align}\nonumber
\frac{1}{N}H_{\Theta_{2}}(\hat{\Theta}_{1},\hat{\Theta}_{2})=\frac{1}{N}\sum_{i=1}^{N}H_{i\Theta_{2}}(\hat{\Theta}_{1},\hat{\Theta}_{2}).
\end{align}
We have shown that the structural equations augmented with the control functions $\hat{\pmb{\alpha}}_{i}(X_{i},Z_{i}, \Theta_{1})$  and $\hat{\pmb{\epsilon}}_{it}(X_{i},Z_{i}, \Theta_{1})$ leads to the identification of $\Theta_{2}$. Let $\Theta^{*}_{2}$ be the true values of $\Theta_{2}$. Under the assumptions we make, solving $\frac{1}{N}\sum^{N}_{i=1}H_{it\Theta_{2}}(\hat{\Theta}_{1}, \Theta_{2})=0$ is asymptotically equivalent to solving $\frac{1}{N}\sum^{N}_{i=1}H_{it\Theta_{2}}(\Theta^{*}_{1}, \Theta_{2})=0$, where $\hat{\Theta}_{1}$ is a consistent first step estimate of $\Theta_{1}$. Hence $\hat{\Theta}_{2}$ obtained by solving $\frac{1}{N}H_{\Theta_{2}}(\hat{\Theta}_{1},\hat{\Theta}_{2})=0$ is a consistent estimate of $\Theta_{2}$. 

To derive the asymptotic distribution of the second step estimates $\hat{\Theta}_{2}$, consider the stacked up sample moment conditions:
\begin{align}\label{eq:aeq31}
\frac{1}{N}\begin{bmatrix} \mathcal{L}_{\Theta_{1}}(\hat{\Theta}_{1}) \\
H_{\Theta_{2}}(\hat{\Theta}_{1}, \hat{\Theta}_{2}) \end{bmatrix}=0.
\end{align}
A series of Taylor's expansion of $\mathcal{L}_{\Theta_{1}}(\hat{\Theta}_{1})$, $H_{\Theta_{2}}(\hat{\Theta}_{1},\hat{\Theta}_{2})$ and around $\Theta^{*}$ gives
\begin{align}\label{eq:aeq32}
\frac{1}{N}\begin{bmatrix} \mathcal{L}_{\Theta_{1}\Theta_{1}} & 0 \\
H_{\Theta_{2} \Theta_{1}}&  H_{\Theta_{2} \Theta_{2}}   \end{bmatrix}\begin{bmatrix} \sqrt{N}(\hat{\Theta}_{1} -\Theta^{*}_{1}) \\
\sqrt{N}(\hat{\Theta}_{2} -\Theta^{*}_{2})  \end{bmatrix}= -\frac{1}{\sqrt{N}}\begin{bmatrix} \mathcal{L}_{\Theta_{1}} \\
H_{\Theta_{2}}. \end{bmatrix}
\end{align}
In matrix notation the above can be written as
\begin{align}\nonumber
B_{\Theta\Theta_{ N}}\sqrt{N}(\hat{\Theta} -\Theta) = -\frac{1}{\sqrt{N}}\Lambda_{\Theta_{ N}},
\end{align}
where $\Lambda_{\Theta_{ N}}$ is evaluated at $\Theta^{*}$ and $B_{\Theta\Theta_{ N}}$ is evaluated at points somewhere between $\hat{\Theta}$ and $\Theta^{*}$. Under the standard regularity conditions for Generalized Method of Moments (GMM) \citep[see][]{newey:1984} $B_{\Theta\Theta_{ N}}$
converges in probability to the lower block triangular matrix
$B_{*}= \lim \E (B_{\Theta\Theta_{ N}})$. $B_{*}$ is given by
\begin{gather*}
B_{*}= \begin{bmatrix} \mathbb{L}_{\Theta_{1}\Theta_{1}} & 0 \\
\mathbb{H}_{\Theta_{2} \Theta_{1}}&  \mathbb{H}_{\Theta_{2} \Theta_{2}} \end{bmatrix}
\end{gather*}
where $\mathbb{L}_{\Theta_{1}\Theta_{1}}=\E(\mathcal{L}_{i\Theta_{1}\Theta_{1}})$, $\mathbb{H}_{\Theta_{2}\Theta_{1}}=\E(H_{i\Theta_{2}\Theta_{1}})$. $\frac{1}{\sqrt{N}}\Lambda_{N}$
converges asymptotically in distribution to a normal random variable with mean zero and a covariance matrix $A_{*}= \lim \E\frac{1}{N}\Lambda_{N}\Lambda_{N}^{\prime}$, where $A_{*}$ is given by
\begin{gather*}
A_{*}=\begin{bmatrix} V_{LL} & V_{LH}  \\
V_{HL}&  V_{HH}   \end{bmatrix},
\end{gather*}
and a typical element of $A_{*}$, say $V_{LH}$, is given by $V_{LH}= \E[\mathcal{L}_{i\Theta_{1}}(\Theta_{1})H_{i\Theta_{2}}(\Theta_{1},\Theta_{2})^{\prime}]$. Under the regularity conditions $\sqrt{N}(\hat{\Theta} -\Theta^{*})$ is asymptotically normal with zero mean and covariance matrix given by $B_{*}^{-1}A_{*}B_{*}^{-1\prime}$, that is 
\begin{align}
\sqrt{N}(\hat{\Theta} -\Theta^{*})\stackrel{a}\sim \N[(0),(B_{*}^{-1}A_{*}B_{*}^{-1\prime}) ].
\end{align}

By an application of partitioned inverse formula and some matrix manipulation we get
the asymptotic covariance matrix of 
$ \sqrt{N}(\hat{\Theta}_{2} - \Theta_{2} ) $, $ V^{*}_{2}$ , where

\begin{align}\nonumber V^{\ast}_{2}=& \mathbb{H}^{-1}_{\Theta_{2}\Theta_{2}}V_{HH}\mathbb{H}^{-1}_{\Theta_{2}\Theta_{2}} + \mathbb{H}^{-1}_{\Theta_{2}\Theta_{2}}\mathbb{H}^{-1}_{\Theta_{2}\Theta_{1}}\{\mathbb{L}^{-1}_{\Theta_{1}\Theta_{1}}V_{LL}\mathbb{L}^{-1\prime}_{\Theta_{1}\Theta_{1}}\}\mathbb{H}^{-1\prime}_{\Theta_{2}\Theta_{1}}\mathbb{H}^{-1\prime}_{\Theta_{2}\Theta_{2}}\\\label{eq:aeq33}&-\mathbb{H}^{-1}_{\Theta_{2}\Theta_{2}}\{\mathbb{H}_{\Theta_{2}\Theta_{1}}\mathbb{L}^{-1}_{\Theta_{1}\Theta_{1}}V_{LH} + V_{HL}\mathbb{L}^{-1 \prime}_{\Theta_{1}\Theta_{1}}\mathbb{H}^{\prime}_{\Theta_{2}\Theta_{1}}\}\mathbb{H}^{-1 \prime}_{\Theta_{2}\Theta_{2}}\end{align}

To estimate $ V^{*}_{2} $, sample analog of the $B^{*}$, $B_{N}$ given in (\ref{eq:aeq32}), and sample analog of $A^{*}$,  $A_{N} = \frac{1}{N} \Lambda_{N}\Lambda_{N} $, have to be computed. A typical element of $ A_{N} $, say $  V_{LH_{N}} $ , is given by
$V_{LH_{N}} = \frac{1}{N} \sum^{N}_{i=1} \mathcal{L}_{i\Theta_{1}}(\Theta_{1})H_{i\Theta_{2}}(\hat{\Theta}_{1}, \hat{\Theta}_{2})^{\prime} $. The first and the second order conditions of the first-stage likelihood function for estimating  $ \Theta_{1} $, which are used to compute the sample analog of $\mathbb{L}_{\Theta_{1}\Theta_{1}}$ and to compute $A_{N}$, are provided in appendix E.

For binary response model, the score function pertaining to the minimand in equation (\ref{eq:14})  is given by
\begin{align}\nonumber
H_{i\Theta_{2}}(\Theta_{1}, \Theta_{2}) &= -\nabla_{\Theta_{2}}\mathbf{m}_{i}(X_{i}, Z_{i}, \Theta_{2})^{\prime}[\textbf{V}(X_{i}, Z_{i},\tilde{\Upsilon})]^{-1}[\mathbf{y}_{i} - \mathbf{m}_{i}(X_{i}, Z_{i}, \Theta_{2})]&\\\nonumber&= -\nabla_{\Theta_{2}}\mathbf{m}_{i}(\Theta_{1}, \Theta_{2})^{\prime}\tilde{\textbf{V}}^{-1}\mathbf{u}_{i},&
\end{align}
where $\mathbf{m}_{i}( \Theta_{1}, \Theta_{2})\equiv\mathbf{m}_{i}(X_{i}, Z_{i}, \Theta_{2})$ is a $T \time 1$ vector with $t^{th}$ element being $\textrm{\textbf{m}}(\mathbb{X}_{it},\Theta_{2})=\Phi(\pmb{x}_{it}^{\prime}\pmb{\varphi}  +\pmb{\rho}_{\alpha}\hat{\pmb{\alpha}}_{i}
+\pmb{\rho}_{\epsilon}\hat{\pmb{\epsilon}}_{it}) \equiv  \textrm{\textbf{m}}_{it}(\Theta_{1}, \Theta_{2})$, $ \mathbf{u}_{i} $ is a $T \time 1$ vector with $t^{th}$ element being $ y_{it} -\textrm{\textbf{m}}_{it}(\Theta_{1}, \Theta_{2}) $, and $\tilde{\textbf{V}} \equiv \textbf{V}(X_{i}, Z_{i},\tilde{\Upsilon})$. Now
\begin{align}\nonumber
&\nabla_{\Theta_{2}} \textrm{\textbf{m}}_{it}(\Theta_{1}, \Theta_{2}) = \phi(\mathbb{X}_{it}^{\prime}\Theta_{2})
\mathbb{X}^{\prime}_{it}
\end{align}
where $\mathbb{X}_{it}=(\pmb{x}_{it}^{\prime},\hat{\pmb{\alpha}}^{\prime}_{i}(\Theta_{1}), \hat{\pmb{\epsilon}}^{\prime}_{it}(\Theta_{1}) )^{\prime}$ and $\Theta_{2}=(\pmb{\varphi}^{\prime}, \pmb{\rho}_{\alpha}^{\prime}, \pmb{\rho}_{\epsilon}^{\prime})^{\prime}$.

\citet{wooldridge:2010} and  \citet{wooldridge:2010a}  show \citep[see Problem 12.11 in][]{wooldridge:2010a} that $\mathbb{H}_{\Theta_{2} \Theta_{2}}$ of $B^{*}$ is given by
\begin{align}\nonumber
\mathbb{H}_{\Theta_{2} \Theta_{2}} = \E[ H_{i\Theta_{2}\Theta_{2}}(\Theta_{1}, \Theta_{2})] = \E[\nabla_{\Theta_{2}}\mathbf{m}_{i}(\Theta_{1}, \Theta_{2})^{\prime}\tilde{\textbf{V}}^{-1}\nabla_{\Theta_{2}}\mathbf{m}_{i}(\Theta_{1}, \Theta_{2})],
\end{align}
which can be approximated as
\begin{align}\nonumber
\frac{1}{N}\sum_{i=1}^{N}\nabla_{\Theta_{2}}\mathbf{m}_{i}(\hat{\Theta}_{1}, \hat{\Theta}_{2})^{\prime}\hat{\textbf{V}}^{-1}\nabla_{\Theta_{2}}\mathbf{m}_{i}(\hat{\Theta}_{1}, \hat{\Theta}_{2}),&
\end{align}
where $\hat{\textbf{V}}=\textbf{V}(X_{i}, Z_{i}, \hat{\Upsilon}) =\textbf{V}(X_{i}, Z_{i}, \hat{\Theta}_{2}, \hat{\rho})$.

We now compute $ H_{\Theta_{2} \Theta_{1}}=\sum^{N}_{i=1}H_{i\Theta_{2} \Theta_{1}}=\sum^{N}_{i=1}\frac{\partial H_{i\Theta_{2}}(\Theta_{1}, \Theta_{2})}{\partial \Theta_{1}^{\prime}}$ in order to compute the sample analog of $ \mathbb{H}_{\Theta_{2} \Theta_{1}} $.  Now,
\begin{align}\nonumber
\frac{\partial H_{i\Theta_{2}}(\Theta_{1}, \Theta_{2})}{\partial \Theta_{1}^{\prime}} = -\biggr[&[\mathbf{u}_{i}^{\prime}\tilde{\textbf{V}}^{-1}\otimes I]\frac{\partial\textrm{vec}(\nabla_{\Theta_{2}}\mathbf{m}_{i}(\Theta_{1},\Theta_{2})^{\prime})}{\partial\Theta_{1}^{\prime}}&\\\nonumber
&+[\mathbf{u}_{i}\otimes\nabla_{\Theta_{2}}\mathbf{m}_{i}(\Theta_{1}, \Theta_{2})^{\prime}]\frac{\partial\textrm{vec}(\tilde{\textbf{V}}^{-1})}{\partial\Theta_{1}^{\prime}}&\\\nonumber
&-\nabla_{\Theta_{2}}\mathbf{m}_{i}(\Theta_{1}, \Theta_{2})^{\prime}\tilde{\textbf{V}}^{-1}\nabla_{\Theta_{1}}\mathbf{m}_{i}(\Theta_{1}, \Theta_{2})\biggr].&
\end{align}
Taking expectation of the above, we find that the first two terms are zero. Hence we have
\begin{align}\nonumber
\mathbb{H}_{\Theta_{2} \Theta_{1}} = \E[ H_{i\Theta_{2}\Theta_{1}}(\Theta_{1}, \Theta_{2})] = \E[\nabla_{\Theta_{2}}\mathbf{m}_{i}(\Theta_{1}, \Theta_{2})^{\prime}\tilde{\textbf{V}}^{-1}\nabla_{\Theta_{1}}\mathbf{m}_{i}(\Theta_{1}, \Theta_{2})],
\end{align}
which can be approximated by
\begin{align}\nonumber
\frac{1}{N}\sum_{i=1}^{N}\nabla_{\Theta_{2}}\mathbf{m}_{i}(\hat{\Theta}_{1}, \hat{\Theta}_{2})^{\prime}\hat{\textbf{V}}^{-1}\nabla_{\Theta_{1}}\mathbf{m}_{i}(\hat{\Theta}_{1}, \hat{\Theta}_{2}).&
\end{align}
The constituents, $\nabla_{\Theta_{1}} \textrm{\textbf{m}}_{it}(\Theta_{1}, \Theta_{2})$, of $\nabla_{\Theta_{1}} \textrm{\textbf{m}}_{i}(\Theta_{1}, \Theta_{2})$ are given by
\begin{align}\nonumber
&\nabla_{\Theta_{1}} \textrm{\textbf{m}}_{it}(\Theta_{1}, \Theta_{2}) = \phi(\mathbb{X}_{it}^{\prime}\Theta_{2})\Theta_{2}^{\prime}\frac{\partial\mathbb{X}_{it}}{\partial \Theta_{1}^{\prime}}, &
\end{align}
which is row matrix with dimension that of $\Theta_{1}$, and where
\begin{gather}\nonumber
\resizebox{.5\hsize}{!}{$\begin{Large}\frac{\partial\mathbb{X}_{it}}{\partial \Theta_{1}^{\prime}}= \begin{bmatrix}\frac{\partial\pmb{x}_{it}}{\partial \pmb{\delta}^{\prime}}& \frac{\partial\pmb{x}_{it}}{\partial \textrm{vec}(\Lambda_{\alpha\alpha})^{\prime}} &\frac{\partial\pmb{x}_{it}}{\partial \textrm{vec}(\Sigma_{\epsilon\epsilon})^{\prime}} \\
	\frac{\partial\hat{\pmb{\alpha}}_{i}}{\partial \pmb{\delta}^{\prime}}& \frac{\partial\hat{\pmb{\alpha}}_{i}}{\partial \textrm{vec}(\Lambda_{\alpha\alpha})^{\prime}} &\frac{\partial\hat{\pmb{\alpha}}_{i}}{\partial \textrm{vec}(\Sigma_{\epsilon\epsilon})^{\prime}} \\
	\frac{\partial \hat{\pmb{\epsilon}}_{it}}{\partial \pmb{\delta}^{\prime}}& \frac{\partial \hat{\pmb{\epsilon}}_{it}}{\partial \textrm{vec}(\Lambda_{\alpha\alpha})^{\prime}} &\frac{\partial \hat{\pmb{\epsilon}}_{it}}{\partial \textrm{vec}(\Sigma_{\epsilon\epsilon})^{\prime}}\end{bmatrix}\end{Large}$}.
\end{gather}

Since $\pmb{x}_{it}$ is not a function of $\Theta_{1}$,  $\frac{\partial\pmb{x}_{it}}{\partial \Theta_{1}^{\prime}}=\mathbf{0}_{\pmb{x}}$, where $\mathbf{0}_{\pmb{x}}$ is a null matrix with row dimension that of column vector $\pmb{x}_{it}$ and column dimension that of column vector $\Theta_{1}$. Using the following matrix results:
\begin{align}\nonumber
&\partial\textrm{vec}(\Omega\pmb{b}) = (\pmb{b}^{\prime}\otimes I_{m}) \partial\textrm{vec}(\Omega),\hspace{0.2cm}
\partial\textrm{vec}(\Omega^{-1}) = -(\Omega^{\prime-1} \otimes \Omega^{-1}) \partial\textrm{vec}(\Omega) \textrm{ and }\\\nonumber
& \frac{\partial\textrm{vec}(\Omega)}{\partial\textrm{vec}(\Omega)} = I_{mm},
\end{align}
where $\pmb{b}  $ is a vector of dimension $ m $, $ \Omega $ is a symmetric $ m\times m $ matrix and $I_{mm}$ is the $ mm\times mm $ identity matrix, it can be shown that
\begin{align}\nonumber
&\frac{\partial \hat{\pmb{\alpha}}_{i}}{\partial \pmb{\delta}^{\prime}}
=\frac{\partial (\textrm{diag}(\bar{\pmb{z}}_{i}^{\prime}, \ldots, \bar{\pmb{z}}_{i}^{\prime} )^{\prime}\textrm{vec}(\bar{\pi})+\hat{\pmb{a}}_{i})}{\partial \pmb{\delta}^{\prime}}
=\mathbb{O}_{Zi}^{\prime}-[T\Sigma_{\epsilon\epsilon}^{-1}+\Lambda_{\alpha\alpha}^{-1}]^{-1}\Sigma_{\epsilon\epsilon}^{-1}Z_{it}^{\prime},\\\nonumber
&\frac{\partial \hat{\pmb{\epsilon}}_{it}}{\partial \pmb{\delta}^{\prime}}=\frac{\partial (\pmb{x}_{it}-Z_{it}^{\prime}\pmb{\delta}-\hat{\pmb{a}}_{i})}{\partial \pmb{\delta}^{\prime}}
=-Z_{it}^{\prime}+[T\Sigma_{\epsilon\epsilon}^{-1}+\Lambda_{\alpha\alpha}^{-1}]^{-1}\Sigma_{\epsilon\epsilon}^{-1}Z_{it}^{\prime},
\end{align}
\begin{align}\nonumber
\frac{\partial \hat{\pmb{\alpha}}_{i}}{\partial \textrm{vec}(\Lambda_{\alpha\alpha})^{\prime}}
=-\biggr(\biggr(\sum_{t=1}^{T}\pmb{\upsilon}^{\prime}_{t}\biggr)\otimes I_{m}\biggr)\biggr[\biggr(\Sigma_{\epsilon\epsilon}^{-1}\otimes I_{m}\biggr)\biggr(\Sigma^{\prime}\otimes\Sigma\biggr)\biggr]I_{mm},&
\end{align}
\begin{align}\nonumber
\frac{\partial \hat{\pmb{\alpha}}_{i}}{\partial \textrm{vec}(\Sigma_{\epsilon\epsilon})^{\prime}}
=-\biggr(\biggr(\sum_{t=1}^{T}\pmb{\upsilon}^{\prime}_{t}\biggr)\otimes I_{m}\biggr)\biggr[\biggr( I_{m}\otimes\Sigma\biggr)\biggr(\Sigma_{\epsilon\epsilon}^{-1}\otimes \Sigma_{\epsilon\epsilon}^{-1}\biggr)+\biggr(\Sigma_{\epsilon\epsilon}^{-1}\otimes I_{m}\biggr)\biggr(\Sigma^{\prime}\otimes\Sigma\biggr)\biggr]TI_{mm},&
\end{align}
\begin{align}\nonumber
\frac{\partial \hat{\pmb{\epsilon}}_{it} }{\partial\textrm{vec}(\Lambda_{\alpha\alpha})^{\prime}}
=\frac{-\partial \hat{\pmb{\alpha}}_{i} }{\partial\textrm{vec}(\Lambda_{\alpha\alpha})^{\prime}}, \textrm{   and   } \frac{\partial \hat{\pmb{\epsilon}}_{it} }{\partial\textrm{vec}(\Sigma_{\epsilon\epsilon})^{\prime}}
=\frac{-\partial \hat{\pmb{\alpha}}_{i} }{\partial\textrm{vec}(\Sigma_{\epsilon\epsilon})^{\prime}},
\end{align}
where $\mathbb{O}_{Zi}=\textrm{diag}((0^{\prime}_{z},\bar{\pmb{z}}^{\prime}_{i})^{\prime}, \ldots,
(0^{\prime}_{z},\bar{\pmb{z}}^{\prime}_{i})^{\prime})$, $0_{z}$ denoting a vector of zeros with dimension that of $\pmb{z}_{it}$, $\pmb{\upsilon}_{t} = \pmb{x}_{t} - \pi \pmb{z}_{t}$, and $ \Sigma =[T\Sigma_{\epsilon\epsilon}^{-1}+\Lambda_{\alpha\alpha}^{-1}]^{-1} $. 

Since $H_{i\Theta_{2}\Theta_{1}}(\hat{\Theta}_{1}, \hat{\Theta}_{2})$ and  $H_{i\Theta_{2}\Theta_{2}}(\hat{\Theta}_{1}, \hat{\Theta}_{2})$ converge almost surly to $H_{i\Theta_{2}\Theta_{1}}(\Theta_{1}^{*}, \Theta_{2}^{*})$ and $H_{i\Theta_{2}\Theta_{2}}(\Theta_{1}^{*}, \Theta_{2}^{*})$ respectively, by the weak LLN $\frac{1}{N}\sum_{i=1}^{N} H_{i\Theta_{2}\Theta_{1}}(\hat{\Theta}_{1}, \hat{\Theta}_{2})$ will converge in probability to $\E(H_{i\Theta_{2}\Theta_{1}}(\Theta_{1}^{*}, \Theta_{2}^{*}))=\mathbb{H}_{\Theta_{2}\Theta_{1}}$ and $\frac{1}{N}\sum_{i=1}^{N} H_{i\Theta_{2}\Theta_{2}}(\hat{\Theta}_{1}, \hat{\Theta}_{2})$ will converge in probability to $\E(H_{i\Theta_{2}\Theta_{2}}(\Theta_{1}^{*}, \Theta_{2}^{*}))=\mathbb{H}_{\Theta_{2}\Theta_{2}}$.

\subsection{Hypothesis Testing of Average Partial Effects }

In section 2 of the paper we discussed the estimation of average structural function (ASF) and average partial effect (APE) of a variable $w$. If the support condition in lemma 3 for the point identification of ASF and the APEs is met, then if $ \eta_{t} $ in (2.9) of the main text is assumed to follow a normal distribution, the estimated APE of $w$ on the probability of $y_{it} = 1$ given $\pmb{x}_{it}=\bar{\pmb{x}}$ is given by
\begin{align}\nonumber
&\frac{\partial\widehat{\Pr}(y_{it} = 1|\bar{\pmb{x}} )}{\partial w}=\frac{1}{NT}\sum^{N}_{i=1}\sum_{t=1}^{T}\hat{\varphi}_{w}\phi\biggr(\bar{\mathbb{X}}^{\prime}_{it}\hat{\Theta}_{2}\biggr)\equiv \frac{1}{NT}\sum^{N}_{i=1}\sum_{t=1}^{T}g_{wit}(\hat{\Theta}_{2}),&
\end{align}
where $\bar{\mathbb{X}}_{it}=(\bar{\pmb{x}}^{\prime},\hat{\hat{\pmb{\alpha}}}_{i}(\hat{\Theta}_{1})^{\prime}, \hat{\hat{\pmb{\epsilon}}}_{it}(\hat{\Theta}_{1})^{\prime} )^{\prime}$ and $\hat{\Theta}_{2}=(\hat{\pmb{\varphi}}^{\prime}, \hat{\pmb{\rho}}_{\alpha}, \hat{\pmb{\rho}}_{\epsilon})^{\prime}$. 

To test various hypothesis in order to draw inferences about the APE's we need to compute the standard errors of their estimates.  Now, we know that by the linear approximation approach (delta method), the asymptotic variance of $\frac{\partial\widehat{\Pr}(y_{it} = 1|\bar{\pmb{x}} )}{\partial w}$ can be estimated by computing
\begin{align}\label{aeq:5}
\biggr[\frac{1}{NT}\sum^{N}_{i=1}\sum_{t=1}^{T}\frac{\partial g_{wit}(\hat{\Theta}_{2})}
{\partial\hat{\Theta}_{2}^{\prime}}\biggr]\hat{V}_{2}^{*}
\biggr[\frac{1}{NT}\sum^{N}_{i=1}\sum_{t=1}^{T}\frac{\partial g_{wit}(\hat{\Theta}_{2})}
{\partial\hat{\Theta}_{2}^{\prime}}\biggr]^{\prime},
\end{align}
where $\hat{V}_{2}^{*}$, the second stage error adjusted covariance matrix of $\Theta_{2}$ estimated at $\hat{\Theta}_{2}$, is given in (\ref{eq:aeq33}). $\frac{\partial g_{wit}(\hat{\Theta}_{2})}{\partial\hat{\Theta}_{2}^{\prime}}$ in (\ref{aeq:5}) turns out to be
\begin{align}\nonumber
\frac{\partial g_{wit}(\hat{\Theta}_{2})}{\partial\hat{\Theta}_{2}^{\prime}}=\phi(\bar{\mathbb{X}}^{\prime}_{it}
\hat{\Theta}_{2})[e_{w} -\hat{\varphi}_{w}(\bar{\mathbb{X}}^{\prime}_{it}
\hat{\Theta}_{2})\bar{\mathbb{X}}_{it}] 
\end{align}
where $e_{w}$ is a column vector having the dimension of $\Theta_{2}^{\prime}$ and with 1 at the position of $\varphi_{w}$ in $\Theta_{2}$ and zeros elsewhere.

If $w$ is a dummy variable then the estimated APE of $w$  when the APE of $w$ is point identified is given by
\begin{align}\nonumber
\Delta_{w}\Pr(y_{it}=1)&=\frac{1}{NT}\sum^{N}_{i=1}\sum_{t=1}^{T}\Phi(\bar{\pmb{x}}_{-w},w=1, \hat{\hat{\pmb{\alpha}}}_{i},\hat{\hat{\pmb{\epsilon}}}_{it})-\Phi(\bar{\pmb{x}}_{-w},w=0, \hat{\hat{\pmb{\alpha}}}_{i},\hat{\hat{\pmb{\epsilon}}}_{it})&\\\nonumber
&=\frac{1}{NT}\sum^{N}_{i=1}\sum_{t=1}^{T}\Phi_{it}(w=1)-\Phi_{it}(w=0)&\\\nonumber
&=\frac{1}{NT}\sum^{N}_{i=1}\sum_{t=1}^{T}\Delta_{w}\Phi_{it}().&
\end{align}
The asymptotic variance of the above can again by the application of delta method be obtained as
\begin{align}
\biggr[\frac{1}{NT}\sum^{N}_{i=1}\sum_{t=1}^{T}\frac{\partial\Delta\Phi_{it}(.)}{\partial\Theta_{2}^{\prime}}\biggr]\hat{V}_{2}^{*}
\biggr[\frac{1}{NT}\sum^{N}_{i=1}\sum_{t=1}^{T}\frac{\partial\Delta\Phi_{it}(.)}{\partial\Theta_{2}^{\prime}}\biggr]^{\prime},
\end{align}
where
\begin{align}\nonumber
\frac{\partial\Delta\Phi_{it}(.)}{\partial\Theta^{\prime}_{2}}&=\frac{\partial\Phi_{it}(w=1)}{\partial\Theta^{\prime}_{2}}-
\frac{\partial\Phi_{it}(w=0)}{\partial\Theta^{\prime}_{2}}&\\\nonumber&=\phi_{it}(w=1)\begin{bmatrix}\bar{\mathbb{X}}_{it_{-w}} \\1\end{bmatrix}^{\prime}-\phi_{it}(w=0)\begin{bmatrix}\bar{\mathbb{X}}_{it_{-w}} \\0\end{bmatrix}^{\prime}.
\end{align}

When the support condition in lemma 3 for the point identification of ASF and the APEs is not met, we compute the 95\% confidence interval ($ \text{CI}_{95\%} $) as proposed in \citet{imbens:2004} for the partially identified APEs. In section 2 of the main text, we have shown that when support condition in lemma 3 is not met, the the APE of changing  $ x_{k} $ from $ \bar{x}_{k} $ to $ \bar{x}_{k}+\Delta_{k} $, $ \Delta G(\bar{\pmb{x}})/\Delta_{k} $, lies in the interval,
\begin{align}\label{aeq:45}
\biggr[ \Psi_{l} =\frac{\tilde{G}(\bar{\pmb{x}}_{\Delta k}) - \tilde{G}(\bar{\pmb{x}}) - P(\bar{\pmb{x}})}{\Delta_{k}}, \hspace{.5cm} \Psi_{u} = \frac{\tilde{G}(\bar{\pmb{x}}_{\Delta k}) + P(\bar{\pmb{x}}_{\Delta k}) - \tilde{G}(\bar{\pmb{x}})}{\Delta_{k}}  \biggr], 
\end{align}
where  $ \bar{\pmb{x}}_{\Delta k} = (\bar{\pmb{x}}^{\prime}_{-k}, \bar{x}_{k}+\Delta_{k})^{\prime} $.

Let $ \sigma_{l} $ be the standard errors of the estimate of the lower bound of the interval and let $ \sigma_{u} $ be the standard errors of the estimate of the upper bound. To construct the confidence interval for the partially identified APEs, we first show that  $ \sigma_{u} = \sigma_{l} =\bar{\sigma} $. Let $ \widehat{\Psi}_{l} $ be the estimate of the lower bound and let $ \widehat{\Psi}_{u} $ be that of the upper bound. Since the estimates of $ P(.) $ in (\ref{aeq:45}) does not depend on $ \Theta_{2} $, 
\begin{align}\nonumber
\frac{\partial \widehat{\Psi}_{l}}{\partial\Theta^{\prime}_{2}} = \frac{\partial \widehat{\Psi}_{u}}{\partial\Theta^{\prime}_{2}} = \biggr[\frac{\partial \hat{\tilde{G}}(\bar{\pmb{x}}_{\Delta k})}{\partial\Theta^{\prime}_{2}} -  \frac{\partial \hat{\tilde{G}}(\bar{\pmb{x}}) }{\partial\Theta^{\prime}_{2}}\biggr]\frac{1}{\Delta_{k}}. 
\end{align}   
By applying by the delta method, we get 
\begin{align}\label{aeq:46}
\bar{\sigma}^{2}=\frac{1}{\Delta_{k}}\biggr[\frac{\partial \hat{\tilde{G}}(\bar{\pmb{x}}_{\Delta k})}{\partial\Theta^{\prime}_{2}} -  \frac{\partial \hat{\tilde{G}}(\bar{\pmb{x}}) }{\partial\Theta^{\prime}_{2}}\biggr]\hat{V}_{2}^{*}\frac{1}{\Delta_{k}}\biggr[\frac{\partial \hat{\tilde{G}}(\bar{\pmb{x}}_{\Delta k})}{\partial\Theta^{\prime}_{2}} -  \frac{\partial \hat{\tilde{G}}(\bar{\pmb{x}}) }{\partial\Theta^{\prime}_{2}}\biggr]^{\prime},
\end{align}
where in (\ref{aeq:46}) the derivative of $ \hat{\tilde{G}}(.) $ with respect to $\Theta_{2}$  at $ \pmb{x} $ is given by  
\begin{align}\nonumber
\frac{\partial \hat{\tilde{G}}(\pmb{x}) }{\partial\Theta^{\prime}_{2}}=\frac{1}{NT}\sum_{i,t}\phi(\pmb{x}^{\prime}\hat{\pmb{\varphi}}  +\hat{\pmb{\rho}}_{\alpha}\hat{\hat{\pmb{\alpha}}}_{i}
+\hat{\pmb{\rho}}_{\epsilon}\hat{\hat{\pmb{\epsilon}}}_{it})1[(\hat{\hat{\pmb{\alpha}}}_{i},
\hat{\hat{\pmb{\epsilon}}}_{it}) \in \hat{\mathcal{A}}(\pmb{x})](\pmb{x}^{\prime},\hat{\hat{\pmb{\alpha}}}_{i}^{\prime}, \hat{\hat{\pmb{\epsilon}}}_{it}^{\prime} ).
\end{align}

According to lemma 4.1 in \cite{imbens:2004}, the confidence interval 
\begin{align}\nonumber
\text{CI}_{95\%} = \biggr[\Psi_{l} - C_{NT}\frac{\bar{\sigma}}{\sqrt{NT}}, \Psi_{u} + C_{NT}\frac{\bar{\sigma}}{\sqrt{NT}} \biggr],
\end{align}
where $ C_{NT} $ is a solution to 
\begin{align}\nonumber
\Phi\biggr(C_{NT} + \sqrt{NT}\frac{(\Psi_{u} -\Psi_{l})}{\bar{\sigma}} \biggr)-\Phi\biggr(-C_{NT}  \biggr) = 0.95,
\end{align}
achieves a uniform coverage rate of at least 95\%.

\renewcommand{\theequation}{E-\arabic{equation}}
\setcounter{equation}{0}  

\section{Estimation of the Reduced form Equations}
In this section we briefly describe \citet{biorn:2004} step wise maximum likelihood procedure to estimate the reduced form system of equation
\begin{align}
\pmb{x}_{it} = Z^{\prime}_{it}\pmb{\delta} + \pmb{a}_{i} + \pmb{\epsilon}_{it},
\end{align}
where $ Z_{it}= \textrm{diag}((\pmb{z}^{\prime}_{it},\bar{\pmb{z}}_{i}^{\prime})^{\prime}, \ldots, (\pmb{z}^{\prime}_{it},\bar{\pmb{z}}_{i}^{\prime})^{\prime}  ) $ and $ \pmb{\delta} = (\textrm{vec}(\pi)^{\prime},\textrm{vec}(\bar{\pi})^{\prime})^{\prime} $. While \citet{biorn:2004} deals with unbalanced panel, here we assume that our panel is balanced.  Let $N$ be the total number of individuals. Let $\mathcal{N}$ be the total number of observations, i.e., $\mathcal{N}= NT$. Let $\pmb{x}_{i(T)} = (\pmb{x}_{i1}^{\prime},\ldots \pmb{x}_{iT}^{\prime})^{\prime}$, $Z_{i(T)} = (Z_{i1}^{\prime},\ldots Z_{iT}^{\prime})^{\prime}$ and $\pmb{\epsilon}_{i(T)} = (\pmb{\epsilon}_{i1}^{\prime},\ldots \pmb{\epsilon}_{iT}^{\prime})^{\prime}$ and write the model as
\begin{align}\label{eq:e1}
\pmb{x}_{i(T)} = Z^{\prime}_{i(T)}\pmb{\delta} +
(e_{T}\otimes\pmb{a}_{i}) + \pmb{\epsilon}_{i(T)}=Z^{\prime}_{i(T)}\pmb{\delta} + \pmb{u}_{i(T)},
\end{align}
Now,
\begin{align}\nonumber
\E(\pmb{u}_{i(T)}\pmb{u}_{i(T)}^{\prime}) = I_{T}\otimes\Sigma_{\epsilon\epsilon} + E_{T}\otimes\Lambda_{\alpha\alpha}  = K_{T}\otimes\Sigma_{\epsilon\epsilon} + J_{T}\otimes\Sigma_{(T)} =\Omega_{u(T)}
\end{align}
where
\begin{align}\nonumber
\Sigma_{(T)} =  \Sigma_{\epsilon\epsilon}+ T\Lambda_{\alpha\alpha},
\end{align}
where $I_{T}$ is the $T$ dimensional identity matrix, $e_{T}$ is the $(T \times 1)$ vector of ones, $E_{T}=e_{T}e_{T}^{\prime}$,
$J_{T}=(1/T)E_{T}$, and $K_{T}=I_{T}-J_{T}$. The latter two matrices are symmetric and idempotent
and have orthogonal columns, which facilitates inversion of $\Omega_{u(T)}$.

\subsection{GLS estimation}
Before addressing the maximum likelihood problem, we consider the GLS problem for $\pmb{\delta}$ when $\Lambda_{\alpha}$ and $\Sigma_{\epsilon\epsilon}$ are known. Define $Q_{i(T)} = \pmb{u}_{i(T)}^{\prime}\Omega_{u(T)}^{-1}\pmb{u}_{i(T)}$,  then GLS estimation is the problem of minimizing
$Q = \sum_{i=1}^{N}Q_{i(T)}$ with respect to $\pmb{\delta}$.
Since $\Omega_{u(T)}^{-1} = K_{T}\otimes\Sigma_{\epsilon\epsilon}^{-1} + J_{T}\otimes(\Sigma_{\epsilon\epsilon}+ T\Lambda_{\alpha\alpha})^{-1}$, we can rewrite $Q$ as
\begin{align}\nonumber
Q = \sum_{i=1}^{N}\pmb{u}_{i(T)}^{\prime}[K_{T}\otimes\Sigma_{\epsilon\epsilon}^{-1}]\pmb{u}_{i(T)} + \sum_{i=1}^{N}\pmb{u}_{i(T)}^{\prime}[J_{T}\otimes(\Sigma_{\epsilon\epsilon}+ T\Lambda_{\alpha\alpha})^{-1}]\pmb{u}_{i(T)}.
\end{align}
GLS estimator of $\pmb{\delta}$ when $\Lambda_{\alpha\alpha}$ and $\Sigma_{\epsilon\epsilon}$ are known is obtained from $\partial Q /\partial\pmb{\delta}=0$, and is given by
\begin{align}\nonumber
\hat{\pmb{\delta}}_{GLS}=\biggr[\sum_{i=1}^{N}Z_{i(T)}^{\prime}[K_{T}\otimes\Sigma_{\epsilon\epsilon}^{-1}]Z_{i(T)} + \sum_{i=1}^{N}Z_{i(T)}^{\prime}[J_{T}\otimes(\Sigma_{\epsilon\epsilon}+ T\Lambda_{\alpha\alpha})^{-1}]Z_{i(T)}\biggr]^{-1}\times\\\label{eq:e2}\biggr[\sum_{i=1}^{N}Z_{i(T)}^{\prime}[K_{T}\otimes\Sigma_{\epsilon\epsilon}^{-1}]\pmb{x}_{i(T)} + \sum_{i=1}^{N}Z_{i(T)}^{\prime}[J_{T}\otimes(\Sigma_{\epsilon\epsilon}+ T\Lambda_{\alpha\alpha})^{-1}]\pmb{x}_{i(T)}\biggr].
\end{align}

\subsection{Maximum Likelihood Estimation}
Now consider ML estimation of $\pmb{\delta}$, $\Sigma_{\epsilon\epsilon}$, and
$\Lambda_{\alpha\alpha}$. Assuming normality of
the individual effects and the disturbances, i.e., $\pmb{a}_{i}\sim \textrm{IIN}(0,\Lambda_{\alpha\alpha})$ and
$\pmb{\epsilon}_{it} \sim \textrm{IIN}(0,\Sigma_{\epsilon\epsilon})$, then $\pmb{u}_{i(T)}=(e_{T}\otimes\pmb{a}_{i}) + \pmb{\epsilon}_{i(T)}\sim \textrm{IIN}(0_{mT,1}, \Omega_{u(T)})$. The log-likelihood functions of all $\pmb{x}$'s conditional on all $\textit{\textbf{Z}}$'s for an individual and for all individuals in the data set then become, respectively,

\begin{align}
\mathcal{L}_{i}=\frac{-mT}{2}\ln(2\pi) - \frac{1}{2}\ln|\Omega_{u(T)}| - \frac{1}{2}Q_{i(T)}(\pmb{\delta},\Sigma_{\epsilon\epsilon},\Lambda_{\alpha\alpha}),
\end{align}
\begin{align}\label{eq:e3}
\mathcal{L}=\sum_{i=1}^{N}\mathcal{L}_{i}=\frac{-mNT}{2}\ln(2\pi) - \frac{1}{2}N\ln|\Omega_{u(T)}| - \frac{1}{2}\sum_{i=1}^{N}Q_{i(T)}(\pmb{\delta},\Sigma_{\epsilon\epsilon},\Lambda_{\alpha\alpha}),
\end{align}
where
\begin{align}\nonumber
Q_{i(T)}(\pmb{\delta},\Sigma_{\epsilon\epsilon},\Lambda_{\alpha\alpha})=[\pmb{x}_{i(T)}-Z_{i(T)}^{\prime}\pmb{\delta}]^{\prime}[K_{T}\otimes\Sigma_{\epsilon\epsilon}^{-1} + J_{T}\otimes(\Sigma_{\epsilon\epsilon}+ p\Lambda_{\alpha\alpha})^{-1}][\pmb{x}_{i(T)}-Z_{i(T)}^{\prime}\pmb{\delta}],
\end{align}
and $|\Omega_{u(T)}|= |\Sigma_{(T)}||\Sigma_{\epsilon\epsilon}|^{T-1}$.

\citeauthor{biorn:2004} splits the problem  of estimation into: (A) \textit{Maximization of
	$\mathcal{L}$ with respect to $\pmb{\delta}$ for given $\Sigma_{\epsilon\epsilon}$ and $\Lambda_{\alpha\alpha}$} and (B) \textit{Maximization of $\mathcal{L}$ with respect
	to $\Sigma_{\epsilon\epsilon}$ and $\Lambda_{\alpha\alpha}$ for given $\pmb{\delta}$}. \textit{Subproblem} (A) is identical with the GLS problem, since maximization
of $\mathcal{L}$ with respect to $\pmb{\delta}$ for given $\Sigma_{\epsilon\epsilon}$ and $\Lambda_{\alpha\alpha}$ is equivalent to minimization of $\sum_{i}^{N}Q_{i(T)}(\pmb{\delta},\Sigma_{\epsilon\epsilon},\Lambda_{\alpha\alpha})$, which gives (\ref{eq:e2}). To solve \textit{subproblem}(B) \citeauthor{biorn:2004} derives expressions for the derivatives of both $\mathcal{L}_{i}$  and $\mathcal{L}$ with respect to $\Sigma_{\epsilon\epsilon}$ and $\Lambda_{\alpha\alpha}$. The complete stepwise algorithm for solving jointly subproblems (A) and (B) then consists in switching between (\ref{eq:e2}) and minimizing (\ref{eq:e3})  with respect to $\Sigma_{\epsilon\epsilon}$ and $\Lambda_{\alpha\alpha}$ to obtain $\Sigma_{\epsilon\epsilon}$ and $\Lambda_{\alpha\alpha}$ and iterating until convergence.

The first order conditions for the log-likelihood function for an individual $i$ with respect to $ \pmb{\delta} $, $\textrm{vech}(\Sigma_{\epsilon\epsilon})$ and $\textrm{vech}(\Lambda_{\alpha\alpha})$ are:
\begin{align}\nonumber
\frac{\partial \mathcal{L}_{i}}{\partial\pmb{\delta} } = [\pmb{x}_{i(T)} - Z^{\prime}_{i(T)}\pmb{\delta}]^{\prime}[K_{T} \otimes \Sigma^{-1}_{\epsilon\epsilon} + J_{T} \otimes (\Sigma_{\epsilon\epsilon} + p\Sigma_{\alpha\alpha})^{-1}]Z^{\prime}_{i(T)},
\end{align}
\begin{align}\nonumber
\frac{\partial\mathcal{L}_{i}}{\partial \textrm{vech}(\Sigma_{\epsilon\epsilon})}&= -\frac{1}{2}L_{m}\textrm{vec}\biggr[\Sigma_{(T)}^{-1}+(T-1)\Sigma_{\epsilon\epsilon}^{-1} -\Sigma_{(T)}^{-1}B_{ui(T)}\Sigma_{(T)}^{-1}
-\Sigma_{\epsilon\epsilon}^{-1}W_{ui(T)}\Sigma_{\epsilon\epsilon}^{-1}\biggr],&
\end{align}
and
\begin{align}\nonumber
\frac{\partial\mathcal{L}_{i}}{\partial \textrm{vech}(\Lambda_{\alpha\alpha})}&= -\frac{1}{2}L_{m}\textrm{vec}\biggr[T\Sigma_{(T)}^{-1} -T\Sigma_{(T)}^{-1}B_{ui(T)}\Sigma_{(T)}^{-1}\biggr],&
\end{align}
where $\textrm{vech}(\Sigma_{\epsilon\epsilon})$ and $\textrm{vech}(\Lambda_{\alpha\alpha})$ are column-wise vectorization of the lower triangle of the symmetric matrix $\Sigma_{\epsilon\epsilon}$ and $\Lambda_{\alpha\alpha}$, and $L_{m}$ is an elimination matrix. $W_{ui(T)}$ and $B_{ui(T)}$ respectively are defined as follows
\begin{align}\nonumber
W_{ui(T)} =\tilde{E}_{i(T)}K_{T}\tilde{E}_{i(T)}^{\prime}  \textrm{   and   }  B_{ui(T)} =\tilde{E}_{i(T)}J
_{T}\tilde{E}_{i(T)}^{\prime},
\end{align}
where $\tilde{E}_{i(T)} = [\pmb{u}_{i1}, \ldots, \pmb{u}_{iT}]$ is a $(m \times T)$ matrix and $\pmb{u}_{i(T)} = \textrm{vec}(E_{i(T)})$, $\textrm{`vec'}$ being the vectorization
operator. That is, the disturbances defined in (\ref{eq:e1}) for an individual $i$ has been arranged in $(m \times T)$ matrix, $\tilde{E}_{i(T)}$.

The second order conditions are:
\begin{align}\nonumber
\frac{\partial^{2}\mathcal{L}_{i}}{\partial\pmb{\delta}\partial\pmb{\delta}^{\prime}} = -Z_{i(T)}[K_{T} \otimes \Sigma^{-1}_{\epsilon\epsilon} + J_{T} \otimes (\Sigma_{\epsilon\epsilon} + p\Sigma_{\alpha\alpha})^{-1}]Z_{i(T)}^{\prime}
\end{align}
\begin{align}\nonumber
\frac{\partial^{2}\mathcal{L}_{i}}{\partial\pmb{\delta}\partial\textrm{vec}(\Lambda_{\alpha\alpha})^{\prime}} = -T(\pmb{u}_{i(T)}\otimes Z_{i(T)})(I_{T}  K_{m,T} \otimes I_{m})(\textrm{vec}(J_{T} ) \otimes \Sigma^{-1}_{(T)} \otimes \Sigma^{-1}_{(T)})
\end{align}
\begin{align}\nonumber
\frac{\partial^{2}\mathcal{L}_{i}}{\partial\pmb{\delta}\partial\textrm{vec}(\Sigma_{\epsilon\epsilon})^{\prime}} = -(\pmb{u}_{i(T)} \otimes Z_{i(T)})(I_{T} \otimes K_{m,T} \otimes I_{m})(\textrm{vec}(K_{T} ) \otimes \Sigma_{\epsilon\epsilon}^{-1}
\otimes \Sigma_{\epsilon\epsilon}^{-1} + \textrm{vec}(J_{T} ) \otimes \Sigma^{-1}_{(T)} \otimes \Sigma^{-1}_{(T)})
\end{align}
\begin{align}\nonumber
\frac{\partial^{2}\mathcal{L}_{i}}{\partial\textrm{vec}(\Lambda_{\alpha\alpha})\partial\pmb{\delta}^{\prime}}  = -\frac{T}{2}
(\Sigma^{-1}_{(T)} \otimes \Sigma^{-1}_{(T)})[(\tilde{E}_{i(T)}J_{T} \otimes I_{m}) + (I_{m} \otimes \tilde{E}_{i(T)}J_{T} )K_{m,T} ]Z^{\prime}_{i(T)}
\end{align}
\begin{align}\nonumber
\frac{\partial^{2}\mathcal{L}_{i}}{\partial\textrm{vec}(\Lambda_{\alpha\alpha})\partial\textrm{vec}(\Lambda_{\alpha\alpha})^{\prime}} =
\frac{T^{2}}{2}[(\Sigma^{-1}_{(T)} \otimes \Sigma^{-1}_{(T)}) - \Sigma^{-1}_{(T)}B_{ui(T)}\Sigma^{-1}_{(T)} \otimes \Sigma^{-1}_{(T)} - \Sigma^{-1}_{(T)} \otimes \Sigma^{-1}_{(T)}B_{ui(T)}\Sigma^{-1}_{(T)}]
\end{align}

\begin{align}\nonumber
\frac{\partial^{2}\mathcal{L}_{i}}{\partial\textrm{vec}(\Lambda_{\alpha\alpha})\partial\textrm{vec}(\Sigma_{\epsilon\epsilon})^{\prime}} =
\frac{T}{2}
[(\Sigma^{-1}_{(T)} \otimes \Sigma^{-1}_{(T)}) - \Sigma^{-1}_{(T)}B_{ui(T)}\Sigma^{-1}_{(T)} \otimes \Sigma^{-1}_{(T)} - \Sigma^{-1}_{(T)} \otimes\Sigma^{-1}_{(T)} B_{ui(T)}\Sigma^{-1}_{(T)}]
\end{align}

\begin{align}\nonumber
\frac{\partial^{2}\mathcal{L}_{i}}{\partial\textrm{vec}(\Sigma_{\epsilon\epsilon})\partial\pmb{\delta}^{\prime}} =& -
\frac{1}{2}
(\Sigma^{-1}_{(T)} \otimes \Sigma^{-1}_{(T)})[(\tilde{E}_{i(T)}J_{T} \otimes I_{m}) + (I_{m} \otimes \tilde{E}_{i(T)}J_{T} )K_{m,T} ]Z^{\prime}_{i(T)}\\\nonumber
&-\frac{1}{2}(\Sigma^{-1}_{\epsilon\epsilon} \otimes \Sigma^{-1}_{\epsilon\epsilon} )[(\tilde{E}_{i(T)}K_{T} \otimes I_{m}) + (I_{m} \otimes \tilde{E}_{i(T)}K_{T} )K_{m,T} ]Z^{\prime}_{i(T)}
\end{align}

\begin{align}\nonumber
\frac{\partial^{2}\mathcal{L}_{i}}{\partial\textrm{vec}(\Sigma_{\epsilon\epsilon})\partial\textrm{vec}(\Lambda_{\alpha\alpha})^{\prime}}
=
\frac{T}{2}
[(\Sigma^{-1}_{(T)} \otimes \Sigma^{-1}_{(T)}) - \Sigma^{-1}_{(T)}B_{ui(T)}\Sigma^{-1}_{(T)} \otimes \Sigma^{-1}_{(T)} - \Sigma^{-1}_{(T)} \otimes \Sigma^{-1}_{(T)}B_{ui(T)}\Sigma^{-1}_{(T)}]
\end{align}

\begin{align}\nonumber
\frac{\partial^{2}\mathcal{L}_{i}}{\partial\textrm{vec}(\Sigma_{\epsilon\epsilon})\partial\textrm{vec}(\Sigma_{\epsilon\epsilon})^{\prime}} = &
\frac{1}{2}[\Sigma^{-1}_{(T)}\otimes \Sigma^{-1}_{(T)} + (T - 1)\Sigma_{\epsilon\epsilon}\otimes\Sigma_{\epsilon\epsilon} - \Sigma_{\epsilon\epsilon}B_{ui(T)}\Sigma_{(T)} \otimes \Sigma^{-1}_{(T)} \\\nonumber &- \Sigma^{-1}_{(T)} \otimes \Sigma^{-1}_{(T)}B_{ui(T)} - \Sigma^{-1}_{\epsilon\epsilon}W_{ui(T)}\Sigma^{-1}_{\epsilon\epsilon} \otimes \Sigma^{-1}_{\epsilon\epsilon}- \Sigma^{-1}_{\epsilon\epsilon} \otimes \Sigma^{-1}_{\epsilon\epsilon} W_{ui(T)}\Sigma^{-1}_{(T)}].
\end{align}

\end{document}